\newtheorem{theorem}{Theorem}
\newtheorem{corollary}{Corollary}
\newtheorem{lemma}{Lemma}
\newtheorem{proposition}{Proposition}
\newtheorem{definition}{Definition}
\DeclareMathOperator{\Tr}{Tr}
\DeclareMathOperator{\HQNN}{\mathcal{H}_{\mathsf{QNN}}}
\DeclareMathOperator{\HVQE}{\mathcal{H}_{\mathsf{VQE}}}
\DeclareMathOperator{\AQNN}{\mathcal{A}_{\mathsf{QNN}}}
\DeclareMathOperator{\AVQE}{\mathcal{A}_{\mathsf{VQE}}}
\newcommand{\revise}[1]{\textcolor{black}{#1}}
\begin{document}
\title{Efficient measure for the expressivity of variational quantum algorithms}

\author{Yuxuan Du}
\email{duyuxuan123@gmail.com}
\affiliation{JD Explore Academy}
\author{Zhuozhuo Tu}
\email{zhtu3055@uni.sydney.edu.au}
\affiliation{School of Computer Science, The University of Sydney}

\author{Xiao Yuan}
\email{xiaoyuan@pku.edu.cn}
\affiliation{Center on Frontiers of Computing Studies, Department of Computer Science, Peking University, Beijing 100871, China}

\author{Dacheng Tao}
\email{dacheng.tao@gmail.com}
\affiliation{JD Explore Academy}

\begin{abstract}
The superiority of variational quantum algorithms (VQAs) such as quantum neural networks (QNNs) and variational quantum eigen-solvers (VQEs) heavily depends on the expressivity of the employed ans\"atze. Namely, a simple ansatz is insufficient to capture the optimal solution, while an intricate ansatz leads to the hardness of trainability. Despite its fundamental importance, an effective strategy of measuring the expressivity of VQAs remains largely unknown. Here, we exploit an advanced tool in statistical learning theory, i.e., covering number, to study the  expressivity of VQAs. Particularly, we first exhibit how the expressivity of VQAs with an arbitrary ans\"atze is upper bounded by the number of quantum gates and the measurement observable. We next explore the expressivity of VQAs on near-term quantum chips, where the system noise is considered. We observe an exponential decay of the expressivity with increasing circuit depth.  We also utilize the achieved expressivity to analyze the generalization of QNNs and the accuracy of VQE. We numerically verify our theory employing VQAs with different levels of expressivity. Our work opens the avenue for quantitative understanding of the expressivity of VQAs.         
\end{abstract}
   
 \maketitle    
 
\textit{Introduction.}---A paramount mission in quantum computing is devising learning protocols outperforming  classical methods~\cite{biamonte2017quantum,dunjko2018machine,harrow2017quantum}. Variational quantum algorithms (VQAs)~\cite{benedetti2019parameterized,bharti2021noisy,cerezo2020variational2,du2018expressive,endo2021hybrid} using parameterized quantum circuits~---~\textit{ans\"atze} and classical optimizers, serve as promising candidates to achieve this goal, especially in the noisy intermediate-scale quantum (NISQ) era~\cite{preskill2018quantum}.  Theoretical evidence has shown that VQAs may provide runtime speedups and enhanced generalization bounds for quantum information, quantum chemistry, and quantum machine learning (QML) tasks~\cite{Du_2021_grover,du2020learnability,huang2021information,shen2019information,wu2021expressivity}. Meanwhile, VQAs are flexible, which can adapt to restrictions imposed by NISQ devices such as qudits connectivity and shallow circuit depth. With this regard, great efforts have been dedicated to designing VQAs with varied ans\"atze to address different problems. Two important categories of existing VQAs include quantum neural networks (QNNs)~\cite{beer2020training,mitarai2018quantum,havlivcek2019supervised} and variational quantum eigen-solvers (VQEs)   ~\cite{wang2020quantum,peruzzo2014variational,kandala2017hardware}. Empirical studies have shown VQAs on near-term quantum devices achieving good performance for various tasks~\cite{google2020hartree,huang2020experimental,kandala2017hardware,zhu2019training}.                  
	
In parallel to the algorithm design, another central topic in the context of VQAs is exploring their learnability. A well study of this topic does not only allow us to understand the capabilities and limitations of VQAs with varied ans\"atze, but can also guide us to devise more powerful quantum protocols. As such, theoretical studies have attempted to exploit learnability of VQAs from distinct views.   Refs.~\cite{cerezo2020cost,mcclean2018barren,wang2020noise,zhang2020toward} have exhibited that the optimization of VQA suffers from  barren plateaus, where gradients information will be exponentially vanished with respected to the number of qudits and the circuit depth; Refs.~\cite{du2020learnability,sweke2020stochastic} have shown that more measurements, lower noise, and shallower circuit depth contribute to a better convergence  of QNNs with gradient descent optimizers; Refs.~\cite{abbas2020power,banchi2021generalization,bu2021statistical,du2020learnability,huang2021information,caro2020pseudo,funcke2021dimensional} have proven the generalization of QNNs with varied ans\"atze. {Ref.~\cite{poland2020no} has established quantum no-free lunch theorem of QNN and provided an apparently stronger lower bound than its classical counterpart}. {Very recently, Refs.~\cite{holmes2021connecting,sim2019expressibility,nakaji2021expressibility} connect the trainability and expressibility of VQAs, i.e., an ansatz exhibited with higher expressibility implies a flatter  loss landscapes and therefore will be harder to train. Hence, to ensure the power of VQAs, it is indispensable to develop an effective tool to measure the expressibility of VQAs. To this end, prior literature uses the unitary $t$-design to quantify the expressivity of VQAs~\cite{harrow2009random}. However, such a quantity is hard to calculate for a realistic quantum circuit and VQAs with well-designed ans\"atze may not obey the assumptions imposed by the unitary $t$-design~\cite{huggins2019towards,zhang2020toward}}. The above caveats motivate us to rethink: `\textit{Is there any effective and generic way to measure the expressivity of VQAs?}'     

Here, we provide a positive affirmation toward this question. Through connecting the expressivity with the model complexity, we leverage an advanced tool in statistical learning theory~---~covering number~\cite{vapnik2013nature}, to quantify the expressivity of VQAs.  We first exhibit that in the measure of the covering number, the upper bound of the expressivity for a given VQA yields $\mathcal{O}((N_{gt}\|O\|)^{d^{2k}N_{gt}})$, where $d$, $N_{gt}$, $k$, and $\|O\|$ refer to the dimension of \textit{qudit}, the number of trainable quantum gates, the largest number of qudits operated with a single quantum gate, and the operator norm of the observable $O$ used in the employed ans\"atze, respectively. With fixed $d$  and $\|O\|$, the expressivity of VQA can be well controlled by tuning $N_{gt}$ and $k$. Our second contribution is analyzing the expressivity of VQAs under the NISQ setting. When the quantum system noise is simulated by the depolarization channel, the expressivity is upper bounded by $\mathcal{O}((1-p)^{N_g}(N_{gt}\|O\|)^{d^{2k}N_{gt}})$, where $N_g$ is the total number of quantum gates (including both trainable and fixed ones) in the ans\"atze with $N_g\geq N_{gt}$ and $p$ is the depolarization rate. We further harness the derived expressivity to show that the generalization error bound of QNNs scales with $\tilde{\mathcal{O}}(\sqrt{N_{gt}}d^{k}/\sqrt{n})$, where $n$ is the number of training examples. This means that ans\"atze constituted by a large number of quantum gates request an increased number of training examples to ensure convergence. We believe that these observations may be of independent interest for the quantum machine learning community.

\begin{figure} 
\centering
\includegraphics[width=0.4\textwidth]{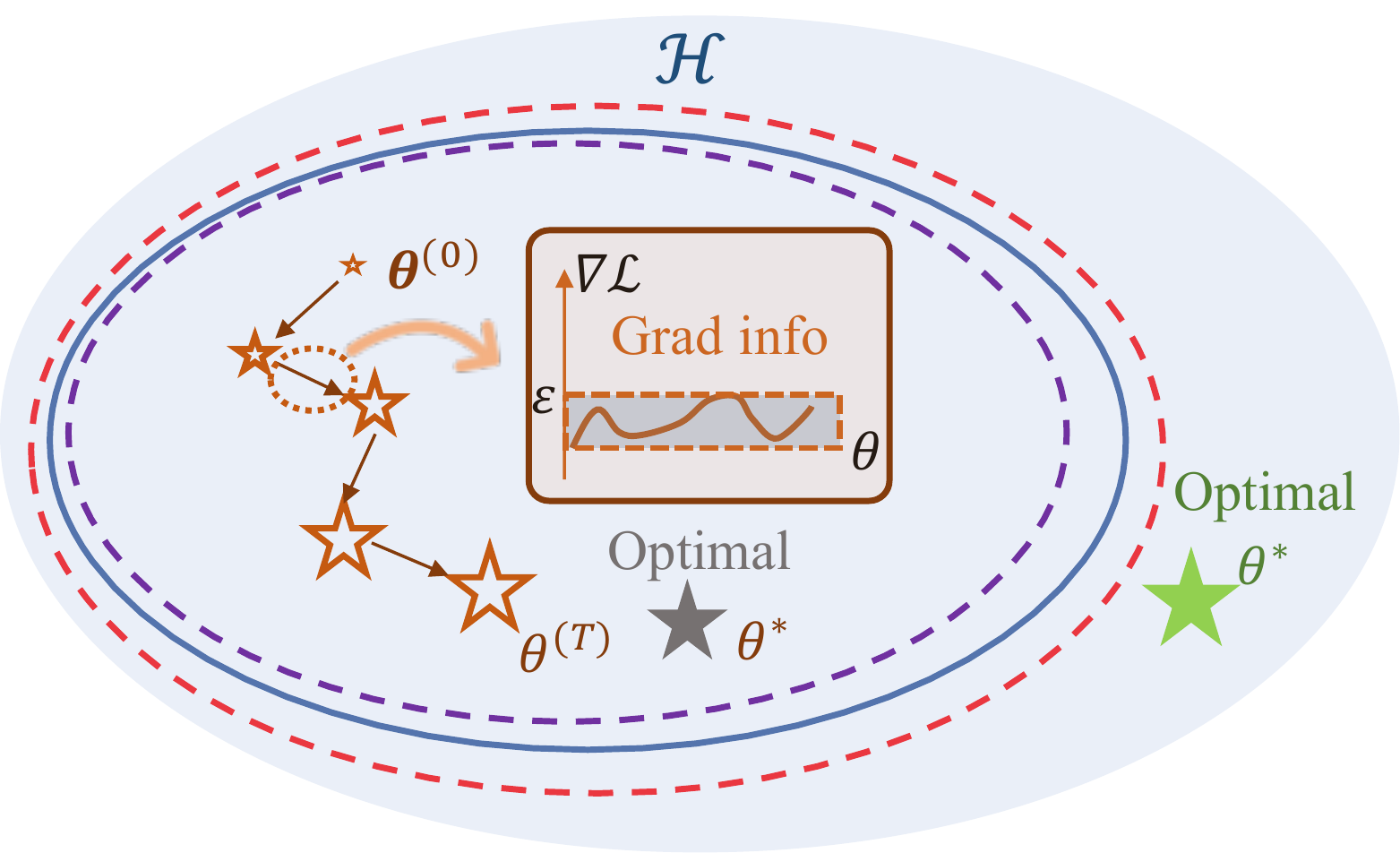}
\caption{\small{\textbf{Overview the expressivity of VQAs}. The expressivity of the employed ans\"atze of VQA rules its hypothesis space $\mathcal{H}$ (solid blue ellipse). When $\mathcal{H}$ has the modest size and covers the target concept (grey solid star), VQA can attain a good  performance. Conversely, when $\mathcal{H}$ fails to cover the target concept (green solid star), due to the limited expressivity, VQA achieves a poor performance. }}
\label{fig:scheme}
\end{figure}

\medskip       
\textit{Expressivity of VQA.}---We first review the working-flow of VQAs, which contains  an $N$-qudits quantum circuit and a classical optimizer. In the training stage, VQA follows an iterative manner to proceed optimization, where the optimizer continuously leverages the output of the quantum circuit to update trainable parameters of the adopted ansatz to minimize the predefined objective function $\mathcal{L}(\cdot)$. At the $t$-th iteration, the updating rule is $\bm{\theta}^{(t+1)}=\bm{\theta}^{(t)} - \eta \frac{\partial \mathcal{L}(h(\bm{\theta}^{(t)},O,\rho), c_1)}{\partial \bm{\theta}}$, where $\eta$ is the learning rate, $c_1\in\mathbb{R}$ is the target label, and $h(\bm{\theta}^{(t)},O,\rho)$ amounts to the output of the quantum circuit as elaborated below. Define $\rho\in\mathbb{C}^{d^N\times d^N}$ as the $N$-qudit input quantum state, $O\in\mathbb{C}^{d^N\times d^N}$ as the quantum observable,  $\hat{U}(\bm{\theta})=\prod_{l=1}^{N_g}\hat{u}_l(\bm{\theta}) \in\mathcal{U}(d^N)$ as the applied ansatz, i.e., $\bm{\theta}\in \Theta$ are trainable parameters living in the parameter space $\Theta$, $\hat{u}_l(\bm{\theta})\in\mathcal{U}(d^k)$ refers to the $l$-th quantum gate operated with at most $k$-qudits with $k\leq N$, and $\mathcal{U}(d^N)$ stands for the unitary group in dimension $d^N$. In general, $\hat{U}(\bm{\theta})$ is formed by $N_{gt}$ trainable gates and $N_g-N_{gt}$ fixed gates, e.g., $\Theta\subseteq [0, 2\pi)^{N_{gt}}$.  Under the above definitions, the explicit form of the output of the quantum circuit under the ideal scenario is 
\begin{equation}\label{eqn:unif-pqcs}
h(\bm{\theta}^{(t)},O,\rho):= 	\Tr\left(\hat{U}(\bm{\theta}^{(t)})^{\dagger}O\hat{U}(\bm{\theta}^{(t)}) \rho \right).
\end{equation}  
The gradients information ${\partial \mathcal{L}(h(\bm{\theta}^{(t)},O,\rho), c_1)}/{\partial \bm{\theta}}$ can be acquired via the parameter shift rule or other methods~\cite{mitarai2018quantum,schuld2019evaluating,stokes2020quantum}. The definition of $h(\bm{\theta}^{(t)},O,\rho)$ is generic. Here the  unitary $\hat{U}(\bm{\theta})$ covers many representative ans\"atze in QML and quantum chemistry, e.g., the hardware-efficient ansatz and unitary coupled-cluster ans\"atze~\cite{cerezo2020variational2}, and QNNs and VQEs can be effectively adapted to the form of $h(\bm{\theta}^{(t)},O,\rho)$ (see  following sections for details).

We now introduce the relationship between the expressivity and model complexity. In essence, the aim of VQAs is to find a good hypothesis $h^*(\bm{\theta},O,\rho)=\arg\min_{h(\bm{\theta},O,\rho) \in \mathcal{H}} \mathcal{L}(h(\bm{\theta},O,\rho) , c_1)$ that can well approximate the target concept, where $\mathcal{H}$ refers to the hypothesis space of VQA with 
\begin{equation}\label{eqn:general-hypo-clas}
	\mathcal{H} = \left\{ \Tr\left(\hat{U}(\bm{\theta})^{\dagger}O\hat{U}(\bm{\theta}) \rho \right)  \Big|  \bm{\theta}\in \Theta \right\}.
\end{equation}
An intuition about how the hypothesis space $\mathcal{H}$ affects the  performance of VQA is depicted in Fig.~\ref{fig:scheme}. When $\mathcal{H}$ has \textit{the modest size} and covers the target concepts, the estimated hypothesis could well approximate the target concept. By contrast,  when the complexity of $\mathcal{H}$ is too low, there exists a large gap between the estimated hypothesis and the target concept. Hence, to understand the expressivity of VQAs, it is highly demanded to devise an effective measure to evaluate the complexity of $\mathcal{H}$.     

Here we employ covering number, an advanced tool broadly used in statistical learning theory~\cite{vapnik2013nature}, {to bound the complexity of $\mathcal{H}$ and measure the expressivity of VQAs.}
\begin{definition}[Covering number]\label{def:cov-num}
The covering number $\mathcal{N}(\mathcal{U}, \epsilon, \|\cdot \|)$ denotes the least cardinality of any subset $\mathcal{V} \subset \mathcal{U}$ that covers $\mathcal{U}$ at scale $\epsilon$ with a norm $ \|\cdot \|$, i.e., $\sup_{A\in \mathcal{U}} \min_{B\in \mathcal{V}} \|A - B \|\leq \epsilon$. {Here we use this notion to measure the expressivity of VQAs.}
\end{definition}
 
The geometric interpretation of covering number is depicted in Fig.~\ref{fig:cov-numb}, which  refers to the minimum number of spherical balls with radius $\epsilon$ that are required to completely cover a given space with possible overlaps. {This notion has been employed to study other crucial topics in quantum physics such as Hamiltonian simulation~\cite{poulin2011quantum} and  entangled states~\cite{szarek2010often}}. {Note that $\epsilon$ is  a predefined hyper-parameter, i.e., a small constant with $\epsilon\in (0,1)$, and is independent with any factor~\cite{vapnik2013nature}. This convention has been  broadly adopted in the regime of machine learning to evaluate the model capacity of various  learning models~\cite{mohri2018foundations,vapnik2013nature}.} The following theorem shows the upper bound of $\mathcal{N}(\mathcal{H},  \epsilon, |\cdot|)$ whose proof is shown in Appendix~A.

\begin{theorem} \label{thm:main-cov-num}
For $0<\epsilon<1/10$, the covering number of the hypothesis space $\mathcal{H}$ in Eq.~(\ref{eqn:general-hypo-clas}) yields 
\begin{equation}\label{eqn:thm-1-key} 
 \mathcal{N}(\mathcal{H},  \epsilon, |\cdot|) \leq  \left(\frac{7 N_{gt}  \|O\| }{\epsilon} \right)^{d^{2k}N_{gt}},  
\end{equation}
where $\|O\|$ denotes the operator norm of $O$.    
\end{theorem}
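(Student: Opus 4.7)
The plan is to reduce covering the scalar hypothesis class $\mathcal{H}$ to covering the unitary factors that correspond to each trainable gate, and then assemble the resulting per-gate nets by a Cartesian-product argument. As a first step I would transport perturbations of the ansatz unitary into perturbations of the hypothesis value. Using cyclicity of the trace, the bound $|\Tr(A\rho)|\le \|A\|\,\|\rho\|_1$ with $\|\rho\|_1=1$, and the identity $\hat U^\dagger O\hat U-\hat U'^\dagger O\hat U' = (\hat U-\hat U')^\dagger O\hat U + \hat U'^\dagger O(\hat U-\hat U')$, one obtains $|h(\bm\theta,O,\rho)-h(\bm\theta',O,\rho)|\le 2\|O\|\,\|\hat U(\bm\theta)-\hat U(\bm\theta')\|$. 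Consequently, any $\delta$-net in operator norm of $\{\hat U(\bm\theta):\bm\theta\in\Theta\}$ induces a $(2\|O\|\delta)$-net of $\mathcal H$ with respect to $|\cdot|$.

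Second, I would telescope over the $N_{gt}$ trainable gates. Writing $\hat U(\bm\theta)=\prod_{l=1}^{N_g}\hat u_l(\bm\theta)$ and noting that the $N_g-N_{gt}$ fixed gates coincide for $\bm\theta$ and $\bm\theta'$, submultiplicativity of the operator norm together with $\|\hat u_l\|=1$ yields $\|\hat U(\bm\theta)-\hat U(\bm\theta')\|\le \sum_{i=1}^{N_{gt}}\|\hat u_{l_i}(\bm\theta)-\hat u_{l_i}(\bm\theta')\|$. Each trainable $\hat u_{l_i}$, once padded by identity on the remaining qudits, lives in $\mathcal U(d^k)$, a compact Lie group of real dimension $d^{2k}$. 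A standard volume estimate in a Hermitian-logarithm chart $U=e^{iH}$ gives the per-gate covering bound $\mathcal N(\mathcal U(d^k),\delta',\|\cdot\|)\le (C/\delta')^{d^{2k}}$ for a small absolute constant $C$, and the Cartesian product of per-gate $\delta'$-nets then covers the reachable unitary set at scale $N_{gt}\delta'$ in the telescoped norm.

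Setting the per-gate scale $\delta'=\epsilon/(2 N_{gt}\|O\|)$ chains the two reductions and produces an $\epsilon$-net of $\mathcal H$ of cardinality at most $(2C N_{gt}\|O\|/\epsilon)^{d^{2k}N_{gt}}$; absorbing $2C\le 7$ then yields Eq.~(\ref{eqn:thm-1-key}), with the hypothesis $\epsilon<1/10$ ensuring that the rounding-up implicit in the integer covering counts is harmless. The principal obstacle is the per-gate inequality with the intrinsic exponent $d^{2k}$ rather than the naive ambient exponent $2d^{2k}$ one would obtain by embedding $\mathcal U(d^k)\subset \mathbb C^{d^k\times d^k}\cong\mathbb R^{2d^{2k}}$. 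Achieving the sharp exponent requires exploiting the Lie-group structure of $\mathcal U(d^k)$~---~for instance via the exponential map from Hermitian generators of real dimension $d^{2k}$ together with the local $1$-Lipschitz behaviour of the matrix exponential, or through a Szarek-style volume comparison against the Haar measure. Tracking the constants through this step so that they land at $2C\le 7$ is the only delicate bookkeeping; the Lipschitz reduction, the telescoping, and the product assembly are otherwise routine.
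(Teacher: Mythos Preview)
Your proposal is correct and follows essentially the paper's route: both cover each trainable gate in $\mathcal{U}(d^k)$ via the Barthel--Lu bound $\mathcal{N}(\mathcal{U}(d^k),\epsilon,\|\cdot\|)\le(7/\epsilon)^{d^{2k}}$, telescope over the $N_{gt}$ gates, and transport to $\mathcal{H}$ by a Lipschitz estimate, the only cosmetic difference being that the paper intermediates through the conjugated-observable set $\mathcal{H}_{circ}=\{\hat U(\bm\theta)^\dagger O\hat U(\bm\theta)\}$ rather than the unitary set itself. One caveat on your final constant-chasing: the paper reaches the base $7$ because its Lipschitz step asserts $\|\hat U^\dagger O\hat U-\hat U_\epsilon^\dagger O\hat U_\epsilon\|\le\|O\|\,\|\hat U-\hat U_\epsilon\|$ \emph{without} your (correct) factor of $2$, so with the quoted constant $C=7$ your honest bookkeeping would land at $14$ rather than $7$ in the base---this is a discrepancy in the paper's own accounting, not a flaw in your argument.
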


It indicates that the most decisive factor, which controls the complexity of $\mathcal{H}$, is the employed quantum gates in $\hat{U}(\bm{\theta})$. This claim is ensured by the fact that the term $d^{2k}N_{gt}$ exponentially scales the complexity $\mathcal{N}(\mathcal{H},  \epsilon, |\cdot|)$. Meanwhile, the qudits count $N$ and the operator norm $\|O\|$ polynomially scale the complexity of $\mathcal{N}(\mathcal{H},  \epsilon, |\cdot|)$. These observations suggest a succinct and direct way to compare the expressivity of VQAs with differed  ans\"atze. {Moreover, different from prior works, we first prove that the expressivity of VQAs depends on the type of quantum gates (denoted by the term $k$). Since it is a long standing problem of proving that the expressivity of VQAs depends on the structure information of ansatz such as the location of different quantum gates and the types of the employed quantum gates, our result makes a concrete progress toward this goal. It is noteworthy that our results do not only indicate a general scaling behavior of the model's expressivity, but also  provide a practical guidance of designing VQA-based models. In Appendix H, we elaborate how to combine the achieved theoretical results with  \textit{structural risk minimization} to enhance the learning performance of VQA-based models~\cite{mohri2018foundations}.}   
 
  \begin{figure}[t]
	\centering
\includegraphics[width=0.25\textwidth]{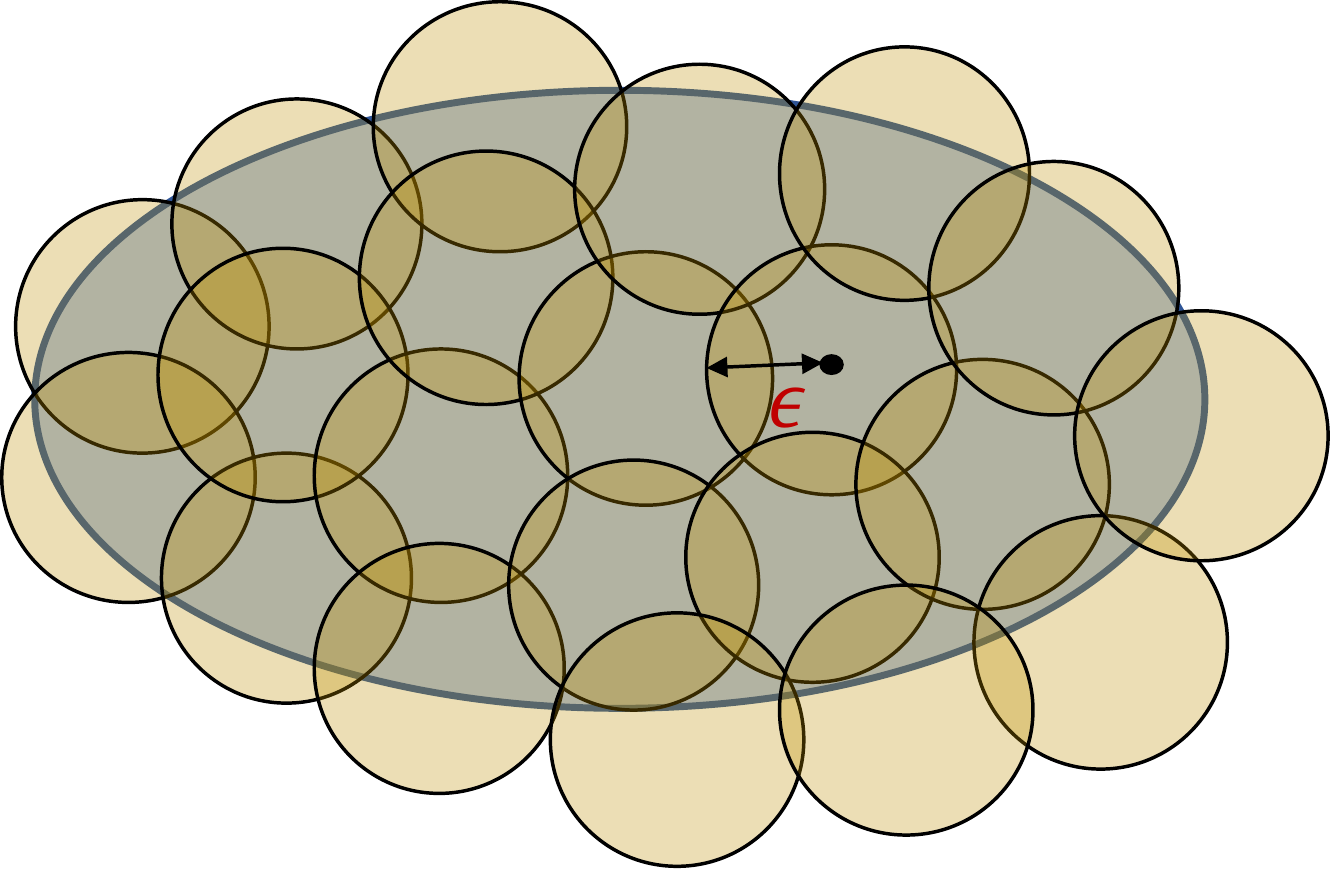}
\caption{\small{\textbf{The geometric intuition of covering number}. Covering number concerns the minimum number of spherical balls with radius $\epsilon$ that occupies the whole space.  }}
\label{fig:cov-numb}
\end{figure}

\textbf{Remark.}  Theorem~\ref{thm:main-cov-num} is \textit{ubiquitous} and \textit{do not rely on the assumption of the unitary $t$-design}, which differs from~\cite{holmes2021connecting}. Moreover, the qubit-based VQAs are a special case of our results with $d=2$. {We also study the tightness of the bound In Appendix I.}  
            
\medskip
We next consider how the expressivity, or equivalently covering number, of VQA varies when noise $\mathcal{E}(\cdot)$ is considered. Under this scenario, the hypothesis space of VQA in Eq.~(\ref{eqn:general-hypo-clas}) transforms to 
\begin{equation}\label{eqn:general-hypo-clas-NISQ}
	\widetilde{\mathcal{H}} = \left\{ \Tr\left( O \mathcal{E} \left(\hat{U}(\bm{\theta}) \rho \hat{U}(\bm{\theta})^{\dagger}\right) \right) \Big|  \bm{\theta}\in \Theta \right\}.
\end{equation}
The expressivity of noisy VQAs is summarized in Proposition \ref{prop:cov-vqa-noise}, whose proof is provided in Appendix B. 
\begin{proposition}\label{prop:cov-vqa-noise}
Following notations and conditions in Theorem \ref{thm:main-cov-num}, the covering number of $\widetilde{\mathcal{H}}$ in Eq.~(\ref{eqn:general-hypo-clas-NISQ}) satisfies $\mathcal{N}(\widetilde{\mathcal{H}},  \epsilon, |\cdot|) \leq  2\|O\|  \left(\frac{7 N_{gt}  }{\epsilon} \right)^{d^{2k}N_{gt}}$. If $\mathcal{E}(\cdot)$ further refers to the depolarization channel  $\mathcal{E}_{p}(\rho) = (1-p)\rho + p\mathbb{I}/d^N$ that is applied to each quantum gate, the covering number of  $\widetilde{\mathcal{H}}$ satisfies  $\mathcal{N}(\widetilde{\mathcal{H}},  \epsilon, |\cdot|) \leq 	(1-p)^{N_g}   \left(\frac{7 N_{gt}  \|O\| }{\epsilon} \right)^{d^{2k}N_{gt}}$. 
\end{proposition}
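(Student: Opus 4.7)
The plan is to reduce both claims in Proposition~\ref{prop:cov-vqa-noise} to applications of Theorem~\ref{thm:main-cov-num}, absorbing the noise either into the observable (general case) or into an overall rescaling of the ideal hypothesis (depolarization case). For the general bound I would move to the Heisenberg picture: for any CPTP channel $\mathcal{E}$ the adjoint $\mathcal{E}^{*}$ satisfies $\Tr(O\,\mathcal{E}(\sigma))=\Tr(\mathcal{E}^{*}(O)\sigma)$, so
\[
\tilde{h}(\bm{\theta})=\Tr\!\bigl(\mathcal{E}^{*}(O)\,\hat{U}(\bm{\theta})\rho\hat{U}(\bm{\theta})^{\dagger}\bigr)
\]
has exactly the form of Eq.~(\ref{eqn:general-hypo-clas}) with the replacement $O\mapsto O'=\mathcal{E}^{*}(O)$. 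Since $\mathcal{E}^{*}$ is unital and completely positive, it contracts the operator norm, so $\|O'\|\leq\|O\|$. Feeding $O'$ directly into Theorem~\ref{thm:main-cov-num} already controls $\mathcal{N}(\widetilde{\mathcal{H}},\epsilon,|\cdot|)$; the $2\|O\|$ prefactor in the claim then comes from combining this estimate with the trivial one-dimensional range bound $|\tilde{h}(\bm{\theta})|\le\|O\|$, i.e.\ covering the image of $\tilde{h}$ as an interval of length at most $2\|O\|$ and retaining whichever estimate yields the stated factorization.

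For the depolarization bound the key ingredient is the identity $\mathcal{E}_{p}(U\sigma U^{\dagger})=(1-p)\,U\sigma U^{\dagger}+p\,\mathbb{I}/d^{N}$ combined with $U\mathbb{I}U^{\dagger}=\mathbb{I}$. A one-line induction over the $N_{g}$ gates yields the closed form
\[
(1-p)^{N_{g}}\,\hat{U}(\bm{\theta})\rho\hat{U}(\bm{\theta})^{\dagger}+\bigl(1-(1-p)^{N_{g}}\bigr)\mathbb{I}/d^{N}
\]
for the final noisy state, so $\tilde{h}(\bm{\theta})=(1-p)^{N_{g}}h(\bm{\theta})+c$ with $c=(1-(1-p)^{N_{g}})\Tr(O)/d^{N}$ independent of $\bm{\theta}$. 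Thus $\widetilde{\mathcal{H}}$ is an affine image of $\mathcal{H}$ whose overall magnitude has contracted by $(1-p)^{N_{g}}$, and an $\epsilon$-cover of $\widetilde{\mathcal{H}}$ follows directly from an appropriate cover of $(1-p)^{N_{g}}\mathcal{H}$; combined with Theorem~\ref{thm:main-cov-num} this reproduces the advertised $(1-p)^{N_{g}}$ prefactor.

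The main obstacle I anticipate is bookkeeping: arranging the contraction so that $(1-p)^{N_{g}}$ (respectively $2\|O\|$) appears as a clean multiplicative prefactor rather than nested inside the $(\cdot)^{d^{2k}N_{gt}}$ exponent. A naive ``scale $\epsilon$'' argument would instead put the noise factor inside the exponent. To land the stated form I would run the Lipschitz-plus-parameter-net construction of Theorem~\ref{thm:main-cov-num} on the \emph{unscaled} class $\mathcal{H}$ and attach the contraction factor only at the very last step, when covering the one-dimensional image of $\tilde{h}$. Verifying that this ordering of the estimates is self-consistent, and that the unital-CP contraction $\|\mathcal{E}^{*}(O)\|\le\|O\|$ is invoked at exactly the right place, is the crux of the argument.
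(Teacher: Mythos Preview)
Your depolarization argument coincides with the paper's: both derive the closed form $(1-p)^{N_g}\hat U\rho\hat U^{\dagger}+(1-(1-p)^{N_g})\mathbb{I}/d^N$ and read off the affine rescaling of the ideal hypothesis. For the general channel, however, your route is genuinely different. The paper stays in the Schr\"odinger picture: it bounds $|\Tr(O\mathcal{E}(\hat U_\epsilon\rho\hat U_\epsilon^\dagger))-\Tr(O\mathcal{E}(\hat U\rho\hat U^\dagger))|$ by Cauchy--Schwarz, then invokes the trace-contractivity of CPTP maps and the observation that the difference of two pure states has rank $\le 2$, obtaining the Lipschitz constant $K=2\|O\|$ with respect to $d_1(\hat U_\epsilon\rho\hat U_\epsilon^\dagger,\hat U\rho\hat U^\dagger)$; it then feeds this into the covering-number lemma for $\{\hat U\rho\hat U^\dagger\}$ (so $\|\rho\|=1$ replaces $\|O\|$ inside the base). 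Your Heisenberg-picture reduction $O\mapsto\mathcal{E}^*(O)$ with $\|\mathcal{E}^*(O)\|\le\|O\|$ is cleaner and lets you invoke Theorem~\ref{thm:main-cov-num} verbatim, yielding $(7N_{gt}\|O\|/\epsilon)^{d^{2k}N_{gt}}$ directly---i.e.\ the \emph{ideal} bound, which is in fact stronger than the stated $2\|O\|(7N_{gt}/\epsilon)^{d^{2k}N_{gt}}$. Your attempt to reconstitute the $2\|O\|$ prefactor via an interval-range bound is unnecessary.

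The bookkeeping obstacle you flag---that a Lipschitz constant $K$ should, via a rescaling $\epsilon\mapsto\epsilon/K$, end up \emph{inside} the base $(\cdot)^{d^{2k}N_{gt}}$ rather than as a clean prefactor---is real, and your proposed fix of ``attaching the contraction factor at the very last step'' does not actually produce that factorization (covering numbers are integers, so a sub-unit multiplicative prefactor on the cardinality is already suspicious). You should be aware that the paper's own proof does not resolve this either: it computes $K=2\|O\|$ and $K=(1-p)^{N_g}$ via Lemma~\ref{lem:cov-num-two-space} and then simply writes the result with $K$ outside the exponent, without justifying that placement. So you have correctly identified a point the paper glosses over; your Heisenberg argument for the general case sidesteps it entirely, and for the depolarization case the honest bound is $(7N_{gt}\|O\|(1-p)^{N_g}/\epsilon)^{d^{2k}N_{gt}}$, which is what both your argument and a literal reading of the paper's Lipschitz computation actually give.
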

Proposition~\ref{prop:cov-vqa-noise} indicates the following insights. First, the expressivity of VQAs under  general system noise setting can not be better than their ideal cases, since for both cases, the term $N_{gt}d^{2k}$ exponentially scales the expressivity of $\widetilde{\mathcal{H}}$. Second,  the upper bounds about the expressivity given in Eq.~(\ref{eqn:thm-1-key}) and  Proposition \ref{prop:cov-vqa-noise} suggest that quantum noise cannot increase the expressivity of VQA compared with its ideal case~\footnote{{We remark that to obtain a general result that covers any type of noise, a relatively loose relaxation technique is used to infer $\mathcal{N}(\tilde{\mathcal{H}},  \epsilon, |\cdot|)$. This leads to a different scaling behavior in term of $\|O\|$ comparing with the ideal and depolarizing cases.}}. Additionally, in the worst scenario where the depolarization noise is considered, the factor $(1-p)^{N_g}$ shrinks the expressivity of $\mathcal{H}$. These insights enables us to compare the expressivity of different VQAs in the NISQ scenario. Meanwhile, the system noise may \textit{forbid us} to devise high-expressive VQAs, due to the term $(1-p)^{N_g}$. Hence, integrating error mitigation techniques with VQAs is  desired~\cite{cai2020mitigating,du2020quantum,mcardle2019error,mcclean2020decoding,strikis2020learning,sun2021mitigating}.    

 \medskip
To better elucidate how  Theorem \ref{thm:main-cov-num} and Proposition \ref{prop:cov-vqa-noise} contribute to concrete quantum learning tasks, in the following, we separately explore the expressivity of QNNs and VQEs, as two crucial subclasses of VQAs.

\textit{Expressivity of quantum neural networks.}---The aim of machine learning is devising an algorithm $\mathcal{A}$ so that given a training dataset $S=\{ (\bm{x}^{(i)}, \bm{y}^{(i)})\}_{i=1}^n$ sampled from the domain $\mathcal{X}\times \mathcal{Y}$, $\mathcal{A}$ can use $S$ to infer a hypothesis $h_{\mathcal{A}(S)}^*(\cdot)$ from its hypothesis space to \textit{minimize} the expected risk $\mathcal  R(\mathcal{A}(S)) = \mathbb{E}_{\bm{x}, \bm{y}}(\ell(h_{\mathcal{A}(S)}(\bm{x}), \bm{y}))$~\cite{kawaguchi2017generalization}, where the randomness is taken over $\mathcal{A}$ and $S$, and $\ell$ refers to the designated loss function. Since the probability distribution behind data space $\mathcal{X}\times \mathcal{Y}$ is generally inaccessible, the minimization of $\mathcal R(\mathcal{A}(S))$ becomes intractable. To tackle this issue, an alternative way of inferring  $h^*(\cdot)$ is minimizing the empirical risk $\hat{\mathcal R}_S(\mathcal{A}(S)) = \frac{1}{n} \sum_{i=1}^n \ell(h_{\mathcal{A}(S)}(\bm{x}^{(i)}), \bm{y}^{(i)})$. 

When QNN is employed to implement $\mathcal{A}$ (as denoted by $\AQNN$) to minimize $\hat{\mathcal R}_S(\mathcal{A}(S)) $, its paradigm can be cast into Eq.~(\ref{eqn:unif-pqcs}). Given the classical example $\bm{x}^{(i)}$, QNN first  prepares an input quantum state $\rho_{\bm{x}^{(i)}}\in \mathbb{C}^{2^N\times 2^N}$ that loads $\bm{x}^{(i)}$ adopting various encoding methods~\cite{benedetti2019parameterized,larose2020robust}. Once the state $\rho_{\bm{x}^{(i)}}$ is prepared, the ansatz $\hat{U}(\bm{\theta}^{(t)})$ is applied to this state, followed by a predefined  quantum measurement $O$. To this end, the explicit form of a hypothesis for QNN is
\begin{equation}\label{eqn:hypothsis-QNN}
	h_{\AQNN(S)}(\bm{x}^{(i)}) = \Tr\left(\hat{U}(\bm{\theta})^{\dagger}O \hat{U}(\bm{\theta}) \rho_{\bm{x}^{(i)}} \right),
\end{equation}  
where $\AQNN(S)=\bm{\theta}\in 
 \Theta$ represents the updated parameters. Since the parameter space $\Theta$ is bounded, the hypothesis space of QNN follows 
\begin{equation}\label{eqn:hypo-space-QNN}
	\HQNN = \left\{h_{\AQNN(S)}(\cdot) \Big| \bm{\theta}\in \Theta \right\}. 
\end{equation}        

The explicit form of $\HQNN$ allows us to directly make use of Theorem \ref{thm:main-cov-num} and Proposition \ref{prop:cov-vqa-noise} to analyze the expressivity of various QNNs. To facilitate understanding, in Appendix D,  we analyze the expressivity of QNNs with typical ans\"atze such as hardware-efficient and tensor-network based ans\"atze.

Here we also explore the \textit{generalization error} of QNNs, an important concept in quantum learning theory, which    explains that when and how minimizing $\hat{\mathcal R}_S(\mathcal A(S))$ is a sensible approach to minimizing $\mathcal  R(\mathcal A(S))$ by analyzing the upper bound of $\mathcal{R}(\mathcal{A}(S))-\hat{\mathcal{R}}_S(\mathcal{A}(S))$. The generalization bound can be effectively derived when the complexity of hypothesis space is accessible~\cite{mohri2012foundations}. Hence, we use Theorem \ref{thm:main-cov-num} to obtain the following claim whose proof is given in Appendix C.   
\begin{theorem}\label{thm:generalization-QNN}
Assume the loss $\ell$ is $L_1$-Lipschitz and upper bounded by $C_1$. Following notations in Eq.~(\ref{eqn:hypo-space-QNN}), for $0<\epsilon<1/10$, with probability at least $1-\delta$ with $\delta\in (0, 1)$, 
\[
	 \mathcal{R}(\mathcal{A}(S)) - \hat{\mathcal{R}}_S(\mathcal{A}(S))  \leq  \tilde{\mathcal{O}}\bigg(\frac{8L_1+ C_1 + 24L_1d^{k}\sqrt{N_{gt}}}{\sqrt{n}}\bigg).       
\]
\end{theorem}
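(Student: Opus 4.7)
The plan is to combine Theorem~\ref{thm:main-cov-num} with three standard ingredients from statistical learning theory: a uniform convergence inequality driven by Rademacher complexity, Talagrand's contraction principle, and Dudley's entropy integral. This route is the natural one because Theorem~\ref{thm:main-cov-num} already supplies the raw covering number estimate, so the task reduces to feeding that bound through well-known machinery.

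First, since the loss is bounded by $C_1$, I would invoke the classical McDiarmid-based uniform convergence statement: with probability at least $1-\delta$,
\begin{equation*}
\mathcal{R}(\mathcal{A}(S)) - \hat{\mathcal{R}}_S(\mathcal{A}(S)) \leq 2\mathfrak{R}_n(\ell\circ\HQNN) + 3C_1 \sqrt{\frac{\log(2/\delta)}{2n}},
\end{equation*}
where $\mathfrak{R}_n$ denotes the empirical Rademacher complexity. Then Talagrand's contraction inequality together with the $L_1$-Lipschitz assumption on $\ell$ yields $\mathfrak{R}_n(\ell\circ\HQNN) \leq L_1\,\mathfrak{R}_n(\HQNN)$.

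Second, I would reduce $\mathfrak{R}_n(\HQNN)$ to a covering-number quantity via Dudley's entropy integral. Every $h\in\HQNN$ has $|h|\leq \|O\|$, so the class has effective diameter at most $2\|O\|$; moreover, the $|\cdot|$-cover produced by Theorem~\ref{thm:main-cov-num} is a worst-case (input-independent) $L_\infty$ cover and therefore dominates the sample-dependent $L_2$ covering number for \emph{any} dataset. Dudley's theorem then gives
\begin{equation*}
\mathfrak{R}_n(\HQNN) \leq \inf_{\alpha>0}\left\{4\alpha + \frac{12}{\sqrt{n}}\int_\alpha^{\|O\|}\sqrt{\log\mathcal{N}(\HQNN,\epsilon,|\cdot|)}\,d\epsilon\right\}.
\end{equation*}
Substituting Theorem~\ref{thm:main-cov-num} factors the integrand as $\sqrt{d^{2k}N_{gt}}\cdot\sqrt{\log(7N_{gt}\|O\|/\epsilon)}$; choosing $\alpha=1/\sqrt{n}$ contributes the additive $4/\sqrt{n}$ piece (which, after the factor of $2$ from the Rademacher bound and the factor $L_1$ from contraction, becomes $8L_1/\sqrt{n}$), while the integral evaluates to $d^k\sqrt{N_{gt}}$ times a $\mathrm{polylog}$ factor in $N_{gt},\|O\|,n,1/\delta$ that is absorbed into the $\tilde{\mathcal{O}}$ notation. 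Multiplying by $2L_1$ recovers the $24L_1 d^k\sqrt{N_{gt}}/\sqrt{n}$ term, and the concentration contribution $3C_1\sqrt{\log(2/\delta)/(2n)}$ supplies the $C_1/\sqrt{n}$ piece.

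The most delicate step is the Dudley evaluation. One has to verify carefully that (i) the $|\cdot|$-cover in Theorem~\ref{thm:main-cov-num}, being worst-case over inputs, legitimately upper bounds the empirical $L_2$ covering number that Dudley's theorem actually requires, and (ii) the $\sqrt{\log(1/\epsilon)}$ factor appearing in the integrand is integrable down to $\alpha=1/\sqrt{n}$, so that the low-end cutoff only costs logarithmic factors which disappear into $\tilde{\mathcal{O}}$. The remaining arithmetic, namely tracking the numerical constants $4$, $12$, $2$, $3$ and merging the concentration term with the complexity term, is a routine textbook computation.
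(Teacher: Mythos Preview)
Your proposal is correct and follows essentially the same route as the paper: a Rademacher-based uniform convergence bound (the paper cites Kakade--Sridharan--Tewari, which already incorporates the Lipschitz contraction you attribute separately to Talagrand), Dudley's entropy integral, insertion of the Theorem~\ref{thm:main-cov-num} covering estimate, and the choice $\alpha=1/\sqrt{n}$. The only cosmetic differences are that the paper takes the Dudley upper limit to be $1$ rather than $\|O\|$, and it handles the passage from the $|\cdot|$-cover to the empirical $\ell_2$-cover by an explicit $\epsilon\mapsto\epsilon/\sqrt{n}$ rescaling (producing an extra $\sqrt{n}$ inside the logarithm) rather than by your $L_\infty$-dominates-$L_2$ observation; both lead to the same $\tilde{\mathcal{O}}$ bound.
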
   
{The employed assumption is very mild, since the loss functions adopted in QNNs are generally Lipschitz continuous and can be upper bounded by a constant $C_1$. This property has been broadly employed to understand the capability of QNNs~\cite{abbas2020power,du2020learnability,huang2021information,marrero2020entanglement,mcclean2018barren,sweke2019stochastic}.} The achieved results provide three-fold implications. First, the generalization bound has an exponential dependence with the term $k$ and the \textit{sublinear} dependence with the number of trainable quantum gates $N_{gt}$. This observation reveals an Occam’s razor principle in the quantum version~\cite{blumer1987occam}, where parsimony of the output hypothesis implies predictive power. Second, increasing the size of training examples $n$ contributes to an improved generalization bound. This outcome requests us to involve more training data to optimize intricate ans\"atze. Last, the sublinear dependence of $N_{gt}$ may limit our result to accurately assess the generalization ability for the over-parameterized QNNs  ~\cite{larocca2021theory}. A future work is to integrate our results with deep learning theory that focuses on over-parameterized model to derive a tighten bound ~\cite{canatar2021spectral}. All of these implications can be employed as guidance to design powerful QNNs.        

\begin{figure}
\centering
\includegraphics[width=0.48\textwidth]{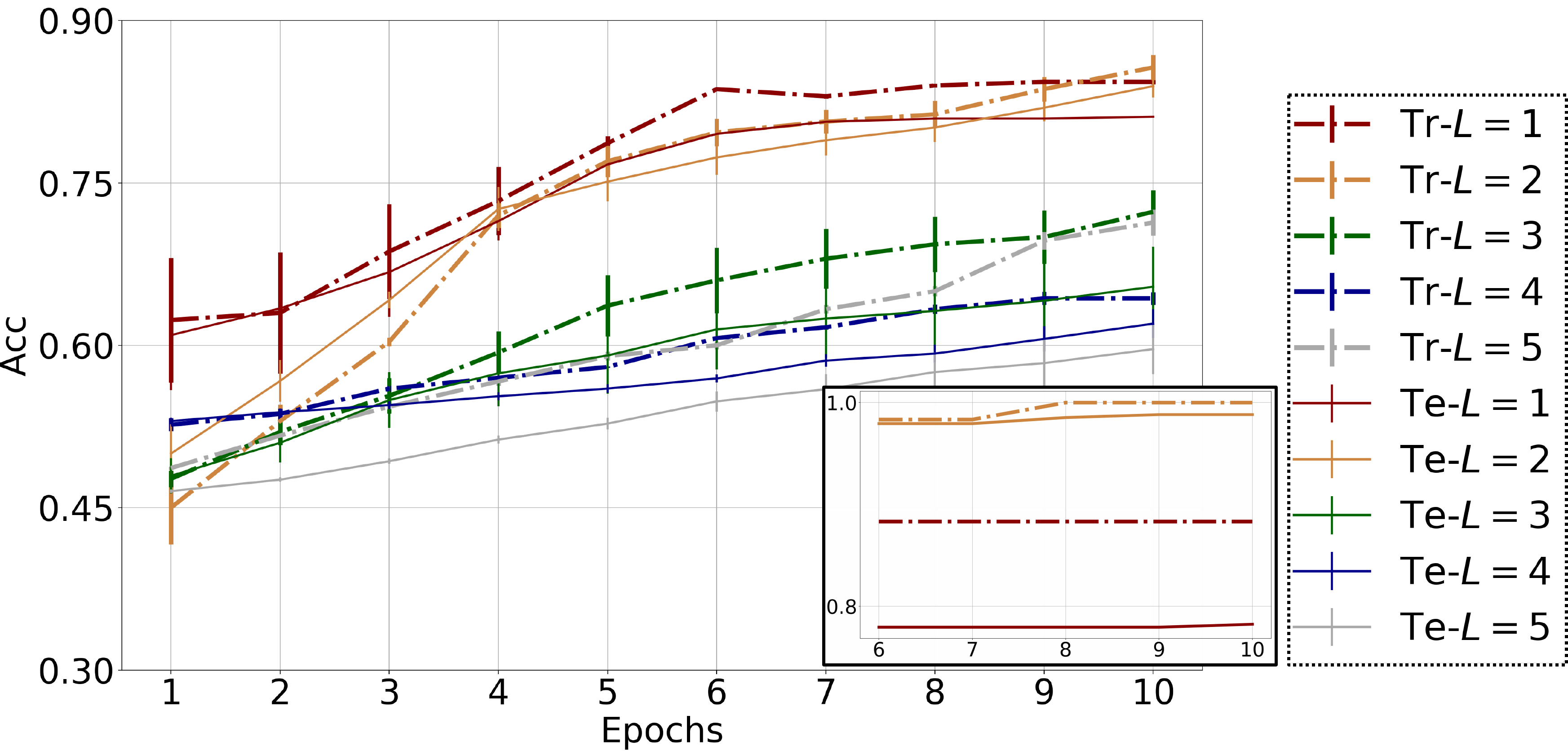}	
\caption{\small{\textbf{Simulation results of QNN with the varied layer number $L$.} The label Tr-$L=a$ (or Te-$L=a$) refers to the train (or test) accuracy of QNN with layer number $L=a$. The outer (or inner) plot shows the statistical (top-1) results of QNNs with {$L=\{1,2,...5\}$ ($L=\{1,2\}$)}. Vertical bars   refer to the variance of obtained results.}}  
\label{fig:sim-QNN}
\end{figure}   

We conduct numerical simulations to validate our theoretical results. Specifically, we apply QNNs to accomplish the binary classification task on the synthetic dataset $S$. The construction of $S$ follows~\cite{havlivcek2019supervised}, where the dataset consists of $400$ examples, and   the feature dimension of $\bm{x}^{(i)}$ is $7$ and the corresponding label $y^{(i)}\in\{0,1\}$ is binary~$\forall i\in [400]$.  At the data preprocessing stage,  $S$ is divided into the training set and the test set with size $60$ and $340$, respectively. The implementation of QNN is as follows. The qubit encoding method is used to load $\rho_{\bm{x}^{(i)}}$. The the layer number $L$ of $\hat{U}(\bm{\theta})=\prod_{l=1}^L U(\bm{\theta}^{l})$ is varied from $1$ to $5$. Notably, when $L\geq 2$, the target concept is contained in $\HQNN$.  We repeat each setting with $5$ times to gain statistical results.  See Appendix E for construction details.  

The simulation results are exhibited in Fig.~\ref{fig:sim-QNN}. Although $\HQNN$ with the layer number $L\in\{2,3,4,5\}$ covers the target concept, the trainability, as reflected by the training accuracy in the outer plot, becomes deteriorating with respect to increased $L$. This result echoes with Theorem \ref{thm:main-cov-num} in the sense that high expressivity implies poor trainability. Moreover, the discrepancy between train and test accuracy of QNN becomes large, especially for $L=5$. This result accords with Proposition \ref{prop:cov-vqa-noise} such that higher expressivity results in  larger generalization error. Eventually, in conjunction with the inner and outer plots with $L=1$, we conclude that when the expressivity of $\HQNN$ is too small, which excludes the target concept, the training of QNN is stable but with a high empirical risk. The performance of QNNs in the NISQ case is deferred to Appendix E.

\textit{Expressivity of variational quantum eigen-solvers.}---A central task in quantum chemistry is designing an efficient algorithm to estimate low-lying eigenstates and corresponding eigenvalues of an input Hamiltonian~\cite{mcardle2020quantum}. Variational quantum eigen-solvers (VQEs), denoted by $\AVQE$, are the most popular protocols to reach this goal in the NISQ era~\cite{peruzzo2014variational}, owing to  their capability and flexibility. The training of VQE also adopts the iterative manner and each iteration includes two steps. Initially, VQE applies an ansatz $U(\bm{\theta})=\prod_{l=1}^LU_l(\bm{\theta})$ to a fixed $N$-qubit quantum state $\rho_0=(\ket{0}\bra{0})^{\otimes N}$, followed by measuring the Hamiltonian $H$ to collect the classical outputs. Then, the classical optimizer utilizes the output information to update $\bm{\theta}$ via gradient descent method to minimize $\Tr(H U(\bm{\theta})\rho_0 U(\bm{\theta})^{\dagger})$. The hypothesis space of VQE can be exactly formulated by Eq.~(\ref{eqn:general-hypo-clas}), i.e., 
\begin{equation}\label{eqn:hypo-space-VQE}\nonumber
\HVQE = \{h_{\AVQE(H)}(\rho_0)  :=\Tr(H  U(\bm{\theta})\rho_0  U(\bm{\theta})^{\dagger}) | \bm{\theta}\in \Theta \}. 
\end{equation}

The form of $\HVQE$ enables us to efficiently measure the expressivity of an arbitrary ansatze used in VQEs by using Theorem \ref{thm:main-cov-num} and Proposition \ref{prop:cov-vqa-noise}. For concreteness, we quantify the expressivity of unitary coupled-cluster ans\"atze truncated up to single and double excitations (UCCSD)~\cite{cao2019quantum} whose proof is given in Appendix F.             
 
\begin{corollary}\label{coro:VQE} 
	Under the ideal setting, the covering number of VQE with UCCSD is upper bounded by $\mathcal{N}(\HVQE,  \epsilon, |\cdot|) \leq  \mathcal{O}(\frac{7N^5 \|H\|}{\epsilon} )^{d^{2k}N^5})$. When the system noise is considered and simulated by the depolarization channel, the corresponding covering number is upper bounded by $\mathcal{N}(\widetilde{\HVQE},  \epsilon, |\cdot|) \leq  \mathcal{O}((1-p)^{N^5}(\frac{7N^5 \|H\|}{\epsilon} )^{d^{2k}N^5})$.
\end{corollary}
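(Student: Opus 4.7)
The plan is to reduce Corollary~\ref{coro:VQE} to a direct substitution into Theorem~\ref{thm:main-cov-num} and Proposition~\ref{prop:cov-vqa-noise}. The hypothesis space $\HVQE$ is already of the form in Eq.~(\ref{eqn:general-hypo-clas}) with $O:=H$ and $\rho:=\rho_0$, so the only ansatz-specific quantities that enter the two master bounds are the number of trainable gates $N_{gt}$ and, in the noisy case, the total gate count $N_g$ in a concrete compilation of UCCSD.

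The key step is therefore the gate count. I would fix a canonical compilation: first-order Trotterization of $U(\bm{\theta})=\exp\!\bigl(T(\bm{\theta})-T(\bm{\theta})^\dagger\bigr)$ with $T=T_1+T_2$, followed by a Jordan--Wigner map from fermions to qubits. The singles $T_1=\sum_{i,a} t_i^a\, a_a^\dagger a_i$ contribute $\mathcal{O}(N^2)$ fermionic amplitudes and the doubles $T_2=\sum_{i,j,a,b} t_{ij}^{ab}\, a_a^\dagger a_b^\dagger a_i a_j$ contribute $\mathcal{O}(N^4)$. Under Jordan--Wigner each excitation expands into a constant number of Pauli-string exponentials, and each such exponential compiles into a parametrized single-qubit rotation flanked by an $\mathcal{O}(N)$-long CNOT staircase. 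The dominant $T_2$ block therefore produces $\mathcal{O}(N^4)\cdot\mathcal{O}(N)=\mathcal{O}(N^5)$ parametrized rotations; the singles add only $\mathcal{O}(N^3)$, and the CNOT overhead stays proportional to the rotation count. Hence both $N_{gt}$ and $N_g$ are $\mathcal{O}(N^5)$.

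Plugging these counts into Theorem~\ref{thm:main-cov-num} with $\|O\|=\|H\|$ yields the ideal bound
\[
\mathcal{N}(\HVQE,\epsilon,|\cdot|)\leq \left(\frac{7N^5\|H\|}{\epsilon}\right)^{d^{2k}N^5},
\]
and plugging them into the depolarization half of Proposition~\ref{prop:cov-vqa-noise} appends the prefactor $(1-p)^{N^5}$; both match Corollary~\ref{coro:VQE} after absorbing irrelevant constants into $\mathcal{O}(\cdot)$. The main obstacle is not mathematical but bookkeeping: the $N^5$ scaling is sensitive to compilation choices (Trotter order, fermion-to-qubit map, whether rotation angles coming from a common fermionic amplitude are tied together, spin- or symmetry-adaptation, and so on). I would therefore state the canonical compilation above explicitly and note that any standard alternative only shifts the constants hidden in $\mathcal{O}(\cdot)$, leaving the exponent $d^{2k}N^5$ intact.
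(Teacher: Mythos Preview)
Your proposal is correct and matches the paper's own proof: the paper simply cites the literature for the $N_g\sim\mathcal{O}(N^5)$ gate count of UCCSD under Jordan--Wigner or Bravyi--Kitaev encodings and then substitutes into Theorem~\ref{thm:main-cov-num} and Proposition~\ref{prop:cov-vqa-noise}. Your explicit derivation of the $\mathcal{O}(N^5)$ scaling via Trotterized doubles and the CNOT staircase is more detailed than what the paper provides, but the underlying argument is identical.
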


\begin{figure}
	\centering
\includegraphics[width=0.45\textwidth]{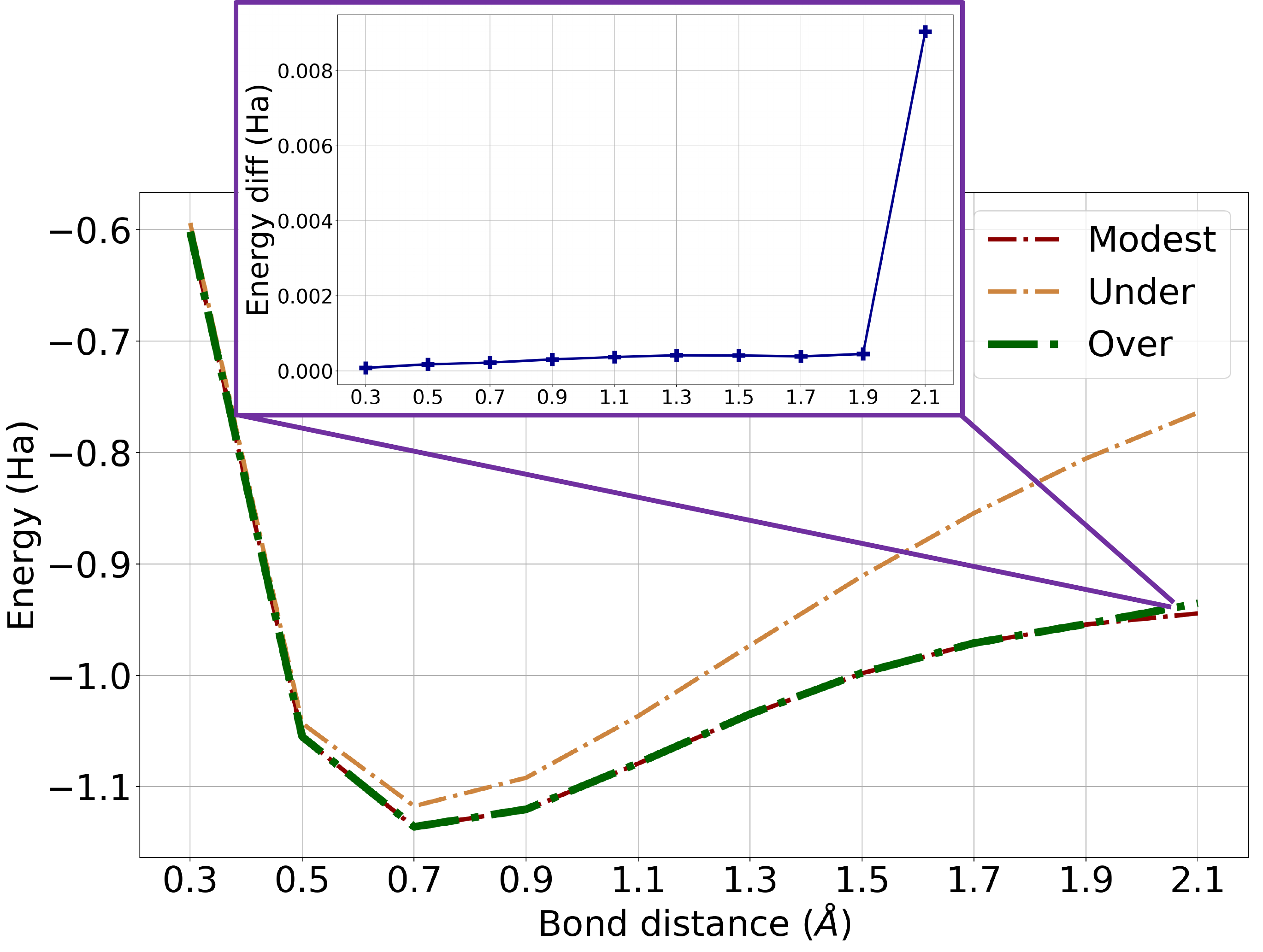}
\caption{\small{\textbf{Simulation results of VQE.} The labels `Under', `Modest', and `Over' refer to the estimated energy of VQEs when the employed ans\"atze have the restricted, modest, and overwhelming expressivity, respectively. \revise{`Ha' (Hartrees) and `\r{A}' (Angstroms) refer to the units for energy  and the bond lengths}. The inner plot shows the energy gap of VQEs in `Over' and `Modest' cases.}} 
\label{fig:VQE-sim}
\end{figure}

We conduct numerical simulations to explore how the expressivity of ans\"atze affects the performance of VQEs. In particular, we apply VQEs using three different ans\"atze with insufficient, modest, and overwhelming expressivity~\footnote{The separated expressivity of different ansatze is completed by controlling the involved number of quantum gates, supported by Theorem~\ref{thm:main-cov-num}. See Appendix G for construction details.}. and  estimate the ground state energy of Hydrogen molecule with varied bond length ranging from $0.3 \mathring{A}$ to $2.1\mathring{A}$. As shown in  Fig.~\ref{fig:VQE-sim}, VQE with the restricted ans\"atze performs worse than the modest and overwhelming ans\"atze, where there exists an apparent energy gap when the bond length is larger than $0.7 \mathring{A}$. Furthermore, although VQEs with the modest and overwhelming ans\"atze demonstrate  similar behavior in all bond lengths,  the former always outperforms the latter as shown in the inner plot of Fig.~\ref{fig:VQE-sim}. The collected results indicate that too limited or too redundant expressivity of the employed ans\"atze may prohibit the trainability of VQE ({in Appendix G, we numerically evidence that such a claim still holds when the learning rate is allowed to be adaptive}).

\medskip 
\textit{Discussion and conclusions.}---We devise an efficient measure to quantify the expressivity of VQAs, including QNNs and VQEs, controlled by the qudits count, the involved quantum gates in ans\"atze, the operator norm of the observable, and the system noise. Compared with the prior study~\cite{holmes2021connecting},  our results allow a succinct and direct way to compare the expressivity of different ans\"atze and devise novel ans\"atze. Our work mainly concentrates on the upper bounds of expressivity, whereas a promising research direction is to derive lower bounds and tighten the expressivity quantity. The developed tool here can be extended to analyze generalization ability of other advanced QNNs such as quantum convolutional neural networks. Besides, considering that generalization bounds can be used to design an ansatz with good learning performance via the framework of structural risk minimization, it is intrigued to use our results as a theoretical guidance to devise advanced QNNs.   Another crucial research direction is exploring explicit quantification of the `modest expressivity' of VQAs. A deep understanding of this issue contributes to integrate various  NISQ-oriented techniques such as error mitigation and quantum circuit architecture design techniques to boost the VQAs performance. 
 
 \begin{acknowledgements}
Y. Du thanks V. Schmiesing for providing valuable discussions. X. Yuan is supported by the National Natural Science Foundation of China Grant No.~12175003.
\end{acknowledgements}

\clearpage
\newpage

\renewcommand\thefigure{\thesection.\arabic{figure}}     
\appendix 
 
\onecolumngrid

\section{Proof of Theorem 1}\label{append:main-thm1}
The proof of Theorem 1 employs the definition of the operator norm.  
\begin{definition}[Operator norm]\label{def:opt-norm}
	Suppose A is an $n\times n$ matrix. The operator norm $A$ is defined as
	\begin{equation}
		\|A\| = \sup_{\|\bm{x}\|_2=1, \bm{x}\in\mathbb{C}^n} \|A\bm{x}\|.
	\end{equation}
	Alternatively, $\|A\|=\sqrt{\lambda_{1}(AA^{\dagger})}$, where $\lambda_i(AA^{\dagger})$ is the $i$-th largest eigenvalue of the matrix $AA^{\dagger}$.
\end{definition}

Besides the above definition, the proof of Theorem 1  leverages the the following two lemmas. In particular, The first lemma enables us to employ the covering number of one metric space $(\mathcal{H}_1, d_1)$ to bound the covering number of an another metric space $(\mathcal{H}_2, d_2)$. 

\begin{lemma}[Lemma 5, \cite{Barthel2018fundamental}]\label{lem:cov-num-two-space}
	Let $(\mathcal{H}_1, d_1)$ and $(\mathcal{H}_2, d_2)$ be metric spaces and $f:\mathcal{H}_1 \rightarrow  \mathcal{H}_2 $ be bi-Lipschitz such that 
	\begin{equation}\label{eqn:lem-cov-num-two-space-1}
		d_2(f(\bm{x}), f(\bm{y})) \leq K d_1(\bm{x}, \bm{y}), ~\forall \bm{x}, \bm{y} \in \mathcal{H}_1,
	\end{equation}
	and 
	\begin{equation}\label{eqn:lem-cov-num-two-space-2}
		d_2(f(\bm{x}), f(\bm{y})) \geq k d_1(\bm{x}, \bm{y}), ~\forall \bm{x}, \bm{y} \in \mathcal{H}_1~\text{with}~ d_1(\bm{x}, \bm{y}) \leq r.
	\end{equation}
Then their covering numbers obey
\begin{equation}\label{eqn:lem-cov-num-two-space-3}
	\mathcal{N}(\mathcal{H}_1, 2\epsilon /k, d_1 ) \leq \mathcal{N}(\mathcal{H}_2,  \epsilon, d_2   ) \leq  \mathcal{N}(\mathcal{H}_1, \epsilon /K, d_1 ),
\end{equation}
where the left inequality requires $\epsilon \leq kr/2$.
\end{lemma}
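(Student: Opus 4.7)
\textbf{Proof proposal for Lemma~\ref{lem:cov-num-two-space}.} The statement has two faces and I would handle them independently. The upper Lipschitz constant $K$ controls how far apart images can be, so a cover of $\mathcal{H}_1$ pushes forward to a cover of $\mathcal{H}_2$ with only a radius expansion by $K$; the lower constant $k$, valid on pairs with $d_1(\bm{x},\bm{y})\le r$, controls how close images can get, and prevents $f$ from collapsing widely separated points at sufficiently small scales. The two Lipschitz directions will furnish the right and left inequalities respectively, via a push-forward argument on one side and via packing/covering duality on the other.

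For the right-hand inequality $\mathcal{N}(\mathcal{H}_2,\epsilon,d_2)\le \mathcal{N}(\mathcal{H}_1,\epsilon/K,d_1)$, I would take a minimum $(\epsilon/K)$-cover $\mathcal{V}_1\subset\mathcal{H}_1$ of cardinality $\mathcal{N}(\mathcal{H}_1,\epsilon/K,d_1)$ and simply push it forward by $f$: for any $\bm{y}=f(\bm{x})\in\mathcal{H}_2$ pick $\bm{v}\in\mathcal{V}_1$ with $d_1(\bm{x},\bm{v})\le \epsilon/K$, and Eq.~(\ref{eqn:lem-cov-num-two-space-1}) immediately yields $d_2(\bm{y},f(\bm{v}))\le K\cdot \epsilon/K=\epsilon$. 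Thus $f(\mathcal{V}_1)$ is an $\epsilon$-cover of $\mathcal{H}_2$ of size at most $|\mathcal{V}_1|$. This direction is pure bookkeeping; no locality issue arises because Eq.~(\ref{eqn:lem-cov-num-two-space-1}) is assumed globally.

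For the left-hand inequality $\mathcal{N}(\mathcal{H}_1,2\epsilon/k,d_1)\le \mathcal{N}(\mathcal{H}_2,\epsilon,d_2)$, I would invoke the standard packing/covering duality. Let $\mathcal{P}_1\subset\mathcal{H}_1$ be a maximal $(2\epsilon/k)$-separated subset (pairwise $d_1$-distance $>2\epsilon/k$); maximality forces $\mathcal{P}_1$ to also be a $(2\epsilon/k)$-cover, so $|\mathcal{P}_1|\ge \mathcal{N}(\mathcal{H}_1,2\epsilon/k,d_1)$. Next pick a minimum $\epsilon$-cover $\mathcal{V}_2$ of $\mathcal{H}_2$ and define $\phi:\mathcal{P}_1\to\mathcal{V}_2$ by letting $\phi(\bm{x})$ be any element of $\mathcal{V}_2$ with $d_2(f(\bm{x}),\phi(\bm{x}))\le \epsilon$. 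Injectivity of $\phi$ would close the proof, since it gives $\mathcal{N}(\mathcal{H}_1,2\epsilon/k,d_1)\le |\mathcal{P}_1|\le |\mathcal{V}_2|=\mathcal{N}(\mathcal{H}_2,\epsilon,d_2)$. To verify injectivity I would argue by contradiction: if $\phi(\bm{x})=\phi(\bm{x}')$ for distinct $\bm{x},\bm{x}'\in\mathcal{P}_1$, the triangle inequality yields $d_2(f(\bm{x}),f(\bm{x}'))\le 2\epsilon$, whereas the separation $d_1(\bm{x},\bm{x}')>2\epsilon/k$ together with Eq.~(\ref{eqn:lem-cov-num-two-space-2}) would force $d_2(f(\bm{x}),f(\bm{x}'))>k\cdot (2\epsilon/k)=2\epsilon$, a contradiction.

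The main obstacle is precisely the step just invoked: Eq.~(\ref{eqn:lem-cov-num-two-space-2}) only applies when $d_1(\bm{x},\bm{x}')\le r$, yet two generic points of the $(2\epsilon/k)$-separated set $\mathcal{P}_1$ could in principle be farther than $r$ apart, in which case the lower estimate $d_2(f(\bm{x}),f(\bm{x}'))>2\epsilon$ is not directly justified. This is exactly what the hypothesis $\epsilon\le kr/2$, equivalently $2\epsilon/k\le r$, is meant to control: it matches the covering scale in $\mathcal{H}_1$ to the locality radius of $f$, so that by decomposing $\mathcal{H}_1$ into $d_1$-balls of radius $r$ and applying the injectivity argument within each such ball (then summing over balls) one recovers the required lower bound on $d_2(f(\bm{x}),f(\bm{x}'))$ for the pairs that matter. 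I expect most of the write-up to be spent making this restriction step rigorous; in the companion applications of this paper $f$ is effectively globally bi-Lipschitz, so in practice the argument collapses to the clean push-forward/pullback pair sketched above.
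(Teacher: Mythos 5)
First, a point of comparison that matters here: the paper does not prove this lemma at all --- it is imported verbatim as Lemma~5 of Ref.~\cite{Barthel2018fundamental} --- so there is no in-paper argument to measure yours against. Judged on its own terms, your proof of the right-hand inequality is correct and is the standard push-forward of a minimal cover; this is also the only half of the lemma the paper ever uses (in the proof of Theorem~1 it is applied with $K=1$ to obtain $\mathcal{N}(\mathcal{H},\epsilon,|\cdot|)\leq\mathcal{N}(\mathcal{H}_{circ},\epsilon,\|\cdot\|)$), so nothing downstream of the lemma is affected by the issue below.

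The left-hand inequality is where you have a genuine gap: you locate it precisely but do not close it. Your packing/covering duality is sound up to the injectivity step, but for two points $\bm{x},\bm{x}'\in\mathcal{P}_1$ with $d_1(\bm{x},\bm{x}')>r$ the lower Lipschitz bound (\ref{eqn:lem-cov-num-two-space-2}) says nothing, and the hypothesis $\epsilon\leq kr/2$ does not rescue you --- it guarantees $2\epsilon/k\leq r$, but a $(2\epsilon/k)$-separated pair can still be arbitrarily far apart. Your proposed repair (decompose $\mathcal{H}_1$ into $d_1$-balls of radius $r$, argue within each ball, then ``sum over balls'') does not yield the stated bound: summing introduces a multiplicative factor equal to the number of balls. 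In fact the left inequality is false under only hypotheses (\ref{eqn:lem-cov-num-two-space-1}) and (\ref{eqn:lem-cov-num-two-space-2}): take $\mathcal{H}_1=\{0,100\}\subset\mathbb{R}$ with $r=1$, and let $f$ send the two points to two points of $\mathcal{H}_2$ at distance $\delta\ll 1$; then (\ref{eqn:lem-cov-num-two-space-2}) is vacuous (no distinct pair has $d_1\leq r$, so any $k$ works), yet for $\epsilon=\delta$ one has $\mathcal{N}(\mathcal{H}_1,2\epsilon/k,d_1)=2>1=\mathcal{N}(\mathcal{H}_2,\epsilon,d_2)$. The missing ingredient is therefore not a cleverer decomposition but an additional structural assumption on $(\mathcal{H}_1,d_1)$ (connectedness or geodesic structure, which holds for the unitary groups in the source reference); without importing that, the left inequality cannot be derived from the stated hypotheses, and any honest write-up should either add that assumption or drop the left inequality, as the present paper could safely do.
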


The second lemma presents the covering number of the operator  group 
\begin{equation}\label{eqn:def-opt-group}
	\mathcal{H}_{circ}:=\left\{\hat{U}(\bm{\theta})^{\dagger}O\hat{U}(\bm{\theta})| \bm{\theta}\in \Theta \right\},
\end{equation}
where $\hat{U}(\bm{\theta}) = \prod_{i=1}^{N_g}\hat{u}_i(\bm{\theta}_i)$ and only $N_{gt}\leq N_g$ gates in $U(\bm{\theta})$ are trainable. \revise{The detailed proof is deferred to Appendix  \ref{append:subsec-cov-num-Lemma2}}.  
\begin{lemma}\label{lem:thm-cov-num-opt-set}
	Following notations in Theorem 1, suppose that the employed $N$-qubit Ans\"atze containing in total $N_g$ gates with $N_g>N$, each gate $\hat{u}_i(\bm{\theta})$ acting on at most $k$ qudits, and $N_{gt}\leq N_g$ gates in $U(\bm{\theta})$ are trainable. The $\epsilon$-covering number for the operator  group $\mathcal{H}_{circ}$ in Eq.~(\ref{eqn:def-opt-group}) with respect to the operator-norm distance obeys
\begin{equation}
  \mathcal{N}(\mathcal{H}_{circ}, \epsilon , \|\cdot\|) \leq \left(\frac{7 N_{gt}  \|O\| }{\epsilon} \right)^{d^{2k}N_{gt}},
\end{equation}
where $\|O\|$ denotes the operator norm of $O$.
\end{lemma}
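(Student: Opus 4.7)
The plan is to reduce the covering number of the conjugated-operator space $\mathcal{H}_{circ}$ to covering numbers of the individual trainable gate groups $\mathcal{U}(d^k)$ and then combine these via a product-net construction. First I would establish a Lipschitz relation from gate space to operator space: for any two parameter settings $\bm\theta,\bm\theta'\in\Theta$, writing $U:=\hat U(\bm\theta)$ and $U':=\hat U(\bm\theta')$, the algebraic identity
$$U^{\dagger}OU-U'^{\dagger}OU'=(U-U')^{\dagger}OU+U'^{\dagger}O(U-U')$$
together with submultiplicativity of $\|\cdot\|$ and $\|U\|=\|U'\|=1$ gives $\|U^{\dagger}OU-U'^{\dagger}OU'\|\le 2\|O\|\,\|U-U'\|$. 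Expanding the ordered product $U=\hat u_{N_g}\cdots\hat u_1$ as a telescoping sum, using unitarity of each factor and the fact that fixed gates cancel termwise, yields the further bound
$$\|U-U'\|\le\sum_{i\text{ trainable}}\|\hat u_i(\bm\theta_i)-\hat u_i(\bm\theta'_i)\|.$$

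Next I would cover each trainable gate individually. Each $\hat u_i(\bm\theta_i)$ acts on at most $k$ qudits and is embedded as a tensor product with identity on the remaining $N-k$ qudits, an embedding that preserves operator-norm distances, so it suffices to net $\mathcal U(d^k)$. A standard volume-comparison argument on the unitary group (or, equivalently, a direct net on the bounded Hermitian tangent space followed by a Lipschitz estimate for the matrix exponential in operator norm) supplies an inequality of the form $\mathcal N(\mathcal U(d^k),\delta,\|\cdot\|)\le (C/\delta)^{d^{2k}}$ for $\delta\in(0,1)$ and a small absolute constant $C$. Calibrating $\delta=\epsilon/(2\|O\|N_{gt})$ and taking the Cartesian product of $\delta$-nets across the $N_{gt}$ trainable gates, the image of this product net under the map $\bm\theta\mapsto\hat U(\bm\theta)^{\dagger}O\hat U(\bm\theta)$ covers $\mathcal{H}_{circ}$ at scale $2\|O\|\cdot N_{gt}\cdot\delta=\epsilon$ by the Lipschitz bound of the first step. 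The cardinality of this product net is at most $(C/\delta)^{d^{2k}N_{gt}}=(2C\|O\|N_{gt}/\epsilon)^{d^{2k}N_{gt}}$, and absorbing the absolute constant into the factor $7$ yields the stated bound. Lemma~\ref{lem:cov-num-two-space} can be invoked to formalize the transfer if a cleaner presentation is preferred.

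The principal obstacle is obtaining the single-gate covering-number bound with a constant $C$ tight enough that, after combining with the Lipschitz factor $2\|O\|$ from the first step, one has $2C\le 7$. This is likely to require an explicit construction via a uniform $\delta$-net on the space of bounded Hermitian generators of $\mathcal U(d^k)$ together with a careful operator-norm Lipschitz estimate for $e^{iH}$, rather than an opaque appeal to a volume bound. A secondary care point is verifying that tensoring a $k$-local gate with the identity on the remaining $N-k$ qudits is an isometry in operator norm, so that the covering is performed on $\mathcal U(d^k)$ (giving the exponent $d^{2k}$) and not on $\mathcal U(d^N)$; this is elementary but indispensable for the correct scaling in the exponent.
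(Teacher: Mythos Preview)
Your proposal follows essentially the same route as the paper: cover each trainable gate in $\mathcal U(d^k)$, take the Cartesian product of these nets, push it forward under $\bm\theta\mapsto \hat U(\bm\theta)^\dagger O\hat U(\bm\theta)$, and control the image radius via a telescoping bound $\|U-U'\|\le\sum_{i\text{ trainable}}\|\hat u_i-\hat u_i'\|$ together with a Lipschitz estimate for conjugation.

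The only substantive differences concern exactly the constant issue you flag. First, the paper does not derive the single-gate covering bound; it simply invokes the known estimate $\mathcal N(\mathcal U(d^k),\epsilon,\|\cdot\|)\le(7/\epsilon)^{d^{2k}}$ from Barthel--Lu, so your worry about building a tight Hermitian-generator net is sidestepped by citation. Second, the paper's Lipschitz step is written as $\|U^\dagger OU-U_\epsilon^\dagger OU_\epsilon\|\le\|U-U_\epsilon\|\,\|O\|$, i.e.\ without your factor $2$; combined with $C=7$ this is precisely what yields the constant $7$ in the statement. Your identity and bound with $2\|O\|$ are the standard ones, and the paper offers no detail on how the sharper factor is obtained, so if you insist on $2\|O\|$ you would end up with $14$ rather than $7$. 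In short: your ``principal obstacle'' is real, and the paper resolves it by citing the $C=7$ bound and using a tighter (and not explicitly justified) Lipschitz constant for the conjugation map.
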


We are now ready to present the proof of Theorem 1.
\begin{proof}[Proof of Theorem 1]
The intuition of the proof is as follows. Recall the definition of the hypothesis space $\mathcal{H}$ in Eq.~(2) and Lemma \ref{lem:cov-num-two-space}. When $\mathcal{H}_1$ refers to the hypothesis space $\mathcal{H}$ and $\mathcal{H}_2$ refers to the unitary group $\mathcal{U}(d^N)$, the upper bound of the covering number of $\mathcal{H}$, i.e., $\mathcal{N}(\mathcal{H}_1, d_1, \epsilon )$, can be derived by first quantifying $K$  Eq.~(\ref{eqn:lem-cov-num-two-space-1}), and then interacting with $\mathcal{N}(\mathcal{H}_{circ}, \epsilon, \|\cdot\|)$ in Lemma \ref{lem:thm-cov-num-opt-set}. Under the above observations, in the following, we analyze the upper bound of the covering number $\mathcal{N}(\mathcal{H},  \epsilon, |\cdot|)$.
  
 We now derive the Lipschitz constant $K$ in Eq.~(\ref{eqn:lem-cov-num-two-space-1}), as the precondition to achieve the upper bound of $\mathcal{N}(\mathcal{H},  \epsilon, |\cdot |)$. Define $\hat{U}\in \mathcal{U}(d^N)$ as the employed Ans\"atze composed of $N_g$ gates, i.e., $\hat{U}=\prod_{i=1}^{N_g} \hat{u}_l$.  Let $\hat{U}_{\epsilon}$ be the quantum circuit where each of the $N_g$ gates is replaced by the nearest element in the covering set. The relation between the distance $d_2(\Tr(\hat{U}_{\epsilon}^{\dagger}O\hat{U}_{\epsilon}\rho), \Tr(\hat{U}^{\dagger}O\hat{U} \rho))$ and the distance $d_1(\hat{U}_{\epsilon}, \hat{U})$ yields  
\begin{eqnarray}
	&& d_2(\Tr(\hat{U}_{\epsilon}^{\dagger}O\hat{U}_{\epsilon}\rho), \Tr(\hat{U}^{\dagger}O\hat{U} \rho)) \nonumber\\
=   && |\Tr(\hat{U}_{\epsilon}^{\dagger}O\hat{U}_{\epsilon}\rho) - \Tr(\hat{U}^{\dagger}O\hat{U} \rho) | \nonumber\\
	= && \left|\Tr\left((\hat{U}_{\epsilon}^{\dagger}O\hat{U}_{\epsilon}  -  \hat{U}^{\dagger}O\hat{U} ) \rho \right) \right| \nonumber\\
	\leq && \left\|\hat{U}_{\epsilon}^{\dagger}O\hat{U}_{\epsilon}  -  \hat{U}^{\dagger}O\hat{U} \right\|\Tr(\rho) \nonumber\\
	= && d_1(\hat{U}_{\epsilon}^{\dagger}O\hat{U}_{\epsilon}, \hat{U}^{\dagger}O\hat{U}), 
\end{eqnarray}
where the first equality comes from the explicit form of hypothesis, the first inequality uses the Cauchy-Schwartz inequality, and the last inequality employs $\Tr(\rho)=1$ and  
 \begin{equation}\label{eqn:dist-cov}
	\left\|\hat{U}_{\epsilon}^{\dagger}O\hat{U}_{\epsilon}  -  \hat{U}^{\dagger}O\hat{U} \right\| = d_1(\hat{U}_{\epsilon}^{\dagger}O\hat{U}_{\epsilon}, \hat{U}^{\dagger}O\hat{U}).
\end{equation}
 
The above equation indicates $K=1$. Combining the above result with   Lemma \ref{lem:cov-num-two-space} (i.e., Eq.~(\ref{eqn:lem-cov-num-two-space-1})) and Lemma \ref{lem:thm-cov-num-opt-set}, we obtain
\begin{equation}
	\mathcal{N}(\mathcal{H},  \epsilon,  |\cdot |)  \leq   \mathcal{N}(\mathcal{H}_{circ}, \epsilon, \|\cdot\|) \leq  \left(\frac{7 N_{gt}  \|O\| }{\epsilon} \right)^{d^{2k}N_{gt}}.
\end{equation} 
  
This relation ensures 
\begin{equation}
	\mathcal{N}(\mathcal{H},  \epsilon, |\cdot|) \leq \left(\frac{7 N_{gt}  \|O\| }{\epsilon} \right)^{d^{2k}N_{gt}}.
\end{equation}

\end{proof}

\subsection{Proof of Lemma \ref{lem:thm-cov-num-opt-set}}\label{append:subsec-cov-num-Lemma2}
The proof of Lemma \ref{lem:thm-cov-num-opt-set} exploits the following result.
\begin{lemma}[Lemma 1, \cite{Barthel2018fundamental}]\label{lem:cov-fund}
For $0<\epsilon<1/10$, the $\epsilon$-covering number for the unitary group $U(d^k)$ with respect to the operator-norm distance in Definition \ref{def:opt-norm} obeys
\begin{equation}
	\left(\frac{3}{4\epsilon} \right)^{d^{2k}} \leq \mathcal{N}(U(d^k), \epsilon, \|\cdot\|) \leq \left(\frac{7}{\epsilon} \right)^{d^{2k}}.
\end{equation}	
\end{lemma}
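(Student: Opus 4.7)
The plan is to establish both inequalities by exploiting the fact that $U(d^k)$ is a compact real Lie manifold of dimension $d^{2k}$ isometrically embedded in $\mathbb{C}^{d^k\times d^k}$ under the operator norm. Since this lemma is quoted verbatim from \cite{Barthel2018fundamental}, I would reproduce their strategy, which splits into an exponential-parameterization argument for the upper bound and a Haar-volume argument for the lower bound.

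For the upper bound $\mathcal{N}(U(d^k),\epsilon,\|\cdot\|)\leq (7/\epsilon)^{d^{2k}}$, I would parameterize $U(d^k)$ by the exponential map. Every $U\in U(d^k)$ admits a representation $U=e^{iH}$ with $H$ Hermitian and spectrum in $(-\pi,\pi]$, so in particular $\|H\|\leq \pi$. The exponential map is $1$-Lipschitz in operator norm, which follows from the integral identity $e^{iH_1}-e^{iH_2}=i\int_0^1 e^{isH_1}(H_1-H_2)e^{i(1-s)H_2}\,ds$ together with unitarity of $e^{isH_j}$. Consequently, any $\epsilon$-cover of the Hermitian ball $\mathcal{B}_\pi=\{H=H^\dagger : \|H\|\leq\pi\}$ in operator norm pushes forward to an $\epsilon$-cover of $U(d^k)$. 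Since the Hermitian matrices form a real $d^{2k}$-dimensional Banach space under the operator norm, a standard volume-comparison estimate bounds the covering number of $\mathcal{B}_\pi$ by $(1+2\pi/\epsilon)^{d^{2k}}$; restricting to $0<\epsilon<1/10$ and absorbing small-constant slack then sharpens this to $(7/\epsilon)^{d^{2k}}$.

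For the lower bound $(3/(4\epsilon))^{d^{2k}}\leq \mathcal{N}(U(d^k),\epsilon,\|\cdot\|)$, I would use a Haar-mass packing argument. The normalized Haar measure $\mu$ on $U(d^k)$ is bi-Lipschitz equivalent to the Riemannian volume induced by the embedding, and for $\epsilon$ small enough the operator-norm $\epsilon$-ball around any $V\in U(d^k)$ satisfies $\mu(\{U:\|U-V\|\leq\epsilon\})\leq (4\epsilon/3)^{d^{2k}}$ by comparing with a Euclidean $\epsilon$-ball in the tangent space $iV\cdot H_{\rm sa}$. If $\mathcal{V}$ is any $\epsilon$-cover, the balls around its elements must have total $\mu$-mass at least $1$, forcing $|\mathcal{V}|\geq (3/(4\epsilon))^{d^{2k}}$.

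The main obstacle is extracting the explicit constants $7$ and $3/4$ rather than loose ones. On the upper side this means picking a parameterization whose Lipschitz loss, combined with the covering of a $d^{2k}$-dimensional operator-norm ball, collapses into the single prefactor $7$; on the lower side it means sharpening the local volume estimate for a geodesic ball on $U(d^k)$ so that the constant $4/3$ (rather than a generic $O(1)$) controls the Jacobian of the exponential map. These are exactly the geometric estimates carried out in \cite{Barthel2018fundamental}, and the restriction $\epsilon<1/10$ is what permits both linearizations to hold uniformly.
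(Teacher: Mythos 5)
The paper does not prove this statement at all: it is imported verbatim as Lemma~1 of the cited reference \cite{Barthel2018fundamental}, so there is no internal proof to compare against. Your reconstruction follows the standard (and, as far as the strategy goes, the correct) route. The upper-bound half is essentially complete as written: surjectivity of the exponential map onto $U(d^k)$ with $\|H\|\leq\pi$, the $1$-Lipschitz estimate $\|e^{iH_1}-e^{iH_2}\|\leq\|H_1-H_2\|$ from the integral identity, the volume bound $(1+2\pi/\epsilon)^{d^{2k}}$ for covering the operator-norm ball of Hermitian matrices (real dimension $d^{2k}$), and the numerical check $\epsilon+2\pi\leq 7$ for $\epsilon<1/10$ all go through. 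The lower-bound half is correct in outline (any $\epsilon$-cover induces balls whose Haar masses sum to at least $1$), but the decisive ingredient --- the estimate $\mu(\{U:\|U-V\|\leq\epsilon\})\leq(4\epsilon/3)^{d^{2k}}$ --- is asserted by appeal to a tangent-space comparison and a Jacobian bound for the exponential map rather than derived; that is precisely where the constant $3/4$ lives, and it is the part genuinely delegated to \cite{Barthel2018fundamental}. Since the paper itself delegates the entire lemma to that reference, your proposal is if anything more explicit than the source text, and I see no error in it; just be aware that the lower bound as you present it is a proof sketch resting on an unproven volume estimate, not a self-contained argument.
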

 \begin{proof}[Proof of Lemma \ref{lem:thm-cov-num-opt-set}]
 
 The goal of Lemma \ref{lem:thm-cov-num-opt-set} is to measure the covering number of the operator group $\mathcal{H}_{circ}= \{\hat{U}(\bm{\theta})^{\dagger}O\hat{U}(\bm{\theta})| \bm{\theta}\in \Theta  \}$ in Eq.~(\ref{eqn:def-opt-group}), where the trainable unitary $\hat{U}(\bm{\theta}) = \prod_{i=1}^{N_g}\hat{u}_i(\bm{\theta}_i)$ consists of $N_{gt}$ trainable gates and  $N_g-N_{gt}$ fixed gates. To achieve this goal, we consider a fixed $\epsilon$-covering $\mathcal{S}$ for the set $\mathcal{N}(U(d^k), \epsilon, \|\cdot\|)$ of all possible gates and define the set 
 \begin{equation}
 	\tilde{\mathcal{S}}:=\left\{\prod_{i\in\{N_{gt}\}}\hat{u}_i(\bm{\theta}_i)\prod_{j\in\{N_g-N_{gt}\}}\hat{u}_j\Big| \hat{u}_i(\bm{\theta}_i)\in \mathcal{S}\right\},
 \end{equation}
 where $\hat{u}_i(\bm{\theta}_i)$ and $\hat{u}_j$ specify to the trainable and fixed quantum gates in the employed Ans\"atze, respectively. Note that for any circuit $\hat{U}(\bm{\theta}) = \prod_{i=1}^{N_g}\hat{u}_i(\bm{\theta}_i)$, we can always find a $\hat{U}_{\epsilon}(\bm{\theta})\in\tilde{\mathcal{S}}$ where each $\hat{u}_i(\bm{\theta}_i)$ of trainable gates is replaced with the nearest element in \textit{the covering set $\mathcal{S}$}, and the discrepancy $\|\hat{U}(\bm{\theta})^{\dagger}O\hat{U}(\bm{\theta})-\hat{U}_{\epsilon}(\bm{\theta})^{\dagger}O\hat{U}_{\epsilon}(\bm{\theta})\|$ satisfies
 \begin{eqnarray}\label{eqn:append-proof-lemma2-1}
  && \|\hat{U}(\bm{\theta})^{\dagger}O\hat{U}(\bm{\theta})-\hat{U}_{\epsilon}(\bm{\theta})^{\dagger}O\hat{U}_{\epsilon}(\bm{\theta})\| \nonumber\\
  \leq &&  \|\hat{U} -\hat{U}_{\epsilon}\|   \|O\| \nonumber\\
  \leq &&    N_{gt} \|O\|  \epsilon,
 \end{eqnarray} 
where the first inequality uses the triangle inequality, and the second inequality follows from $ \|\hat{U} -\hat{U}_{\epsilon}\| \leq N_{gt}\epsilon$.

Therefore, by Definition 1, we know that $\tilde{\mathcal{S}}$ is a $N_{gt} \|O\| \epsilon$-covering set for $\mathcal{H}_{circ}$. Recall that the upper bound in Lemma \ref{lem:cov-fund} gives $|\mathcal{S}|\leq\left(\frac{7}{\epsilon} \right)^{d^{2k}}$. Since there are $|\mathcal{S}|^{N_{gt}}$ combinations for the gates in $\tilde{\mathcal{S}}$, we have $|\tilde{\mathcal{S}}|\leq\left(\frac{7}{\epsilon} \right)^{d^{2k}N_{gt}}$  and the covering number for $\mathcal{H}_{circ}$ satisfies 
\begin{equation}
\mathcal{N}(\mathcal{H}_{circ},  N_{gt}  \|O\|  \epsilon, \|\cdot\|) \leq  \left(\frac{7 }{\epsilon} \right)^{d^{2k}N_{gt}}.  	
\end{equation}
An equivalent representation of the above inequality is
\begin{equation}
	\mathcal{N}(\mathcal{H}_{circ},   \epsilon, \|\cdot\|) \leq  \left(\frac{7 N_{gt}  \|O\| }{\epsilon} \right)^{d^{2k}N_{gt}}.  
\end{equation}
 
\end{proof}

\section{Proof of Proposition 1}\label{append:prop:cov-vqa-noise}
\begin{proof}[Proof of Proposition 1]
	In this proof, we first derive the covering number of VQA for the general noisy quantum channel $\mathcal{E}(\cdot)$, and then analyze the covering number of VQA when $\mathcal{E}(\cdot)$ specifies to the depolarization noise. 
	
	\textit{\underline{The general quantum channel $\mathcal{E}(\cdot)$.}} We follow the same routine as the proof of Theorem 1 to acquire the upper bound of $\mathcal{N}(\widetilde{\mathcal{H}},  \epsilon, |\cdot |)$. Namely, supported by Lemma \ref{lem:cov-num-two-space}, once we establish the relation between $d_2(\Tr(O\mathcal{E}(\hat{U}\rho  \hat{U}^{\dagger})), \Tr(O\mathcal{E}(\hat{U}_{\epsilon}\rho  \hat{U}_{\epsilon}^{\dagger})) )$, i.e., 
	\begin{equation}
		d_2\left(\Tr\left(O\mathcal{E}\left(\hat{U}\rho  \hat{U}^{\dagger}\right)\right), \Tr\left(O\mathcal{E}\left(\hat{U}_{\epsilon}\rho  \hat{U}_{\epsilon}^{\dagger}\right)\right) \right) = \left|\Tr\left(O\mathcal{E}\left(\hat{U}_{\epsilon}\rho  \hat{U}_{\epsilon}^{\dagger}\right) -  O \mathcal{E}\left(\hat{U} \rho \hat{U}^{\dagger}\right) \right) \right| ,
	\end{equation}
	and $d_1(\hat{U}_{\epsilon}\rho\hat{U}_{\epsilon}^{\dagger}, \hat{U}\rho\hat{U}^{\dagger})$,
	 the covering number $ \mathcal{N}(\mathcal{H}_{circ}, \epsilon, \|\cdot\|) $ in Lemma \ref{lem:thm-cov-num-opt-set} can be utilized to infer the upper bound of 	$\mathcal{N}(\widetilde{\mathcal{H}},  \epsilon, |\cdot |)$. 
	 
Under the above observation, we now derive the term $K$ such that $d_2(\Tr(O\mathcal{E}(\hat{U}\rho  \hat{U}^{\dagger})), \Tr(O\mathcal{E}(\hat{U}_{\epsilon}\rho  \hat{U}_{\epsilon}^{\dagger})) ) \leq K d_1(\hat{U}_{\epsilon}\rho\hat{U}_{\epsilon}^{\dagger}, \hat{U}\rho\hat{U}^{\dagger})$. In particular, we have
\begin{eqnarray}\label{eqn:proof-prop1-1}
	&& d_2\left(\Tr\left(O\mathcal{E}\left(\hat{U}\rho  \hat{U}^{\dagger}\right)\right), \Tr\left(O\mathcal{E}\left(\hat{U}_{\epsilon}\rho  \hat{U}_{\epsilon}^{\dagger}\right)\right) \right) \nonumber\\
	= && \left|\Tr\left(O \left(\mathcal{E}\left(\hat{U}_{\epsilon}\rho  \hat{U}_{\epsilon}^{\dagger}\right) - \mathcal{E}\left(\hat{U} \rho  \hat{U}^{\dagger}\right)   \right) \right) \right|  \nonumber\\
	\leq &&   \|O\| \Tr\left(  \mathcal{E}\left(\hat{U}_{\epsilon}\rho  \hat{U}_{\epsilon}^{\dagger}\right) -    \mathcal{E}\left(\hat{U} \rho \hat{U}^{\dagger}\right)   \right)    \nonumber\\
	\leq &&  \|O\|  \Tr\left(   \hat{U}_{\epsilon}\rho  \hat{U}_{\epsilon}^{\dagger}  -     \hat{U} \rho \hat{U}^{\dagger}  \right)  \nonumber\\ 
	\leq &&   2\|O\|    \left\| \hat{U}_{\epsilon}\rho  \hat{U}_{\epsilon}^{\dagger}  -     \hat{U} \rho \hat{U}^{\dagger}\right\|,  
\end{eqnarray}
where the first inequality uses the Cauchy-Schwartz inequality, the second inequality employs the contractive property of quantum channels (Theorem 9.2, \cite{nielsen2010quantum}), the last inequality comes from the fact that $\hat{U}_{\epsilon}\rho  \hat{U}_{\epsilon}^{\dagger}$ and $U\rho  U^{\dagger}$ are two rank-1 states (i.e., this implies that the rank of $\hat{U}_{\epsilon}\rho  \hat{U}_{\epsilon}^{\dagger}  -     \hat{U} \rho \hat{U}^{\dagger}$ is at most 2)  and $\Tr(\cdot)\leq rank(\cdot)\|\cdot\|$. 

With setting the operator $O$ in $d_1$ as $\rho$, we obtain 
\begin{equation}
	d_2\left(\Tr\left(O\mathcal{E}\left(\hat{U}\rho  \hat{U}^{\dagger}\right)\right), \Tr\left(O\mathcal{E}\left(\hat{U}_{\epsilon}\rho  \hat{U}_{\epsilon}^{\dagger}\right)\right) \right)    \leq  \|O\| 2 d_1(\hat{U}_{\epsilon}\rho \hat{U}_{\epsilon}^{\dagger}, U\rho U^{\dagger}),
\end{equation}
which indicates that the term $K$ in Eq.~(\ref{eqn:lem-cov-num-two-space-2}) is
\begin{equation}
	K = 2\|O\|. 
\end{equation} 

 Supporting by Lemma \ref{lem:thm-cov-num-opt-set}, the covering number of VQA under the noisy setting is upper bounded by
 \begin{equation}
 \mathcal{N}(\widetilde{\mathcal{H}},  \epsilon, |\cdot |) \leq 2\|O\|  \left(\frac{7 N_{gt}  \|\rho\| }{\epsilon} \right)^{d^{2k}N_{gt}}= 2\|O\|  \left(\frac{7 N_{gt}  }{\epsilon} \right)^{d^{2k}N_{gt}},
 \end{equation} 
 where the equality exploits the spectral property of the quantum state.

\textit{\underline{The local depolarization  channel $\mathcal{E}_p(\cdot)$.}}  We next consider the covering number of VQA when the noisy quantum channel is simulated by the local depolarization noise, i.e., the depolarization channel $\mathcal{E}_p(\cdot)$ is applied to each quantum gate in $\hat{U}(\bm{\theta})$. Following the explicit form of the  depolarization channel, the distance $d_2(\Tr(O\mathcal{E}_p(\hat{U}\rho  \hat{U}^{\dagger})), \Tr(O\mathcal{E}_p(\hat{U}_{\epsilon}\rho  \hat{U}_{\epsilon}^{\dagger})) )$ and distance $d_1(\hat{U}_{\epsilon}\rho\hat{U}_{\epsilon}^{\dagger}, \hat{U}\rho\hat{U}^{\dagger})$ satisfies
\begin{eqnarray}
	&& d_2\left(\Tr\left(O\mathcal{E}_p\left(\hat{U}\rho  \hat{U}^{\dagger}\right)\right), \Tr\left(O\mathcal{E}_p\left(\hat{U}_{\epsilon}\rho  \hat{U}_{\epsilon}^{\dagger}\right)\right) \right)  \nonumber\\
	= &&  \left|\Tr\left(O\mathcal{E}_{p}\left(\hat{U}_{\epsilon}\rho  \hat{U}_{\epsilon}^{\dagger}\right) -  O \mathcal{E}_{p}\left(\hat{U} \rho \hat{U}^{\dagger}\right) \right) \right| \nonumber\\
	= && (1-p)^{N_g}\left|\Tr\left(O \left(\hat{U}_{\epsilon}\rho  \hat{U}_{\epsilon}^{\dagger}\right) -  O  \left(\hat{U} \rho \hat{U}^{\dagger}\right) \right) \right| \nonumber\\
	\leq && (1-p)^{N_g} \left \|\hat{U}_{\epsilon}^{\dagger} O  \hat{U}_{\epsilon}    -  \hat{U}^{\dagger} O \hat{U}  \right\| \Tr(\rho) \nonumber\\
	= && (1-p)^{N_g} \left \|\hat{U}_{\epsilon}^{\dagger} O  \hat{U}_{\epsilon}    -  \hat{U}^{\dagger} O \hat{U} \right\| \nonumber\\
	= && (1-p)^{N_g} d_1(\hat{U}_{\epsilon}^{\dagger}O\hat{U}_{\epsilon}, \hat{U}^{\dagger}O\hat{U}),
\end{eqnarray}
where the second equality comes from the property of the local depolarization noise given in \cite[Lemma 5]{du2020learnability}, i.e.,
\begin{equation}
\mathcal{E}_{p}\left(\hat{U} \rho \hat{U}^{\dagger}\right) =  \mathcal{E}_p(u_{N_g}(\bm{\theta})...{u}_2(\bm{\theta}) \mathcal{E}_p({u}_1(\bm{\theta})\rho {u}_1(\bm{\theta})^{\dagger}){u}_2(\bm{\theta})^{\dagger}...u_{N_g}(\bm{\theta})^{\dagger}) = (1-p)^{N_g} (\hat{U}(\bm{\theta})  \rho \hat{U}(\bm{\theta})^{\dagger}) + (1- (1-p)^{N_g})\frac{\mathbb{I}}{N^d}.
\end{equation}
This result indicates that the term $K$ in Eq.~(\ref{eqn:lem-cov-num-two-space-1}) is
\begin{equation}
	K = (1-p)^{N_g}.
\end{equation} 

 Supporting by Lemma \ref{lem:thm-cov-num-opt-set}, the covering number of VQA under the depolarization noise is upper bounded by
 \begin{equation}
 	\mathcal{N}(\widetilde{\mathcal{H}},  \epsilon, |\cdot |) \leq (1-p)^{N_g}   \left(\frac{7 N_{gt}  \|O\| }{\epsilon} \right)^{d^{2k}N_{gt}}.   
 \end{equation}
\end{proof}

\section{Proof of Theorem 2}\label{append:proof-general-error-QNN}
\begin{lemma}[Theorem 1,  \cite{Kakade2008OnTC}] \label{lem:gene-rademachier}
Assume the loss $\ell$ is $L_1$-Lipschitz and upper bounded by $C_1$. With probability at least $1-\delta$ over a sample $\mathcal{S}$ of size $n$, every $h\in \HQNN$ satisfies 
\begin{equation}
\mathcal{R}(\mathcal{A}(S)) 	\leq  \hat{\mathcal{R}}_S(\mathcal{A}(S))  + 2L_1\Re(\HQNN)  +  3C_1\sqrt{\frac{\ln(2/\delta)}{2n}},
\end{equation}
where $\Re(\HQNN)$ is the empirical Rademacher complexity of the hypothesis space $\HQNN$ and $n$ is the sample size of $\mathcal{S}$. 
\end{lemma}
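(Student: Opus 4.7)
The plan is to follow the classical template for proving a data-dependent Rademacher generalization bound, which decomposes the proof into a concentration step, a symmetrization step, and a contraction step. Define the uniform deviation functional
\[
\Phi(S) \;=\; \sup_{h\in \HQNN}\bigl(\mathcal{R}(h)-\hat{\mathcal{R}}_S(h)\bigr),
\]
so that the claim is equivalent to showing $\Phi(S)\le 2L_1\Re(\HQNN)+3C_1\sqrt{\ln(2/\delta)/(2n)}$ with probability at least $1-\delta$. I would split the target probability budget $\delta$ into two halves of $\delta/2$ each, one for concentrating $\Phi(S)$ around its mean, and one for concentrating the empirical Rademacher complexity around its expectation.

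First, I would apply McDiarmid's bounded differences inequality to $\Phi(S)$. Because the loss $\ell$ takes values in $[0,C_1]$, replacing any single training example $(\bm{x}^{(i)},\bm{y}^{(i)})$ changes $\hat{\mathcal{R}}_S(h)$ by at most $C_1/n$ uniformly in $h$, so $\Phi$ satisfies the bounded differences property with constants $C_1/n$. McDiarmid then yields $\Phi(S)\le \mathbb{E}_S[\Phi(S)] + C_1\sqrt{\ln(2/\delta)/(2n)}$ with probability at least $1-\delta/2$. Second, I would upper bound $\mathbb{E}_S[\Phi(S)]$ by a standard ghost-sample symmetrization argument: introducing an i.i.d.\ copy $S'$ and Rademacher variables $\sigma_i\in\{\pm1\}$ gives
\[
\mathbb{E}_S[\Phi(S)] \;\le\; 2\,\mathbb{E}_S\!\left[\Re_S(\ell\circ \HQNN)\right],
\]
where $\Re_S$ denotes the empirical Rademacher complexity of the composed loss class. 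Third, I would invoke Talagrand's contraction lemma: since $\ell(\cdot,\bm{y})$ is $L_1$-Lipschitz in its first argument, $\Re_S(\ell\circ \HQNN)\le L_1\,\Re_S(\HQNN)$, and the expected Rademacher complexity is bounded analogously.

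To get the \emph{empirical} Rademacher complexity in the final bound rather than its expectation, I would apply McDiarmid a second time to the function $S\mapsto \Re_S(\HQNN)$, which also has bounded differences of order $C_1/n$ (using the $C_1$-boundedness of $\ell$ and the absorption of $L_1$ above). This yields $\mathbb{E}_S[\Re_S(\HQNN)]\le \Re_S(\HQNN) + C_1\sqrt{\ln(2/\delta)/(2n)}$ with probability at least $1-\delta/2$. A union bound over the two high-probability events, followed by chaining the inequalities $\Phi(S)\le \mathbb{E}\Phi + C_1\sqrt{\ln(2/\delta)/(2n)} \le 2L_1\,\mathbb{E}_S[\Re_S(\HQNN)] + C_1\sqrt{\ln(2/\delta)/(2n)} \le 2L_1\,\Re(\HQNN) + 3C_1\sqrt{\ln(2/\delta)/(2n)}$, delivers the stated bound with the precise constants $2$ and $3$.

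The main obstacle I anticipate is bookkeeping rather than conceptual: aligning the bounded-differences constants in both McDiarmid applications so that the accumulated slack matches exactly the coefficient $3C_1$ in the statement, and ensuring that the $L_1$ factor introduced by contraction is not double-counted when translating between the loss class and the hypothesis class. A secondary technical point is justifying Talagrand's contraction in the current form, which is usually stated for symmetric Lipschitz functions; I would cite the standard one-sided version that only requires Lipschitz continuity of $\ell(\cdot,\bm{y})$ for each fixed label $\bm{y}$, which is precisely the assumption made on $\ell$. No structural property of $\HQNN$ beyond being a real-valued function class is needed at this stage, so the lemma will hold for any VQA-based hypothesis class, with the complexity quantification deferred to the covering-number bounds from Theorem~\ref{thm:main-cov-num}.
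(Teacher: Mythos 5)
The paper offers no proof of this lemma at all: it is imported verbatim as Theorem~1 of Kakade--Sridharan--Tewari (equivalently, the standard empirical-Rademacher generalization bound, e.g.\ Theorem~3.3 of Mohri et al.), so the only thing to compare your argument against is the citation itself. Your reconstruction is the standard proof of that cited result --- McDiarmid on the uniform deviation $\Phi(S)$, ghost-sample symmetrization, Talagrand contraction, and a second McDiarmid step to pass from the expected to the empirical Rademacher complexity --- and it is correct in structure and in where the constants $2$ and $3$ originate. The one place where the bookkeeping as written does not quite close is the second concentration step: you propose to apply McDiarmid to $S\mapsto\Re_S(\HQNN)$ and justify the $C_1/n$ bounded-differences constant ``using the $C_1$-boundedness of $\ell$'', but $\Re_S(\HQNN)$ does not involve $\ell$; its bounded-differences constant is governed by the range of the hypotheses themselves (here $\|O\|$), and after multiplying by the factor $2L_1$ coming from symmetrization plus contraction, the accumulated slack would be $2L_1\|O\|\sqrt{\ln(2/\delta)/(2n)}$ rather than the needed $2C_1\sqrt{\ln(2/\delta)/(2n)}$. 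The standard fix --- and the one that yields exactly $3C_1$ --- is to concentrate the empirical Rademacher complexity of the \emph{loss-composed} class $\ell\circ\HQNN$, which genuinely has bounded differences $C_1/n$, obtaining $\Phi(S)\le 2\Re_S(\ell\circ\HQNN)+3C_1\sqrt{\ln(2/\delta)/(2n)}$ on the intersection of the two events, and only then apply contraction to the empirical quantity to replace $\Re_S(\ell\circ\HQNN)$ by $L_1\Re(\HQNN)$. You flagged precisely this double-counting risk yourself, so this is a reordering of steps rather than a missing idea; with that reordering your argument delivers the lemma with the stated constants.
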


\begin{proof}[Proof of Theorem 2]
The result of Lemma \ref{lem:gene-rademachier} indicates that the precondition to infer the generalization error is deriving the upper bound of the Rademacher complexity $\Re(\HQNN)$. {The matematical expression of the Rademacher complexity is
\begin{equation}
	\Re(\mathcal{H}_{\HQNN}):=n^{-1}\mathbb{E}(\sup_{h\in \mathcal{H}}\sum_{i=1}^n \epsilon_i h(x_i,y_i)),
\end{equation}   
where the expectation is over the Rademacher random variables ($\epsilon_1,...,\epsilon_n$), which are i.i.d with $\Pr[\epsilon_1=1]=\Pr[\epsilon_1=-1]=1/2$.}  
To achieve this goal, we employ \revise{the Dudley entropy integral bound \cite{dudley1967sizes}} to connect Rademacher complexity with covering number, i.e., 
\begin{equation}\label{eqn:thm:generalization-QNN-1}
	\Re(\HQNN) \leq \inf_{\alpha>0}\left(4\alpha + \frac{12}{\sqrt{n}}\int_{\alpha}^{1}\sqrt{\ln \mathcal{N}((\HQNN)_{|S},  \epsilon, \|\cdot\|_2)} d \epsilon \right), 
\end{equation} 
where $(\HQNN)_{|S}$ denotes the set of vectors formed by the hypothesis with $n$ examples, i.e., $
	 \{  [h_{\AQNN(S)}(\bm{x}^{(i)}) ]_{i=1:n} \big| \bm{\theta}\in \Theta  \}$.
	 
We first establish the relation between the covering number of $ (\HQNN)_{|S}$ and $(\HQNN)_{|\bm{x}^{(i)}}$ to derive the upper bound of $\ln \mathcal{N}((\HQNN)_{|S},  \epsilon, \|\cdot\|_2)$. As with Lemma \ref{lem:thm-cov-num-opt-set}, denote a fixed $(\epsilon/\sqrt{n})$-covering $\mathcal{S}$ for the set $(\HQNN)_{|\bm{x}^{(i)}}$. Then for any function $
	h_{\AQNN(S)}(\cdot)\in \HQNN$ in Eq.~(6), we can always find a $h'_{\AQNN(S)}(\bm{x}^{(i)})\in \mathcal{S}$ such that $\forall i\in[n]$, $|h_{\AQNN(S)}(\bm{x}^{(i)})- h'_{\AQNN(S)}(\bm{x}^{(i)})|\leq \epsilon/\sqrt{n}$, and the discrepancy $\|[h_{\AQNN(S)}(\bm{x}^{(i)}) ]_{i=1:n} - [h'_{\AQNN(S)}(\bm{x}^{(i)}) ]_{i=1:n}\|_2$ satisfies
 \begin{eqnarray}\label{eqn:append-proof-lemma2-1}
  && \left\|[h_{\AQNN(S)}(\bm{x}^{(i)}) ]_{i=1:n} - [h'_{\AQNN(S)}(\bm{x}^{(i)}) ]_{i=1:n}\right\|_2 \nonumber\\
  = && \sqrt{ \sum_{i=1}^n |h_{\AQNN(S)}(\bm{x}^{(i)})   -  h'_{\AQNN(S)}(\bm{x}^{(i)}) |^2} \nonumber\\
  \leq &&  \epsilon.
 \end{eqnarray} 
Therefore, by Definition 1, we know that $\mathcal{S}$ is a $\epsilon$-covering set for $(\HQNN)_{|S}$. This result gives  
\begin{equation}\label{eqn:append-thm2-1}
	\ln \left(\mathcal{N}\left((\HQNN)_{|S},  \epsilon, \|\cdot\|_2 \right) \right) 
\leq    \ln \left(\mathcal{N}\left((\HQNN)_{|\bm{x}^{(i)}},  \frac{\epsilon}{\sqrt{n}}, |\cdot|  \right) \right).
\end{equation}

The right hand-side in Eq.~(\ref{eqn:append-thm2-1}) can be further upper bounded as
\begin{eqnarray}
	&&  \ln \left(\mathcal{N}\left((\HQNN)_{|\bm{x}^{(i)}},  \frac{\epsilon}{\sqrt{n}}, |\cdot|  \right) \right)   \nonumber\\
\leq  &&  \ln \left( \left(\frac{7 \sqrt{n} N_{gt}  \|O\| }{\epsilon} \right)^{d^{2k}N_{gt}} \right)  \nonumber\\
=  && d^{2k}N_{gt} \ln \left(  \frac{7 \sqrt{n}N_{gt}  \|O\|}{\epsilon} \right),
\end{eqnarray}
where the first inequality can be easily derived based on the proof of Lemma \ref{lem:thm-cov-num-opt-set} and the second inequality uses the result of Theorem 1. To this end, the integration term in Eq.~(\ref{eqn:thm:generalization-QNN-1}) follows
\begin{eqnarray}\label{eqn:thm:generalization-QNN-4}
	&& \frac{12}{\sqrt{n}}\int_{\alpha}^{1}\sqrt{\ln \mathcal{N}((\HQNN)_{|S},  \epsilon, \|\cdot\|_2)} d\epsilon \nonumber\\
	\leq &&	\frac{12}{\sqrt{n}}\int_{\alpha}^{1}   \sqrt{d^{2k}N_{gt} \ln \left(  \frac{7\sqrt{n} N_{gt}  \|O\|}{\epsilon} \right)} d\epsilon \nonumber\\
	\leq && \frac{12}{\sqrt{n}}\int_{\alpha}^{1}   d^{k}\sqrt{N_{gt}} \ln \left(  \frac{7 \sqrt{n} N_{gt}  \|O\|}{\epsilon} \right)  d\epsilon \nonumber\\
	= && \frac{12}{\sqrt{n}} d^{k}\sqrt{N_{gt}}\epsilon\left(\ln\left(\frac{7 \sqrt{n} N_{gt}  \|O\|}{\epsilon}\right) + 1 \right)\Big|_{\epsilon=\alpha}^{1} \nonumber\\
	= && \frac{12}{\sqrt{n}} d^{k}\sqrt{N_{gt}} \left(\ln\left(7 \sqrt{n} N_{gt}  \|O\| \right)+1\right) -  \frac{12}{\sqrt{n}} d^{k}\sqrt{N_{gt}}\alpha\left(\ln\left(\frac{7 \sqrt{n} N_{gt}  \|O\|}{\alpha}\right)+1\right)
\end{eqnarray}
  where the first inequality employs the upper bound of the covering number of $\HQNN$ in Theorem 2, and the second inequality uses the monotony of integral.  
  
 For simplicity, we set $\alpha=1/\sqrt{n}$ in Eq.~(\ref{eqn:thm:generalization-QNN-1}) and then the Rademacher complexity $\Re(\HQNN)$ is upper bounded by
  \begin{equation}\label{eqn:thm-gene-qnn-2}
  	\Re(\HQNN) \leq \frac{4}{\sqrt{n}} +  \frac{12}{\sqrt{n}} d^{k}\sqrt{N_{gt}} \left(\ln\left(7 \sqrt{n} N_{gt} \|O\| \right)+1\right).
  \end{equation}

In conjunction Lemma \ref{lem:gene-rademachier} with Eq.~(\ref{eqn:thm-gene-qnn-2}), with probability $1-\delta$, the generalization bound of QNN yields
  \begin{equation}
  	 \mathcal{R}(\mathcal{A}(S))-\hat{\mathcal{R}}_S(\mathcal{A}(S))  \leq        \frac{8L_1}{\sqrt{n}} +  \frac{24L_1}{\sqrt{n}} d^{k}\sqrt{N_{gt}} \left(\ln\left(7 \sqrt{n} N_{gt} \|O\| \right)+1\right) +   3C_1\sqrt{\frac{\ln(1/\delta)}{2n}}.
  \end{equation}   
\end{proof}

\section{Expressivity of other advanced quantum neural networks}\label{append:express-QNN-other}
To better understand how the covering number effects the expressivity of VQAs, in this section, we explicitly quantify the covering number of QNNs with several representative Ans\"atze, i.e., the hardware-efficient Ans\"atze, the tensor-network based Ans\"atze with the matrix product state structure, and the tensor-network based Ans\"atze with the tree structure. 

\textbf{Hardware-efficient Ans\"atze.} We first quantify the expressivity of QNN proposed by \cite{havlivcek2019supervised}, where $\hat{U}(\bm{\theta})$ is implemented by the hardware-efficient Ans\"atze, under the both the ideal and NISQ settings. An $N$-qubits hardware-efficient Ansatz is composed of $L$ layers, i.e., $U(\bm{\theta})=\prod_{l=1}^LU(\bm{\theta}^l)$ with $L\sim poly(N)$. For all layers, the arrangement of quantum gates in $U(\bm{\theta}^{l})$ is identical, which generally consists of parameterized single-qubit gates and fixed two-qubit gates. Moreover,  each qubit is operated with at least one parameterized single-qubit gate, and two qubits gates within the layer can adaptively connect two qubits depending on the qubits connectivity of the employed quantum hardware. An example of the $7$-qubits hardware-efficient Ansatz is illustrated in the left panel of Fig.~\ref{fig:QNN-hardware}. The parameterized single-qubit gate $U$ can be realized by the rotational qubit gates, e.g., $U\in \{R_X(\theta), R_Y(\theta), R_Z(\theta)\}$ or $U=R_Z(\beta)R_Y(\gamma)R_Z(\nu)$ with $\theta, \gamma, \beta, \nu\in[0, 2\pi)$.  The topology of two-qubit gates, i.e., CNOT gates, aims to adapt to the chain-like connectivity restriction. 

The hardware-efficient Ansatz considered here is the most general case.  Specifically, the single-qubit gate $U$ contains three trainable parameters and the number of two-qubit gates in each layer is set as $N$. Under this setting, the total number of quantum gates in $U(\bm{\theta})=\prod_{l=1}^LU(\bm{\theta}^l)$ is
\begin{equation}
N_g=	L(3N+N) = 4LN.
\end{equation}
Based on the above settings, we achieve the expressivity of QNN with the hardware-efficient Ans\"atze, supported by Theorem 1 and Proposition 1. 
\begin{corollary}\label{coro:QNN} 
	Under the ideal setting, the covering number of QNN with the hardware-efficient Ansatz is upper bounded by $\mathcal{N}(\HQNN,  \epsilon, |\cdot|) \leq  (\frac{21 NL \|O\|}{\epsilon} )^{6NL}$. When the system noise is considered and simulated by the depolarization channel, the corresponding covering number is upper bounded by $\mathcal{N}(\widetilde{\HQNN},  \epsilon, |\cdot|) \leq  (1-p)^{4NL}   (\frac{21 NL \|O\|}{\epsilon} )^{6NL}$.
\end{corollary}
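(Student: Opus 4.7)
The plan is to obtain both bounds by a direct specialization of Theorem~1 and Proposition~1 to the hardware-efficient Ansatz. The work lies almost entirely in identifying the correct gate counts together with the correct value of the locality parameter $k$; once these are fixed, the two inequalities of the corollary are obtained by pure arithmetic.

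First, I would extract the structural parameters from the preceding description of the Ansatz. For the $N$-qubit $L$-layer hardware-efficient Ansatz in which each single-qubit parameterized unitary is Euler-decomposed as $U = R_Z(\beta) R_Y(\gamma) R_Z(\nu)$, every layer contains $3N$ trainable single-qubit rotational gates together with $N$ fixed CNOT entanglers. Aggregating over $L$ layers yields
\begin{equation}
N_{gt} = 3NL, \qquad N_g = 4NL,
\end{equation}
while the qudit dimension is $d = 2$ and every trainable gate is single-qubit, so the locality entering the covering argument is $k = 1$.

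Second, for the ideal case I would plug these quantities into the bound of Theorem~1,
\begin{equation}
\mathcal{N}(\HQNN, \epsilon, |\cdot|) \leq \left( \frac{7 N_{gt} \|O\|}{\epsilon} \right)^{d^{2k} N_{gt}},
\end{equation}
and collect constants to recover the stated $(21 NL \|O\|/\epsilon)^{6NL}$. For the depolarization case I would further substitute $N_g = 4NL$ into the bound of Proposition~1,
\begin{equation}
\mathcal{N}(\widetilde{\HQNN}, \epsilon, |\cdot|) \leq (1-p)^{N_g} \left( \frac{7 N_{gt} \|O\|}{\epsilon} \right)^{d^{2k} N_{gt}},
\end{equation}
which simply attaches the noise-contraction prefactor $(1-p)^{4NL}$ to the ideal bound.

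The main obstacle is the subtle handling of the locality parameter $k$. Although the circuit contains two-qubit CNOTs, these are fixed and therefore are \emph{not} among the factors being covered in the proof of Lemma~2 in Appendix~A; a careful inspection of that proof shows that $k$ should be read as the maximum arity over the \emph{trainable} gates only, with the fixed entanglers absorbed as unit-norm prefactors into the triangle-inequality step of Eq.~(A5). Once this refinement is made, the identification $k = 1$ is justified and both bounds reduce to the direct substitutions above.
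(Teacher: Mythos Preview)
Your strategy---identify $N_{gt}$, $N_g$, $d$, $k$ for the hardware-efficient Ansatz and substitute directly into Theorem~1 and Proposition~1---is exactly how the paper obtains the corollary; no separate proof is given, and the result is meant to follow from the gate count $N_g=4NL$ computed just above it. Your refinement that $k$ should be the maximum arity among the \emph{trainable} gates is also sound and matches how the covering set $\tilde{\mathcal S}$ is constructed in the proof of Lemma~2.

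That said, the step ``collect constants to recover the stated $(21NL\|O\|/\epsilon)^{6NL}$'' does not go through. With $d=2$, $k=1$, $N_{gt}=3NL$, the exponent in Theorem~1 is
\[
d^{2k}N_{gt}=2^{2}\cdot 3NL=12NL,
\]
not $6NL$; the printed $6NL$ would require $d^{2k}=2$, which is impossible for $d=2$ and integer $k\ge 1$. (The same mismatch recurs in the MPS and tree corollaries of Appendix~D.) The base $21NL\|O\|$ comes out correctly, but the exponent your substitution actually yields is $12NL$. This appears to be an arithmetic inconsistency in the paper itself---most plausibly $d^{k}$ was tacitly used in place of $d^{2k}$ when specializing---rather than a defect of your method; nevertheless you should not assert that the substitution reproduces the printed exponent, but instead flag the discrepancy.
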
   
 
  \begin{figure}
 	\centering
 	\includegraphics[width=0.98\textwidth]{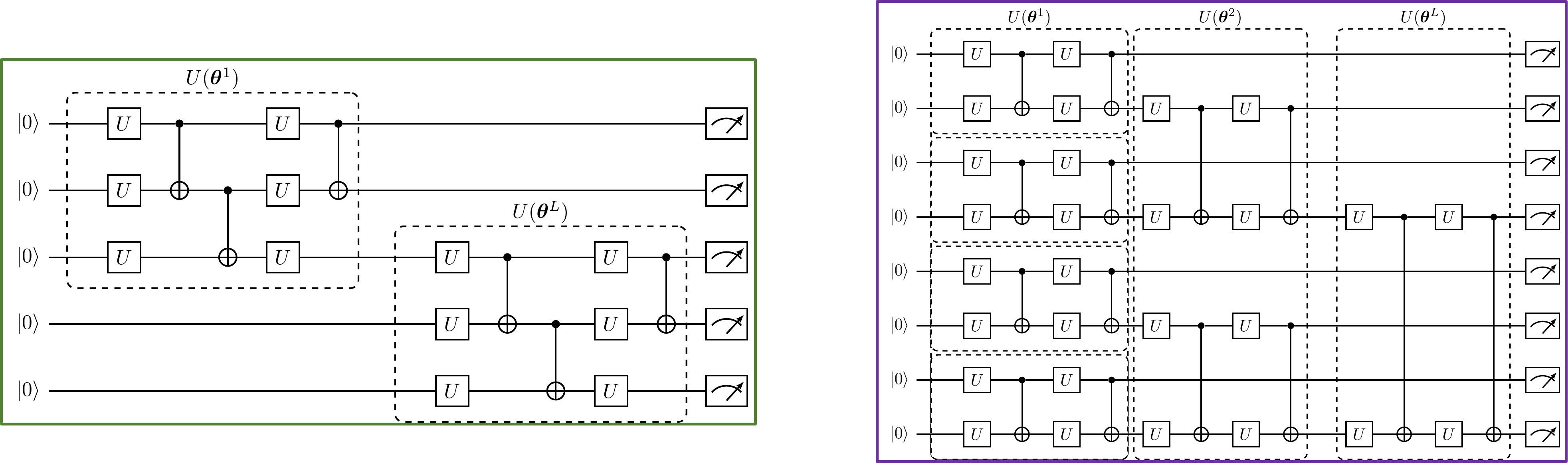}
 	\caption{\small{\textbf{Illustration of two tensor-network based Ans\"atze used in QNNs.} The left panel presents the  tensor network based Ans\"atze with the matrix product state structure (highlighted by the green box), and the right panel refers to the tensor network based Ans\"atze with the tree structure (highlighted by the purple box). }}
 	\label{fig:Ansatz-QNN}
 \end{figure} 
 
\textbf{Tensor-network based Ans\"atze with the matrix product state structure.} Another type of Ans\"atze  inherits the tensor-network structures, i.e., matrix product states and tree tensor network \cite{huggins2019towards}. The left panel in Fig.~\ref{fig:Ansatz-QNN} illustrates the tensor-network based Ans\"atze with the matrix product state structure. Mathematically, for an $N$-qubit quantum circuit, the corresponding Ans\"atze yields
\begin{equation}
	\hat{U}(\bm{\theta})= \prod_{l=1}^{L}\left(\mathbb{I}_{2^{(M_1-1) * (l-1)} }\otimes U(\bm{\theta}^l)\otimes \mathbb{I}_{2^{N - 1 - (M_1-1) * l}} \right),
\end{equation}
where $U(\bm{\theta}^l)$ is applied to $M_1$ qubits for $\forall l \in [L]$ with $2\leq M_1 < N$. The topology as shown in Fig.~\ref{fig:Ansatz-QNN} indicates that the maximum circuit depth of the tensor-network based Ans\"atze with the matrix product state structure is $L=\lceil N/(M_1-1) \rceil$. Suppose that the total number of single-qubit and two-qubit quantum gates in $U(\bm{\theta}^l)$ is $3M_1$ and $M_1$ respectively, we have 
\begin{equation}
	N_g = 4M_1 \lceil N/(M_1-1) \rceil \leq 4(N+M_1+N/M_1) \leq 4(N+2\sqrt{N}).  
\end{equation} 
Based on the above settings, we achieve the expressivity of QNN with tensor-network based Ans\"atze with the matrix product state structure, supported by Theorem 1 and Proposition 1. 
\begin{corollary}\label{coro:QNN-MPS} 
	Under the ideal setting, the covering number of QNN with tensor-network based Ans\"atze with the matrix product state structure is upper bounded by $\mathcal{N}(\HQNN,  \epsilon, |\cdot|) \leq  \left(\frac{21(N+2\sqrt{N}) \|O\|}{\epsilon} \right)^{6(N+2\sqrt{N})}$. When the system noise is considered and simulated by the depolarization channel, the corresponding covering number is upper bounded by $\mathcal{N}(\widetilde{\HQNN},  \epsilon, |\cdot|) \leq  (1-p)^{4(N+2\sqrt{N})}   \left(\frac{21(N+2\sqrt{N}) \|O\|}{\epsilon} \right)^{6(N+2\sqrt{N})}$.
\end{corollary}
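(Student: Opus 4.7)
The corollary is a direct specialization of Theorem~\ref{thm:main-cov-num} and Proposition~\ref{prop:cov-vqa-noise} to the MPS-structured ansatz shown in the left panel of Figure~\ref{fig:Ansatz-QNN}. The plan is therefore to extract explicit expressions for $N_{gt}$ and $N_g$ for this ansatz as functions of $N$, and then substitute them into the generic covering-number bounds already established, with no genuinely new statistical ingredient required.

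First I would enumerate the gates. Each layer $U(\bm{\theta}^l)$ acts on $M_1$ consecutive qubits and, under the convention announced just above the corollary, consists of $3M_1$ trainable single-qubit rotations together with $M_1$ fixed two-qubit entanglers. With $L = \lceil N/(M_1-1)\rceil$ layers stacked in the staircase pattern dictated by the MPS topology, this yields $N_{gt} = 3 M_1 L$ trainable gates and $N_g = 4 M_1 L$ gates in total.

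Next I would optimize the loose dependence on the block width $M_1$. Bounding $\lceil N/(M_1-1)\rceil \leq N/(M_1-1)+1$ and collecting terms gives $N_g \leq 4(N + M_1 + N/M_1)$. The AM-GM inequality $M_1 + N/M_1 \geq 2\sqrt{N}$ attains equality at the balanced choice $M_1 \approx \sqrt{N}$, so $N_g \leq 4(N+2\sqrt{N})$ and, by the same argument, $N_{gt} \leq 3(N+2\sqrt{N})$. Finally, I would substitute these two counts into Theorem~\ref{thm:main-cov-num} and Proposition~\ref{prop:cov-vqa-noise}: the ideal-case bound $\bigl(7 N_{gt}\|O\|/\epsilon\bigr)^{d^{2k}N_{gt}}$ reproduces the form $\bigl(21(N+2\sqrt{N})\|O\|/\epsilon\bigr)^{6(N+2\sqrt{N})}$ announced in the statement, and the depolarization bound inherits the prefactor $(1-p)^{N_g}$ with the same upper bound on $N_g$.

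The main obstacle is essentially bookkeeping: one must verify that the staircase layout indeed yields exactly $L = \lceil N/(M_1-1)\rceil$ layers, since neighbouring blocks share one qubit (this is what forces the $M_1-1$ in the denominator rather than $M_1$), and then balance the two conflicting contributions $M_1$ and $N/M_1$ via AM-GM. Once this counting is in place, Theorem~\ref{thm:main-cov-num} and Proposition~\ref{prop:cov-vqa-noise} discharge the rest, so no additional analytic work beyond the gate enumeration and a single AM-GM optimisation is needed.
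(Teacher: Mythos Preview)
Your proposal is correct and follows essentially the same route as the paper: count $N_{gt}=3M_1L$ and $N_g=4M_1L$ with $L=\lceil N/(M_1-1)\rceil$, bound $4M_1L\le 4(N+M_1+N/M_1)$, take the balanced block width $M_1\approx\sqrt{N}$ so that $M_1+N/M_1=2\sqrt{N}$, and then plug directly into Theorem~\ref{thm:main-cov-num} and Proposition~\ref{prop:cov-vqa-noise}. The paper's own argument is exactly this gate-counting-plus-substitution, stated in the text immediately preceding the corollary with no additional ingredients.
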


\textbf{Tensor-network based Ans\"atze with the tree structure.} The right panel in Fig.~\ref{fig:Ansatz-QNN} illustrates the tensor-network based Ans\"atze with tree structure.  Intuitively, the involved number of quantum gates is exponentially decreased in terms of $l\in[L]$. Suppose that the local unitary, as highlighted by the dotted box in the right panel of Fig.~\ref{fig:Ansatz-QNN} with $l=1$, contains six single-qubit gates (i.e., each qubit is operated with $R_Z(\beta)R_Y(\gamma)R_Z(\nu)$) and one two-qubit gates. Then for an $N$-qubit quantum circuit, the total number of quantum gates in $\hat{U}$ is 
\begin{equation}
	N_g = 7 \lceil N/2 \rceil + 7 \lceil N/4 \rceil + ... + 7 *2 \leq 7 N.
\end{equation} 
Based on the above settings, we achieve the expressivity of QNN with tensor-network based Ans\"atze with the matrix product state structure, supported by Theorem 1 and Proposition 1. 
\begin{corollary}\label{coro:QNN-tree} 
	Under the ideal setting, the covering number of QNN with tensor-network based Ans\"atze with the matrix product state structure is upper bounded by $\mathcal{N}(\HQNN,  \epsilon, |\cdot|) \leq  \left(\frac{73.5N \|O\|}{\epsilon} \right)^{10.5N}$. When the system noise is considered and simulated by the depolarization channel, the corresponding covering number is upper bounded by $\mathcal{N}(\widetilde{\HQNN},  \epsilon, |\cdot|) \leq  (1-p)^{7N}   \left(\frac{73.5N \|O\|  }{\epsilon} \right)^{10.5N}$.
\end{corollary}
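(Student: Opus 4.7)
Corollary~\ref{coro:QNN-tree} will be proved by direct substitution into Theorem~\ref{thm:main-cov-num} for the ideal case and Proposition~\ref{prop:cov-vqa-noise} for the depolarization case. Both already apply to an arbitrary ans\"atze $\hat{U}(\bm{\theta})$, so all that is required is to read off the structural parameters $N_g$, $N_{gt}$, $d$, and $k$ for a tree tensor-network circuit of the form drawn in the right panel of Fig.~\ref{fig:Ansatz-QNN} and plug them in.

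First, I will tabulate the circuit layout. The tree processes $N$ qubits in $\lceil\log_2 N\rceil$ levels; level $l$ contains $\lceil N/2^l\rceil$ local two-qubit unitaries, and each local unitary is built from six single-qubit rotation gates (a $Z$--$Y$--$Z$ decomposition on each of the two wires entering it) plus one fixed entangling two-qubit gate, for seven elementary gates in total. Summing the geometric series $\sum_{l\ge 1}\lceil N/2^l\rceil$ gives $N_g \le 7N$, with the trainable single-qubit rotations dominating the count and determining $N_{gt}$. We are in the qubit setting $d=2$, and the maximum locality is $k=2$ because the entangling gate acts on two qubits.

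With these counts fixed, the ideal bound is immediate from Theorem~\ref{thm:main-cov-num}, which gives
\[
\mathcal{N}(\HQNN,\epsilon,|\cdot|) \;\le\; \left(\frac{7 N_{gt}\|O\|}{\epsilon}\right)^{d^{2k} N_{gt}},
\]
and this collapses to the advertised $(73.5 N\|O\|/\epsilon)^{10.5N}$ form after collecting constants. For the noisy scenario, the depolarization half of Proposition~\ref{prop:cov-vqa-noise} multiplies the same expression by $(1-p)^{N_g}$; bounding $N_g\le 7N$ yields the stated $(1-p)^{7N}$ prefactor.

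The substantive work is therefore not analytical but combinatorial: one must pin down $N_g$ and $N_{gt}$ carefully enough that the specific constants $73.5$ and $10.5$ emerge rather than being absorbed into a loose $\mathcal{O}(N)$. In particular, attention must be paid to the ceiling functions for $N$ that is not a power of two, and to the bookkeeping of which intra-block rotations are trainable versus fixed so that the prefactor and exponent align exactly with the statement. Once those counts are secured, no further argument is needed beyond substitution into Theorem~\ref{thm:main-cov-num} and Proposition~\ref{prop:cov-vqa-noise}.
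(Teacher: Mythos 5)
Your overall route is the same as the paper's: Appendix~D proves Corollary~\ref{coro:QNN-tree} purely by counting the gates of the tree tensor-network ansatz, obtaining $N_g = 7\lceil N/2\rceil + 7\lceil N/4\rceil + \dots \leq 7N$, and then substituting into Theorem~\ref{thm:main-cov-num} and Proposition~\ref{prop:cov-vqa-noise}. You correctly recognize that the statement concerns the tree structure (the phrase ``matrix product state structure'' in the corollary text is a copy-paste slip), and your gate count agrees with the paper's.

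The genuine gap is precisely the step you defer: the passage from the counts to the constants $73.5$ and $10.5$. You assert that the bound ``collapses to the advertised form after collecting constants,'' but with the parameters you commit to this is false. Taking $k=2$ (your choice, justified by the entangling gate) gives $d^{2k}=16$, and with $N_{gt}\leq 6N$ trainable single-qubit rotations Theorem~\ref{thm:main-cov-num} yields $\left(\frac{42N\|O\|}{\epsilon}\right)^{96N}$; even if one argues from the covering construction in Lemma~\ref{lem:thm-cov-num-opt-set} that only the \emph{trainable} gates need to be covered, so that $k=1$ and $d^{2k}=4$, one obtains $\left(\frac{42N\|O\|}{\epsilon}\right)^{24N}$. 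Neither matches $\left(\frac{73.5N\|O\|}{\epsilon}\right)^{10.5N}$, and no consistent assignment of $N_{gt}\leq N_g\leq 7N$ and $d^{2k}$ reproduces both the prefactor and the exponent simultaneously (matching the prefactor forces $N_{gt}=10.5N>N_g$). So the ``purely combinatorial'' step is not routine bookkeeping; it is where the claimed statement either succeeds or fails, and your proposal leaves it unresolved while committing to parameter values that contradict the target. Separately, note that bounding $N_g\leq 7N$ gives $(1-p)^{N_g}\geq(1-p)^{7N}$, so the inequality runs the wrong way for replacing $(1-p)^{N_g}$ by $(1-p)^{7N}$ in an upper bound; the paper's own presentation shares this issue, but you should not present it as a valid deduction.
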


 \section{Numerical simulation details of QNN}\label{append:QNN}

\textbf{Implementation.} The implementation of QNN employed in the numerical simulations is shown in the right panel of Fig.~\ref{fig:QNN-hardware}. In particular, the qubit encoding method \cite{larose2020robust} is exploited to load classical data into quantum forms. The explicit form of the encoding circuit is
 \begin{equation}\label{eqn:append-QNN-encoding}
 	U_E(\bm{x}^{(i)}) = U_{\text{Eng}} \left(\bigotimes_{j=1}^7 R_Y(\bm{x}^{(i)}_j)\right) U_{\text{Eng}} \left(\bigotimes_{j=1}^7 R_Y(\bm{x}^{(i)}_j)\right),  
 \end{equation}      
 where the unitary $U_{\text{Eng}}$ is formed by CNOT gates as shown in Fig.~\ref{fig:QNN-hardware}. The parameterized single-qubit qubit used in the Ans\"atze yields $U(\bm{\theta}^{l}_j)=R_Z(\beta)R_Y(\gamma)R_Z(\nu)$ for $\forall j\in[N]$ and $\forall l\in[L]$, where $\beta, \gamma, \nu\in[0, 2\pi)$ are independent trainable parameters.

\begin{figure}
	\centering
\includegraphics[width=0.82\textwidth]{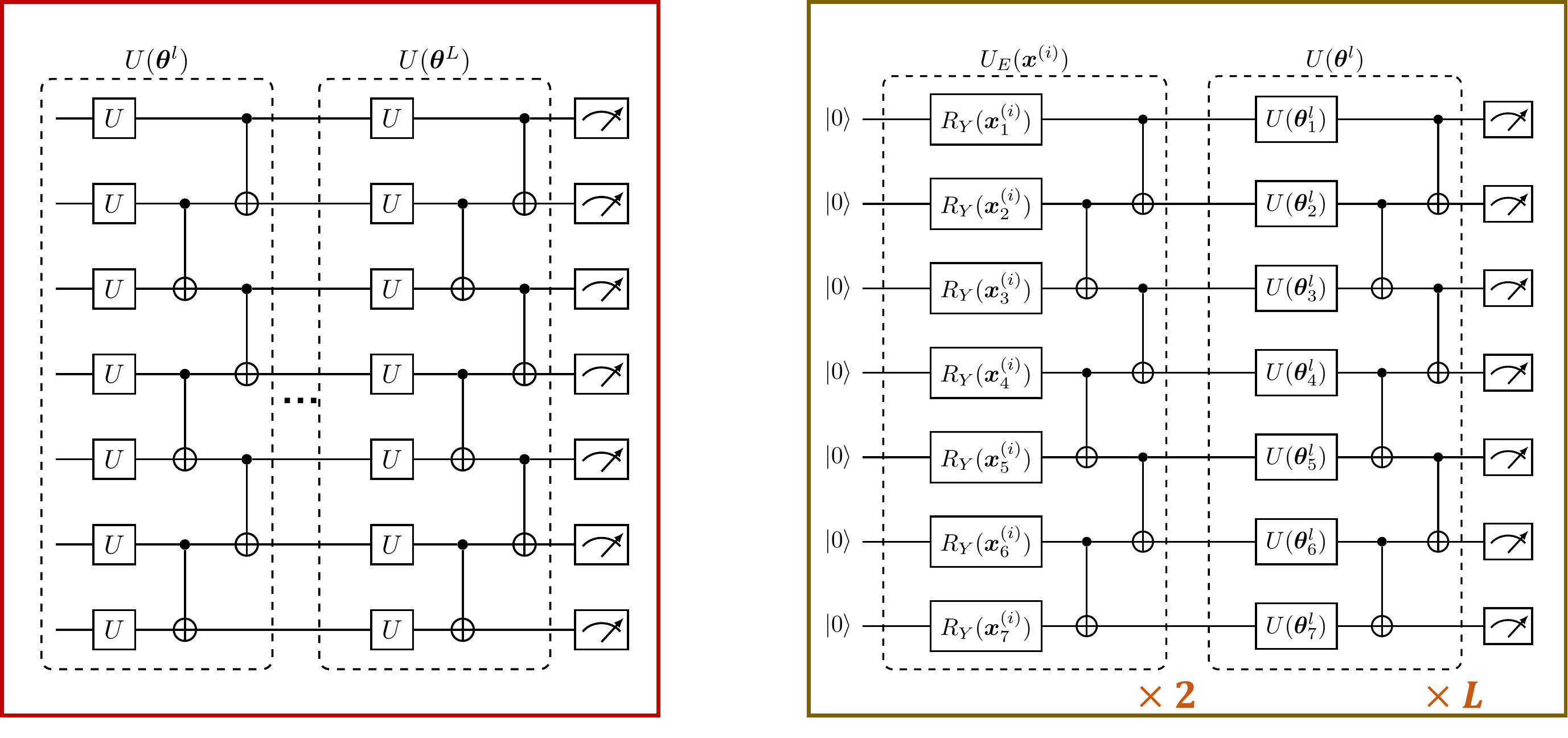}
	\caption{\small{\textbf{QNN with the hardware-efficient Ans\"atze}. The left panel depicts the $7$-qubit hardware-efficient Ansatz (highlighted by the red box). The right panel illustrates the implementation of QNN used in the numerical simulation (highlighted by the yellow box). }}
	\label{fig:QNN-hardware}
\end{figure}
\textbf{Data construction.} The construction of the synthetic dataset $S=\{\bm{x}^{(i)},y^{(i)}\}_{i=1}^{n}$  imitates the studies \cite{Du_2021_grover,havlivcek2019supervised}. Specifically, for each example, the feature dimension of $\bm{x}^{(i)}$ is set as $7$, i.e.,  $\bm{x}^{(i)}= [\omega_1^{(i)},\omega_2^{(i)}, \omega_3^{(i)}, \omega_4^{(i)}, \omega_5^{(i)}, \omega_6^{(i)}, \omega_7^{(i)}]^{\top} \in[0, 2\pi)^7$, and the label $y^{(i)}\in\{0, 1\}$ is binary. The assignment of the label $y^{(i)}$ is accomplished as follows. Define $V\in \mathcal{SU}(2^7)$ as a fixed unitary operator, $O=\mathbb{I}_{2^6}\otimes \ket{0}\bra{0}$ as the measurement operator, and the gap threshold $\Delta$ is set as $0.2$. The label of $\bm{x}^{(i)}$ is assigned as `1' if 
\begin{equation}
	\langle 0^{\otimes 7} |U_E(\bm{x}^{(i)})^{\dagger}V^{\dagger}O VU_E(\bm{x}^{(i)})|0^{\otimes 7} \rangle \geq 0.5 + \Delta;
\end{equation}
The label of $\bm{x}^{(i)}$ is assigned as `0' if 
\begin{equation}
	\langle 0^{\otimes 7} |U_E(\bm{x}^{(i)})^{\dagger}V^{\dagger}O VU_E(\bm{x}^{(i)})|0^{\otimes 7} \leq 0.5 -\Delta.
\end{equation}
We note that $V$ is realized by the Ans\"atze   $U(\bm{\theta}^*)=\prod_{l=1}^2U(\bm{\theta}^{*l})$ shown in Fig.~\ref{fig:QNN-hardware}, where the corresponding parameters $\bm{\theta}^*$ are sampled with the random seed `1'. This setting ensures the target concept $V$ is always covered by the hypothesis space $\HQNN$ once $L\geq 2$.  

Based on the construction rule in the above two equations, we collect the dataset $S$ with $n=400$, where the positive and negative examples are equally distributed. We illustrate some examples of $S$ in the left panel of Fig.~\ref{fig:noisy-QNN}. Given access to $S$, we split the dataset into the training datasets with size $60$ and the test dataset with $340$.  

\textbf{Hyper-parameters setting.}  The hyper-parameters setting used in our experiment is as follows.  At each epoch, we shuffle the training set in $S$.  An epoch means that an entire dataset is passed forward through the quantum learning model, e.g., when the dataset contains $1000$ training examples, and only two examples are fed into the quantum learning model each time, then it will take $500$ iterations to complete $1$ epoch. The learning rate is set as $\eta = 0.2$. The batch gradient descent method is adopted to be the optimizer with batch size equal to $4$. 

\begin{figure*}
	\centering
\includegraphics[width=0.98\textwidth]{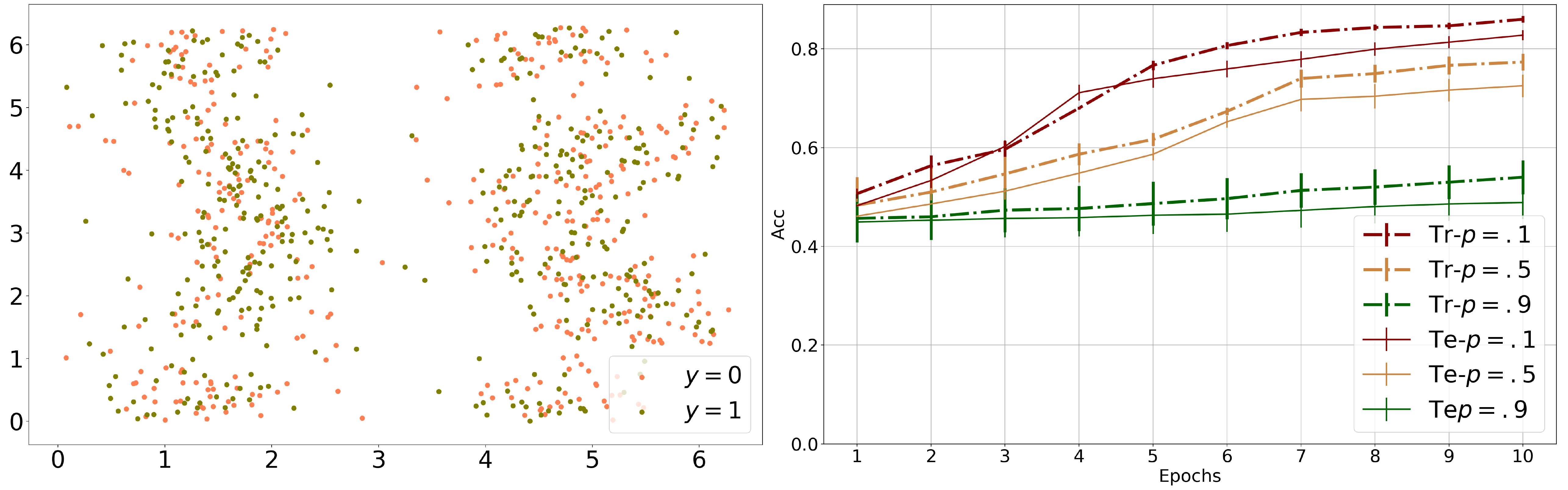}
\caption{\small{\textbf{The synthetic dataset and simulation results of noisy QNN.} The left plot illustrates the first two dimensions of the data points, where the green dots (pink dots) correspond to the data with label `1' (`0'). The right plot exhibits the learning performance of QNN when the depolarization noise is considered. The label `Tr-$p=a$' refers to the training accuracy of QNN when the depolarization rate is set as $p=a$. Similarly, the label `Te-$p=a$' refers to the test accuracy of QNN when the depolarization rate is set as $p=a$. }}
\label{fig:noisy-QNN}
\end{figure*}

\textbf{The performance of noisy QNNs.} Here we apply noisy QNN to learn the synthetic data $S$ introduced above to validate the correctness of Proposition 1. In particular, all settings, i.e., the employed Ans\"atze, the optimizer, and the hyper-parameters,    are identical to the noiseless case, except that the employed quantum circuit is interacted with the depolarization noise. With the aim of understanding how the depolarization rate $p$ shrinks the expressivity of $\HQNN$, we set the layer number of the hardware-efficient Ansatz as $L=2$ and the depolarization rate as $p\in\{0.1, 0.5, 0.9\}$. We repeat each setting with $5$ times to collect the statistical results.        

The simulation results are presented in the right panel of Fig.~\ref{fig:noisy-QNN}. Recall that  the training performance of the noiseless QNN with $L=2$ is above $85\%$ at the $10$-th epoch, as shown in Fig.~3. Meanwhile, the construction rule of $S$ indicates that the target concept is contained in $U(\bm{\theta} )=\prod_{l=1}^2U(\bm{\theta}^{ l})$. However, the   results in Fig.~\ref{fig:noisy-QNN} reflect that both the training and test accuracies continuously degrade in terms of the increased $p$. When $p=0.9$, the learning performance is around $50\%$, which is no better than the random guess. These observations accord with Proposition 1 such that an increased depolarization rate suppresses the expressivity of $\HQNN$ and excludes the target concept out of the hypothesis space, which leads to a poor learning performance. 

\section{Proof of Corollary 1}\label{append:VQE-UCCSD}
For completeness, let us first briefly introduce the unitary coupled-cluster Ans\"atze truncated up to single and double excitations (UCCSD)  before presenting the proof of Corollary 1. Please refer to Refs. \cite{cao2019quantum,romero2018strategies} for comprehensive explanations. UCCSD belongs to a special type of unitary coupled-cluster (UCC) operator,  which takes the form $e^{T-T^{\dagger}}$, where $T$ corresponds to excitation operators defined for the configuration interaction. Since the unitary $e^{T-T^{\dagger}}$ is difficult to implement on quantum computers,  an alternative Ans\"atze is truncating UCC up to single and double excitations, as so-called UCCSD, which can be used to accurately describe many molecular systems and is exact for systems with two electrons.   Mathematically, UCCSD estimates $T$ by $T_1+T_2$. The study \cite{cao2019quantum} has indicated that for both the Bravyi-Kitaev and the Jordan-Wigner transformations, the required number of quantum gates to implement UCCSD is upper bounded by $N_g\sim O(N^5)$. 

\begin{proof}[Proof of Corollary 1]
	The results of Corollary 1 can be immediately achieved by substituting $N_g\sim O(N^5)$ as explained above with Theorem 2 and Proposition 1. 
\end{proof}
  
\section{Numerical simulation details of VQE}\label{append:VQE}
\textbf{Implementation.} The implementation of VQEs employed in numerical simulations is shown in Fig.~\ref{fig:impt-VQE}. Based on the results in Theorem 1, we control the involved number of quantum gates to separate the expressivity of different Ans\"atze. In particular, the Ans\"atze as shown in left panel has a restricted expressivity, which only contains a single trainable quantum gate. The Ans\"atze as shown in middle panel has a modest expressivity, where $U(\bm{\theta}^{l}_j)=R_Z(\beta)R_Y(\gamma)R_Z(\nu)$ for $\forall j\in[N]$ and $\forall l\in[L]$. In other words, the total number of trainable quantum gates is $12$. Note that an Ans\"atze is sufficient to locate the minimum energy of $H$. The Ans\"atze as shown in middle panel has an overwhelming expressivity. Compared with the   Ans\"atze with the modest expressivity, the number of trainable quantum gates scales by four times. Such an over-parameterized model may suffer from the training hardness, caused by the barren plateaus phenomenon.

\textbf{The qubit Hamiltonians of the hydrogen molecule.} 
The Bravyi-Kitaev transformation \cite{bravyi2002fermionic} is used to attain the qubit Hamiltonian of the hydrogen molecule at each bond length. The mathematical form of the obtained qubit Hamiltonian yields
\begin{eqnarray}
	H = &&  f_0 \mathbb{I}  + f_1Z_0 + f_2Z_1 + f_3Z_2 +  f_1Z_0Z_1 + f_4Z_0Z_2 + f_5Z_1Z_3 + f_6X_0Z_1X_2 + f_6Y_0Z_1Y_2 + f_7Z_0Z_1Z_2 \nonumber\\
	&&  + f_4Z_0Z_2Z_3 + f_3Z_1Z_2Z_3 + f_6X_0Z_1X_2Z_3 + f_6Y_0Z_1Y_2Z_3 + f_7Z_0Z_1Z_2Z_3,
\end{eqnarray} 
where $\{X_i,Y_i,Z_i\}$ stands for applying the Pauli operators on the $i$-th qubit and the coefficients $\{f_j\}_{j=1}^7$  are determined by the bond length. In the numerical simulations, we use OpenFermion Library \cite{mcclean2020openfermion}  to load these coefficients.   
 
 \begin{figure*}
\centering
\includegraphics[width=0.98\textwidth]{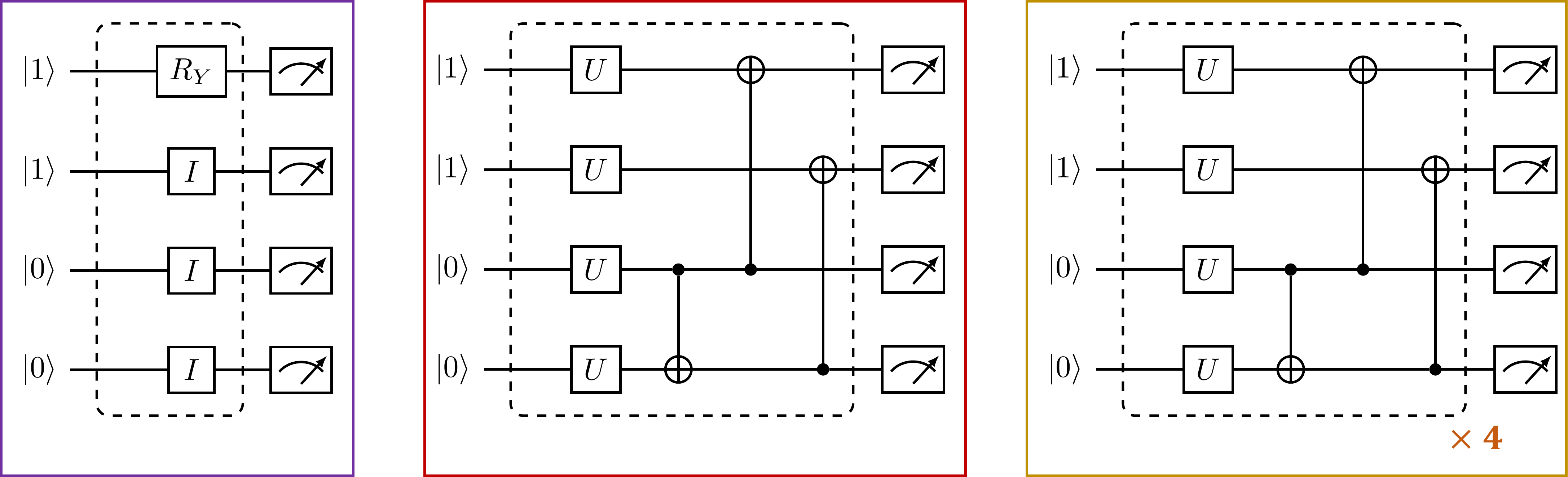}	
\caption{\small{\textbf{Implementation of  VQEs with different Ans\"atze}. The left, middle, and right panels depict the construction of VQE with restricted, modest, and overwhelming expressivity. The subscript `$\times 4$' in the right panel means repeating the circuit architecture in the dotted box with four times.  }}
\label{fig:impt-VQE}
\end{figure*}

\textbf{Hyper-parameters setting.}  The hyper-parameters setting related to the optimization of VQEs is as follows.  The total number of iteration is set as $300$. The tolerant error is set as $10^{-6}$. The gradient descent optimizer is adopted and the learning rate is set as $\eta = 0.4$. The random seed used to initialize trainable parameters is set as $0$. 

{
\textbf{The performance of VQEs with the Adam optimizer.} We conduct additional numerical simulations  to explore how the expressivity of Ans\"atze affects the performance of VQEs when the adaptive optimizer is adopted. More precisely, we aim to investigate whether VQE with the over-parameterized Ansatz can outperform VQE with the modest Ansatz when the adaptive optimizer is adopted. The appended numerical simulations mainly follow the setup introduced in the main text. In particular, the hardware-efficient VQEs with the different layer number $L$ are employed to estimate the ground state energy of the Hydrogen molecule with varied bond length ranging from $0.3 \mathring{A}$ to $2.1\mathring{A}$. Notably, we substitute the SGD optimizer with the Adam optimizer \cite{Kingma2014Adam} to update trainable parameters $\bm{\theta}$ of the Ansatz with the adpative learning rate. Mathematically, the updating rule of the Adam optimizer yieds
\begin{equation}
	\bm{\theta}^{(t+1)} =  \bm{\theta}^{(t)} - \eta^{(t+1)}\frac{a^{(t+1)}}{\sqrt{b^{(t+1)}}+\epsilon}, 
\end{equation} 
where $a^{(t+1)}=[\beta_1a^{(t)}+(1-\beta_1)\nabla \mathcal{L}(\bm{\theta}^{(t)})]/(1-\beta_1)$,  $b^{(t+1)}= [\beta_2b^{(t)}+(1-\beta_2)\nabla \mathcal{L}(\bm{\theta}^{(t)})^{\odot 2}] /(1-\beta_2)$, and $\eta^{(t+1)} = \eta^{(t)} \frac{\sqrt{(1-\beta_2)}}{(1-\beta_1)}$. The hyper-parameters settings are as follows. The maximum number of iterations are fixed to be $81$. \revise{The  optimization is criticized to be converged and the updating is stopped   when the energy difference between two iterations is lower than $10^{-6}$, i.e., $|\mathcal{L}(\bm{\theta}^{(t)}) - \mathcal{L}(\bm{\theta}^{(t-1)})|\leq 10^{-6}$.} The learning rate at the initial step is set as $\eta = 0.4$. The momentum parameters are set as $\beta_1=0.9$ and $\beta_2=0.99$. The tolerance parameter is set as $\epsilon=10^{-6}$. The random seed used to initialize trainable parameters is set as $0$. The number of layers of the hardware-efficient Ansatz, i.e., $U(\bm{\theta})=\prod_{l=1}^L U_l(\bm{\theta})$ exhibited in Fig.~\ref{fig:impt-VQE}, varies from $L=5$ to $L=20$. Each setting is repeated $5$ times to collect the statistical results. 
}

The achieved simulation results are depicted in Fig.~\ref{fig:VQE-adam}. Specifically, when the Adam optimizer is utilized to adaptively adjust the learning rate at each iteration, VQE with the modest Ansatz still attains a better performance than VQE with the overwhelming-expressivity Ansatz. Moreover, as shown in \revise{the inner plot of the left panel}, the performance of VQE continuously degrades    with respect to the increased number of layer number $L$. \revise{The right panel in Fig.~\ref{fig:VQE-adam} further exhibits the energy difference of VQE between the neighboring two iterations, i.e.,  $|\mathcal{L}(\bm{\theta}^{(t)}) - \mathcal{L}(\bm{\theta}^{(t-1)})|$, when the bond length is 0.3\AA. For the setting $L=5$, the optimization is converged when $t=79$ with $|\mathcal{L}(\bm{\theta}^{(79)}) - \mathcal{L}(\bm{\theta}^{(78)})|=9.4\times 10^{-7}$. For the setting $L=10, 15, 20$, the energy difference of VQE between the last two iterations yields $|\mathcal{L}(\bm{\theta}^{(79)}) - \mathcal{L}(\bm{\theta}^{(78)})|=1.9\times 10^{-4}, 8.9\times 10^{-4}, 3.9\times 10^{-2}$, respectively.} These observations indicate that our results still hold when the adaptive learning rate is considered. That is, too limited or too redundant expressivity of the employed Ans\"atze may prohibit the trainability of VQE.

\begin{figure*}
	\centering
\includegraphics[width=0.98\textwidth]{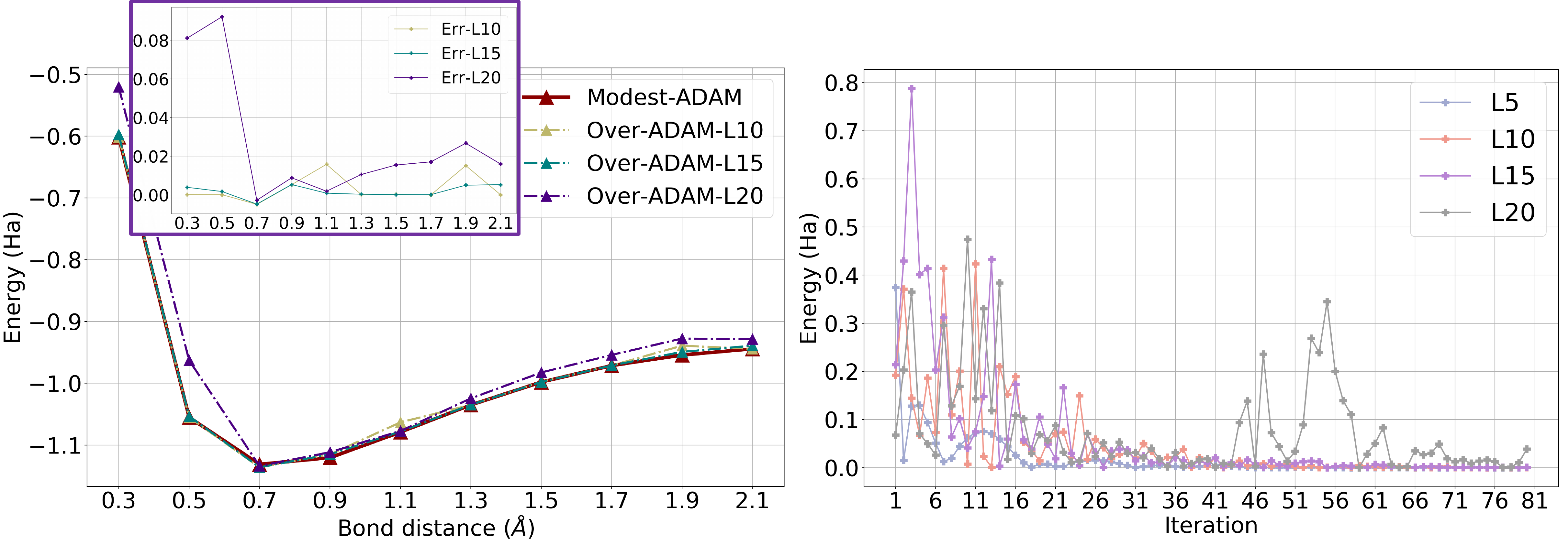}
\caption{\small{\textbf{Simulation results of VQE with the Adam optimizer.} The outer plot in the left panel illustrates the estimated energy of the hardware-efficient VQE with different number of layers $L$, i.e., $U(\bm{\theta})=\prod_{l=1}^L U_l(\bm{\theta})$. The label `Modest-Adam' refers to the estimated energy of VQEs with the modest Ansatz introduced in the main text and the Adam optimizer. The labels `over-Adam-L-$b$' refer to the estimated energy of VQEs when the employed Ans\"atze has the layer number $L=b$ and the Adam optimizer is employed.  The inner plot of the left panel shows the energy gap of VQEs in `Over-L-$b$' and `Modest' cases. \revise{`Ha' (Hartrees) and `\r{A}' (Angstroms) refer to the units for energy  and the bond lengths}. \revise{The right panel depicts the energy difference of VQE between the neighboring two iterations with the setting $L=5, 10, 15, 20$.} }}
\label{fig:VQE-adam}
\end{figure*}

\section{Implications of Theorem 1 and Theorem 2 from the practical perspective}\label{append:strm}

{In this section, we elucidate how our theoretical results, i.e., Theorems 1 and 2, contribute to improve the learning performance of  VQA-based models in practice. Concretely, the established theoretical results can be integrated with structural risk minimization \cite{mohri2018foundations} to enhance the learning performance of VQA-based models. Interestingly, the similar topic, i.e., the employment the expressivity of VQAs as guidance to enhance learning performance, has also been discussed in two very recent studies  \cite{caro2021encoding,gyurik2021structural}. 
}

{Before moving on to explain how our results contribute to the structural risk minimization of VQA-based learning models in practice, let us first recap the theory of structural risk minimization. As shown in Fig.~\ref{fig:SRM}, the learning performance of VQAs is determined by both the training error and the model's complexity term. Although the training error can be continuously suppressed by increasing the model's expressivity, the price to pay is increasing the complexity term, which may deteriorate its learning performance (e.g., the test accuracy for the unseen data). Overall,  increasing the expressivity of VQAs beyond a certain threshold is no longer contributing to the improvement of learning performance or could even lead to a degraded learning performance. With this regard, it is of great importance to develop an efficient method that seeks an Ansatz  with a `modest' level (i.e., the smile face in Fig.~\ref{fig:SRM}), which can well balance the tradeoff between the expressivity of the hypothesis space $\mathcal{H}$ and the performance of a learning model. In other words, the core of the VQA-based model design is controlling its expressivity at a `modest' level, envisioned by the statistical learning theory.  With this aim,  structural risk minimization is proposed as a concrete method that balances the trade-off between the expressivity and training error to attain the best possible learning performance. The mathematical expression of structural risk minimization can be formulated as
\begin{equation}\label{eqn:SRM}
	\min_{\bm{\theta}\in \bm{\Theta}, \Xi} \mathcal{\hat{R}}_S + g(S, \bm{\theta}, \Xi)~,
\end{equation}
where $\hat{\mathcal R}_S = \frac{1}{n} \sum_{i=1}^n \ell(h_{\mathcal{A}(S)}(\bm{x}^{(i)}), \bm{y}^{(i)})$ refers to the empirical risk (i.e., the training error term) defined in the main text and $g(\cdot)$ refers to the complexity term, which is controlled by the input problem $S$, the trainable parameters $\bm{\theta}$ and the architecture of learning models.  
}

\begin{figure*}
\centering
\includegraphics[width=0.5\textwidth]{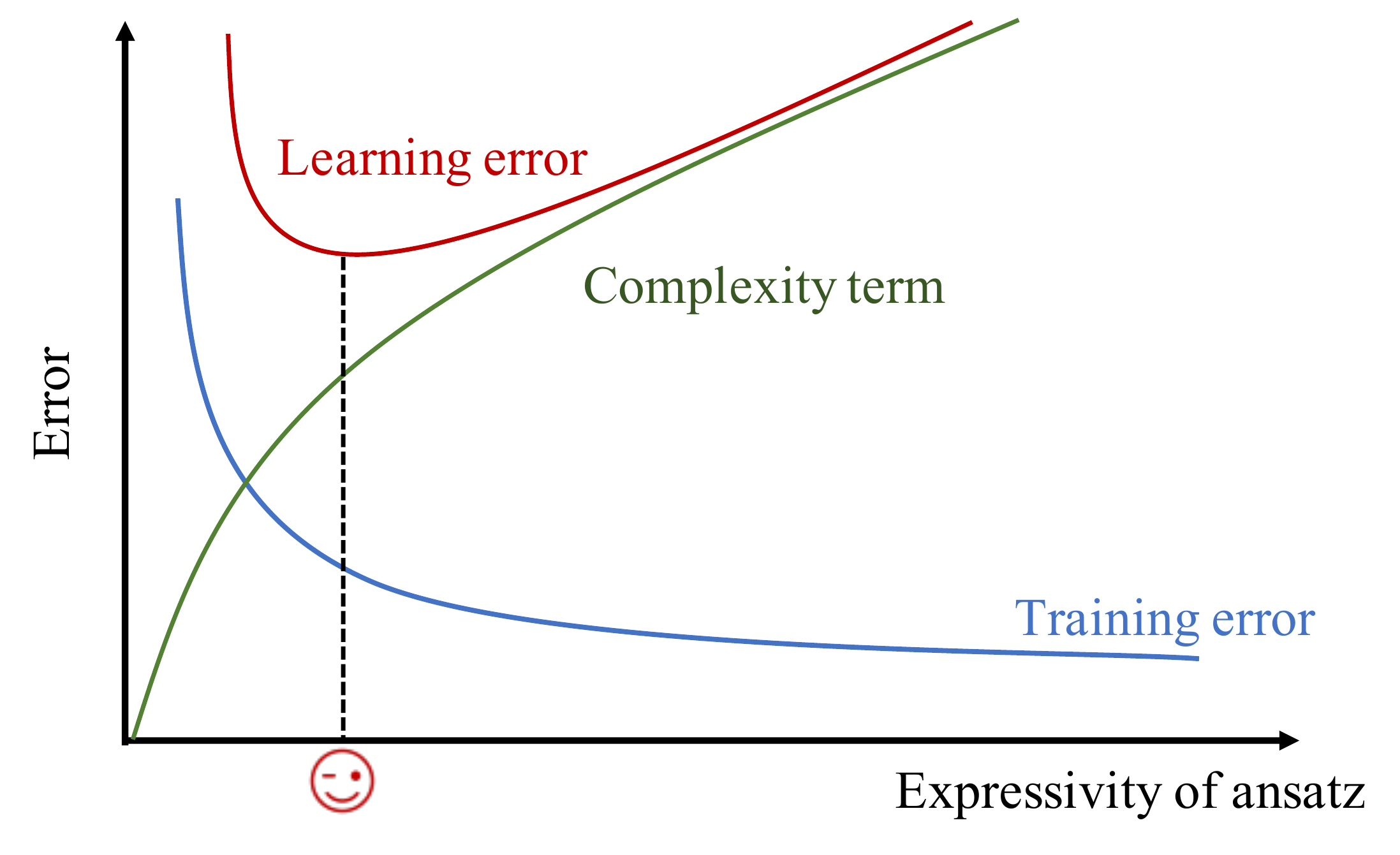}
\caption{\small{\textbf{ Illustration of the structural risk minimization adapted from \cite{mohri2018foundations}.}}}
\label{fig:SRM}
\end{figure*} 

{ 
We now detail three approaches that harness our theoretical results to engineer the complexity term $g(\cdot)$ and hence improve the learning performance of VQA-based learning models. 
\begin{enumerate}
	\item The first approach is setting $g(\cdot)$ as a regularizer with respect to the trainable parameters $\bm{\theta}$, e.g., $g(\cdot)=\lambda\|\bm{\theta}\|_2$ or $g(\cdot)=\lambda\|\bm{\theta}\|_0$, where $\lambda$ refers to a hyper-parameter. In this way, the optimal solution of the structural risk minimization in Eq.~(\ref{eqn:SRM}) implicitly controls the expressivity of the learning model by sparsifying the trainable parameters, ensured by our theoretical results that the expressivity of VQAs depends on the number of trainable parameters.
	\item The second approach is tailoring the spectral norm of the observable $O$ when it is trainable, supported by our theoretical results that the expressivity of VQAs depends on $\|O\|$. For instance, the complexity term can be set as $g(\cdot)=\lambda\|\bm{\theta}\|_2+\|O(\bm{\gamma})\|$, where $\bm{\gamma}$ denotes the parameters of the trainable observable.
	\item  The last approach is carefully designing the complexity term of $g(\cdot)$ that is determined by both the trainable parameters $\bm{\theta}$ and the quantum circuit architecture $\Xi$. The key motivation of updating the architecture of the quantum circuits is warranted by our theoretical results in Theorem 1, since the expressivity of VQAs depends on the adopted types of quantum gates (denoted by the term $k$). Following this routine, several studies have proposed different variable structure methods to build Ans\"atze \cite{bilkis2021semi,du2020quantum,grimsley2019adaptive,kuo2021quantum,ostaszewski2019quantum,tang2021qubit,zhang2021neural}. Conceptually, these proposals developed a set of heuristic rules that during the optimization,  the quantum gates in quantum circuits can either be added or deleted to find the optimal solution of the structural risk minimization in Eq.~(\ref{eqn:SRM}). 
\end{enumerate}
}

{
We remark that the first two approaches presented above have also been discussed in Refs.~\cite{caro2021encoding,gyurik2021structural}, where the analyzed expressivity of VQAs can be  applied to structural risk minimization to improve the performance of VQAs. However, the derived bounds in their results  are relatively loose and therefore fail to unveil the expressivity of VQAs is controlled by the types of quantum gates. With this regard, the achieved results in our study provide more concrete guidance (i.e., especially for the third approach) to implement structural risk minimization of VQA-based models in practice.
}

\section{The tightness of the derived upper bound in Theorem 1}

{The derivation of the upper and lower bounds with respect to the model's expressivity are at the heart of both classical and quantum machine learning. In the regime of deep learning, numerous studies \cite{pouyanfar2018survey,sun2019optimization} devote to     analyze the expressivity of deep neural networks (DNNs). Notably, most results focus on achieving a tighter upper bound for the expressivity of DNNs, while  only few studies attempt to analyze the corresponding lower bound. This phenomenon is caused by the fact that the derivation of the lower bound is  more difficult than the upper bound case. For instance, the seminal paper \cite{bartlett2017spectrally} only proved that the expressivity of DNNs is only lower bounded by the spectral norm of the input data and is independent with other parameters. The development of quantum learning theory also encounters the similar scenario. To our best knowledge, the lower bound of the expressivity for the VQA-based model has not been explored. With the aim of narrowing this knowledge gap, in the following, we conduct numerical simulations to empirically understand the tightness of the derived bound in Theorem 1. }

\medskip
{ 
Here we empirically explore whether the derived upper bound in Theorem 1 is tight with respect to the number of parameters $N_{gt}$. Note that directly calculating the covering number $\mathcal{N}((\HQNN)_{|S},  \epsilon, \|\cdot\|_2)$ is very challenged in general. To this end, we devise an alternative method to examine the tightness of our bounds. Recall that the main conclusion achieved in Theorem 2 is 
\begin{eqnarray}
	&& \mathcal{R}(\mathcal{A}(S))-\hat{\mathcal{R}}_S(\mathcal{A}(S)) 
	\nonumber\\
\leq && 2L_1 \inf_{\alpha>0}\left(4\alpha + \frac{12}{\sqrt{n}}\int_{\alpha}^{1}\sqrt{\ln \mathcal{N}((\HQNN)_{|S},  \epsilon, \|\cdot\|_2)} d \epsilon \right) +  3C_1\sqrt{\frac{\ln(2/\delta)}{2n}} \nonumber\\
\leq && \frac{8L_1}{\sqrt{n}} +  \frac{24L_1}{\sqrt{n}} d^{k}\sqrt{N_{gt}} \left(\ln\left(7 \sqrt{n} N_{gt} \|O\| \right)+1\right) +   3C_1\sqrt{\frac{\ln(1/\delta)}{2n}}. \nonumber
\end{eqnarray}
This result connects the generalization error $\mathcal{R}(\mathcal{A}(S))-\hat{\mathcal{R}}_S(\mathcal{A}(S))$ with the expressivity of QNN, i.e., $\mathcal{N}((\HQNN)_{|S},  \epsilon, \|\cdot\|_2)$. With this regard,  the  quantification of the tightness of the derived upper bound amounts to examining whether the generalization error of QNNs is linearly scaled with $\sqrt{N_{gt}}$. Mathematically, we aim to observe the relation $\mathcal{R}(\mathcal{A}(S))-\hat{\mathcal{R}}_S(\mathcal{A}(S))\sim O(\sqrt{N_{gt}})$  to validate the tightness of our bounds in terms of $N_{gt}$.
}
 
\begin{figure*}
	\centering
\includegraphics[width=0.65\textwidth]{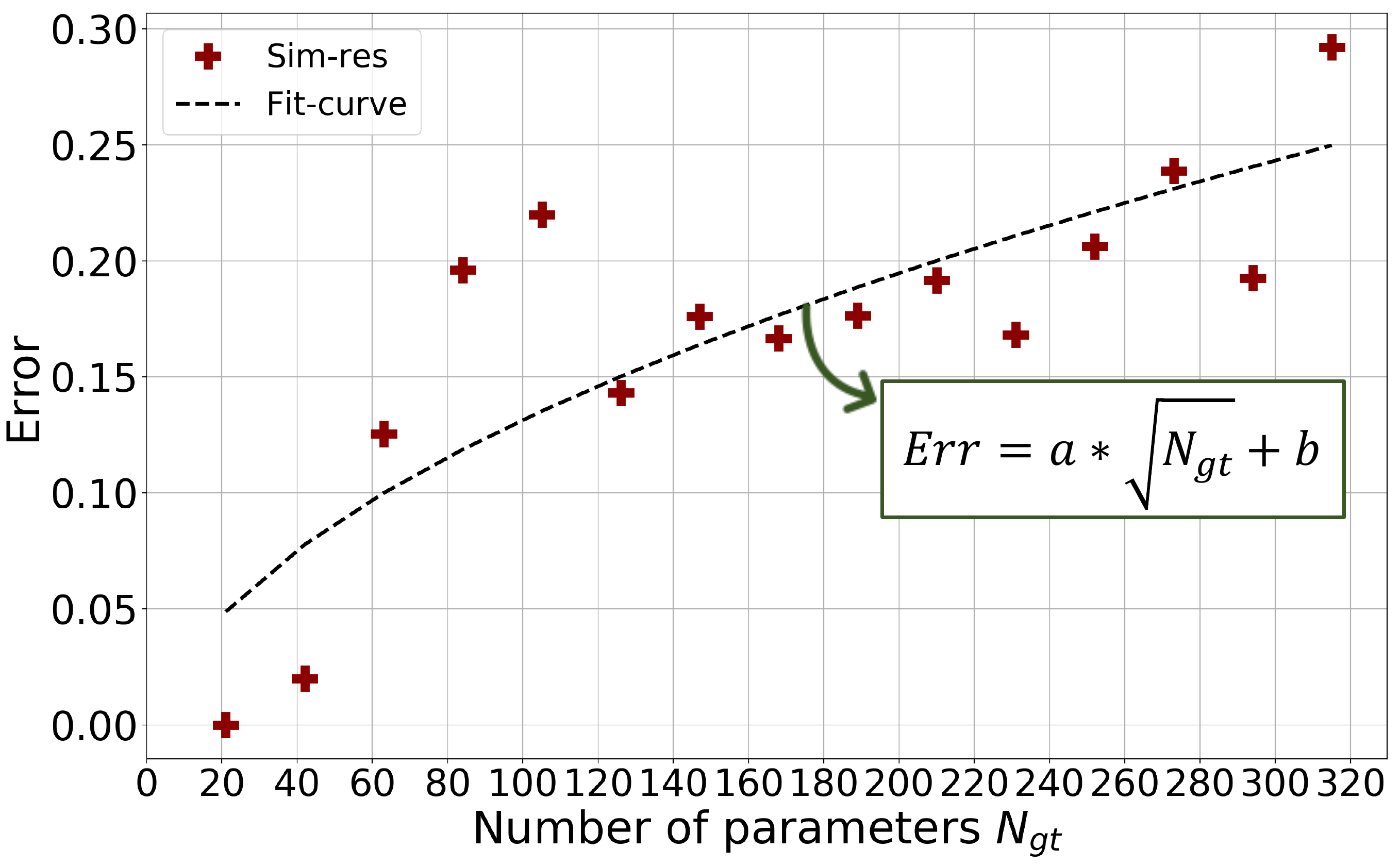}
\caption{\small{\textbf{The relationship between the learning error and the number of trainable parameters $N_{gt}$}.  The label `Sim-res’ refers to the averaged simulation results. The label `Fit-curve’ refers to the fitting curve, i.e., the mathematical form is $a \sqrt{N_{gt}} + b$ with $a, b \in \mathbb{R}$, with respect to the collected simulation results with the varied settings. }}
\label{fig:curve-fit}
\end{figure*}

We now employ the numerical simulations related to QNNs as introduced in the main text and Appendix \ref{append:QNN} to complete the above examination. Specifically, all hyper-parameters settings are identical to those introduced in the main text. The only modification is varying the layer number from $L=1$ to $L=15$. The simulations results are depicted in Fig.~\ref{fig:curve-fit}. Through fitting the simulation results, we observe that the learning error linearly scales with $\sqrt{N_{gt}}$, which accords with our theoretical results. \revise{We further use the coefficient of determination \cite{nagelkerke1991note}, denoted as $R^2\in [0,1]$, to measure the error of fitting. Intuitively, a higher $R^2$ reflects a good fitting curve, where $R^2=1$ indicates that the model explains all the variability of the response data around its mean. The coefficient of determination for the fitting curve shown in Fig.~\ref{fig:curve-fit} yields  $R^2=0.6687$.} These observations provide concrete evidence that the derived bound is tight with respect to the number of parameters.


\begin{thebibliography}{10}

\bibitem{biamonte2017quantum}
Jacob Biamonte, Peter Wittek, Nicola Pancotti, Patrick Rebentrost, Nathan
  Wiebe, and Seth Lloyd.
\newblock Quantum machine learning.
\newblock {\em Nature}, 549(7671):195, 2017.

\bibitem{dunjko2018machine}
Vedran Dunjko and Hans~J Briegel.
\newblock Machine learning \& artificial intelligence in the quantum domain: a
  review of recent progress.
\newblock {\em Reports on Progress in Physics}, 81(7):074001, 2018.

\bibitem{harrow2017quantum}
Aram~W Harrow and Ashley Montanaro.
\newblock Quantum computational supremacy.
\newblock {\em Nature}, 549(7671):203, 2017.

\bibitem{benedetti2019parameterized}
Marcello Benedetti, Erika Lloyd, Stefan Sack, and Mattia Fiorentini.
\newblock Parameterized quantum circuits as machine learning models.
\newblock {\em Quantum Science and Technology}, 4(4):043001, 2019.

\bibitem{bharti2021noisy}
Kishor Bharti, Alba Cervera-Lierta, Thi~Ha Kyaw, Tobias Haug, Sumner
  Alperin-Lea, Abhinav Anand, Matthias Degroote, Hermanni Heimonen, Jakob~S
  Kottmann, Tim Menke, et~al.
\newblock Noisy intermediate-scale quantum (nisq) algorithms.
\newblock {\em arXiv preprint arXiv:2101.08448}, 2021.

\bibitem{cerezo2020variational2}
M~Cerezo, Andrew Arrasmith, Ryan Babbush, Simon~C Benjamin, Suguru Endo,
  Keisuke Fujii, Jarrod~R McClean, Kosuke Mitarai, Xiao Yuan, Lukasz Cincio,
  et~al.
\newblock Variational quantum algorithms.
\newblock {\em arXiv preprint arXiv:2012.09265}, 2020.

\bibitem{du2018expressive}
Yuxuan Du, Min-Hsiu Hsieh, Tongliang Liu, and Dacheng Tao.
\newblock Expressive power of parametrized quantum circuits.
\newblock {\em Phys. Rev. Research}, 2:033125, Jul 2020.

\bibitem{endo2021hybrid}
Suguru Endo, Zhenyu Cai, Simon~C Benjamin, and Xiao Yuan.
\newblock Hybrid quantum-classical algorithms and quantum error mitigation.
\newblock {\em Journal of the Physical Society of Japan}, 90(3):032001, 2021.

\bibitem{preskill2018quantum}
John Preskill.
\newblock Quantum computing in the nisq era and beyond.
\newblock {\em Quantum}, 2:79, 2018.

\bibitem{Du_2021_grover}
Yuxuan Du, Min-Hsiu Hsieh, Tongliang Liu, and Dacheng Tao.
\newblock A grover-search based quantum learning scheme for classification.
\newblock {\em New Journal of Physics}, 23(2):023020, feb 2021.

\bibitem{du2020learnability}
Yuxuan Du, Min-Hsiu Hsieh, Tongliang Liu, Shan You, and Dacheng Tao.
\newblock On the learnability of quantum neural networks.
\newblock {\em arXiv preprint arXiv:2007.12369}, 2020.

\bibitem{huang2021information}
Hsin-Yuan Huang, Richard Kueng, and John Preskill.
\newblock Information-theoretic bounds on quantum advantage in machine
  learning.
\newblock {\em Physical Review Letters}, 126(19):190505, 2021.

\bibitem{shen2019information}
Huitao Shen, Pengfei Zhang, Yi-Zhuang You, and Hui Zhai.
\newblock Information scrambling in quantum neural networks.
\newblock {\em Physical Review Letters}, 124(20):200504, 2020.

\bibitem{wu2021expressivity}
Yadong Wu, Juan Yao, Pengfei Zhang, and Hui Zhai.
\newblock Expressivity of quantum neural networks.
\newblock {\em arXiv preprint arXiv:2101.04273}, 2021.

\bibitem{beer2020training}
Kerstin Beer, Dmytro Bondarenko, Terry Farrelly, Tobias~J Osborne, Robert
  Salzmann, Daniel Scheiermann, and Ramona Wolf.
\newblock Training deep quantum neural networks.
\newblock {\em Nature Communications}, 11(1):1--6, 2020.

\bibitem{mitarai2018quantum}
Kosuke Mitarai, Makoto Negoro, Masahiro Kitagawa, and Keisuke Fujii.
\newblock Quantum circuit learning.
\newblock {\em Physical Review A}, 98(3):032309, 2018.

\bibitem{havlivcek2019supervised}
Vojt{\v{e}}ch Havl{\'\i}{\v{c}}ek, Antonio~D C{\'o}rcoles, Kristan Temme,
  Aram~W Harrow, Abhinav Kandala, Jerry~M Chow, and Jay~M Gambetta.
\newblock Supervised learning with quantum-enhanced feature spaces.
\newblock {\em Nature}, 567(7747):209, 2019.

\bibitem{wang2020quantum}
Kunkun Wang, Lei Xiao, Wei Yi, Shi-Ju Ran, and Peng Xue.
\newblock Quantum image classifier with single photons.
\newblock {\em arXiv preprint arXiv:2003.08551}, 2020.

\bibitem{peruzzo2014variational}
Alberto Peruzzo, Jarrod McClean, Peter Shadbolt, Man-Hong Yung, Xiao-Qi Zhou,
  Peter~J Love, Al{\'a}n Aspuru-Guzik, and Jeremy~L O’brien.
\newblock A variational eigenvalue solver on a photonic quantum processor.
\newblock {\em Nature communications}, 5:4213, 2014.

\bibitem{kandala2017hardware}
Abhinav Kandala, Antonio Mezzacapo, Kristan Temme, Maika Takita, Markus Brink,
  Jerry~M Chow, and Jay~M Gambetta.
\newblock Hardware-efficient variational quantum eigensolver for small
  molecules and quantum magnets.
\newblock {\em Nature}, 549(7671):242--246, 2017.

\bibitem{google2020hartree}
Google~AI Quantum et~al.
\newblock Hartree-fock on a superconducting qubit quantum computer.
\newblock {\em Science}, 369(6507):1084--1089, 2020.

\bibitem{huang2020experimental}
He-Liang Huang, Yuxuan Du, Ming Gong, Youwei Zhao, Yulin Wu, Chaoyue Wang,
  Shaowei Li, Futian Liang, Jin Lin, Yu~Xu, et~al.
\newblock Experimental quantum generative adversarial networks for image
  generation.
\newblock {\em arXiv preprint arXiv:2010.06201}, 2020.

\bibitem{zhu2019training}
Daiwei Zhu, Norbert~M Linke, Marcello Benedetti, Kevin~A Landsman, Nhung~H
  Nguyen, C~Huerta Alderete, Alejandro Perdomo-Ortiz, Nathan Korda, A~Garfoot,
  Charles Brecque, et~al.
\newblock Training of quantum circuits on a hybrid quantum computer.
\newblock {\em Science advances}, 5(10):eaaw9918, 2019.

\bibitem{cerezo2020cost}
M~Cerezo, Akira Sone, Tyler Volkoff, Lukasz Cincio, and Patrick~J Coles.
\newblock Cost-function-dependent barren plateaus in shallow quantum neural
  networks.
\newblock {\em arXiv preprint arXiv:2001.00550}, 2020.

\bibitem{mcclean2018barren}
Jarrod~R McClean, Sergio Boixo, Vadim~N Smelyanskiy, Ryan Babbush, and Hartmut
  Neven.
\newblock Barren plateaus in quantum neural network training landscapes.
\newblock {\em Nature communications}, 9(1):1--6, 2018.

\bibitem{wang2020noise}
Samson Wang, Enrico Fontana, Marco Cerezo, Kunal Sharma, Akira Sone, Lukasz
  Cincio, and Patrick~J Coles.
\newblock Noise-induced barren plateaus in variational quantum algorithms.
\newblock {\em arXiv preprint arXiv:2007.14384}, 2020.

\bibitem{zhang2020toward}
Kaining Zhang, Min-Hsiu Hsieh, Liu Liu, and Dacheng Tao.
\newblock Toward trainability of quantum neural networks.
\newblock {\em arXiv preprint arXiv:2011.06258}, 2020.

\bibitem{sweke2020stochastic}
Ryan Sweke, Frederik Wilde, Johannes Meyer, Maria Schuld, Paul~K F{\"a}hrmann,
  Barth{\'e}l{\'e}my Meynard-Piganeau, and Jens Eisert.
\newblock Stochastic gradient descent for hybrid quantum-classical
  optimization.
\newblock {\em Quantum}, 4:314, 2020.

\bibitem{abbas2020power}
Amira Abbas, David Sutter, Christa Zoufal, Aur{\'e}lien Lucchi, Alessio
  Figalli, and Stefan Woerner.
\newblock The power of quantum neural networks.
\newblock {\em arXiv preprint arXiv:2011.00027}, 2020.

\bibitem{banchi2021generalization}
Leonardo Banchi, Jason Pereira, and Stefano Pirandola.
\newblock Generalization in quantum machine learning: a quantum information
  perspective.
\newblock {\em arXiv preprint arXiv:2102.08991}, 2021.

\bibitem{bu2021statistical}
Kaifeng Bu, Dax~Enshan Koh, Lu~Li, Qingxian Luo, and Yaobo Zhang.
\newblock On the statistical complexity of quantum circuits.
\newblock {\em arXiv preprint arXiv:2101.06154}, 2021.

\bibitem{caro2020pseudo}
Matthias~C Caro and Ishaun Datta.
\newblock Pseudo-dimension of quantum circuits.
\newblock {\em Quantum Machine Intelligence}, 2(2):1--14, 2020.

\bibitem{funcke2021dimensional}
Lena Funcke, Tobias Hartung, Karl Jansen, Stefan K{\"u}hn, and Paolo Stornati.
\newblock Dimensional expressivity analysis of parametric quantum circuits.
\newblock {\em Quantum}, 5:422, 2021.

\bibitem{poland2020no}
Kyle Poland, Kerstin Beer, and Tobias~J Osborne.
\newblock No free lunch for quantum machine learning.
\newblock {\em arXiv preprint arXiv:2003.14103}, 2020.

\bibitem{holmes2021connecting}
Zo{\"e} Holmes, Kunal Sharma, M~Cerezo, and Patrick~J Coles.
\newblock Connecting ansatz expressibility to gradient magnitudes and barren
  plateaus.
\newblock {\em arXiv preprint arXiv:2101.02138}, 2021.

\bibitem{sim2019expressibility}
Sukin Sim, Peter~D Johnson, and Al{\'a}n Aspuru-Guzik.
\newblock Expressibility and entangling capability of parameterized quantum
  circuits for hybrid quantum-classical algorithms.
\newblock {\em Advanced Quantum Technologies}, 2(12):1900070, 2019.

\bibitem{nakaji2021expressibility}
Kouhei Nakaji and Naoki Yamamoto.
\newblock Expressibility of the alternating layered ansatz for quantum
  computation.
\newblock {\em Quantum}, 5:434, 2021.

\bibitem{harrow2009random}
Aram~W Harrow and Richard~A Low.
\newblock Random quantum circuits are approximate 2-designs.
\newblock {\em Communications in Mathematical Physics}, 291(1):257--302, 2009.

\bibitem{huggins2019towards}
William Huggins, Piyush Patil, Bradley Mitchell, K~Birgitta Whaley, and E~Miles
  Stoudenmire.
\newblock Towards quantum machine learning with tensor networks.
\newblock {\em Quantum Science and technology}, 4(2):024001, 2019.

\bibitem{vapnik2013nature}
Vladimir Vapnik.
\newblock {\em The nature of statistical learning theory}.
\newblock Springer science \& business media, 2013.

\bibitem{schuld2019evaluating}
Maria Schuld, Ville Bergholm, Christian Gogolin, Josh Izaac, and Nathan
  Killoran.
\newblock Evaluating analytic gradients on quantum hardware.
\newblock {\em Physical Review A}, 99(3):032331, 2019.

\bibitem{stokes2020quantum}
James Stokes, Josh Izaac, Nathan Killoran, and Giuseppe Carleo.
\newblock Quantum natural gradient.
\newblock {\em Quantum}, 4:269, 2020.

\bibitem{poulin2011quantum}
David Poulin, Angie Qarry, Rolando Somma, and Frank Verstraete.
\newblock Quantum simulation of time-dependent hamiltonians and the convenient
  illusion of hilbert space.
\newblock {\em Physical review letters}, 106(17):170501, 2011.

\bibitem{szarek2010often}
Stanis{\l}aw~J Szarek, Elisabeth Werner, and Karol {\.Z}yczkowski.
\newblock How often is a random quantum state k-entangled?
\newblock {\em Journal of Physics A: Mathematical and Theoretical},
  44(4):045303, 2010.

\bibitem{mohri2018foundations}
Mehryar Mohri, Afshin Rostamizadeh, and Ameet Talwalkar.
\newblock {\em Foundations of machine learning}.
\newblock MIT press, 2018.

\bibitem{Note1}
{We remark that to obtain a general result that covers any type of noise, a
  relatively loose relaxation technique is used to infer $\protect \mathcal
  {N}(\protect \mathaccentV {tilde}07E{\protect \mathcal {H}}, \epsilon ,
  |\cdot |)$. This leads to a different scaling behavior in term of $\delimiter
  "026B30D O\delimiter "026B30D $ comparing with the ideal and depolarizing
  cases.}

\bibitem{cai2020mitigating}
Zhenyu Cai, Xiaosi Xu, and Simon~C Benjamin.
\newblock Mitigating coherent noise using pauli conjugation.
\newblock {\em npj Quantum Information}, 6(1):1--9, 2020.

\bibitem{du2020quantum}
Yuxuan Du, Tao Huang, Shan You, Min-Hsiu Hsieh, and Dacheng Tao.
\newblock Quantum circuit architecture search: error mitigation and
  trainability enhancement for variational quantum solvers.
\newblock {\em arXiv preprint arXiv:2010.10217}, 2020.

\bibitem{mcardle2019error}
Sam McArdle, Xiao Yuan, and Simon Benjamin.
\newblock Error-mitigated digital quantum simulation.
\newblock {\em Physical review letters}, 122(18):180501, 2019.

\bibitem{mcclean2020decoding}
Jarrod~R McClean, Zhang Jiang, Nicholas~C Rubin, Ryan Babbush, and Hartmut
  Neven.
\newblock Decoding quantum errors with subspace expansions.
\newblock {\em Nature communications}, 11(1):1--9, 2020.

\bibitem{strikis2020learning}
Armands Strikis, Dayue Qin, Yanzhu Chen, Simon~C Benjamin, and Ying Li.
\newblock Learning-based quantum error mitigation.
\newblock {\em arXiv preprint arXiv:2005.07601}, 2020.

\bibitem{sun2021mitigating}
Jinzhao Sun, Xiao Yuan, Takahiro Tsunoda, Vlatko Vedral, Simon~C Benjamin, and
  Suguru Endo.
\newblock Mitigating realistic noise in practical noisy intermediate-scale
  quantum devices.
\newblock {\em Physical Review Applied}, 15(3):034026, 2021.

\bibitem{kawaguchi2017generalization}
Kenji Kawaguchi, Leslie~Pack Kaelbling, and Yoshua Bengio.
\newblock Generalization in deep learning.
\newblock {\em arXiv preprint arXiv:1710.05468}, 2017.

\bibitem{larose2020robust}
Ryan LaRose and Brian Coyle.
\newblock Robust data encodings for quantum classifiers.
\newblock {\em Physical Review A}, 102(3):032420, 2020.

\bibitem{mohri2012foundations}
Mehryar Mohri, Afshin Rostamizadeh, and Ameet Talwalkar.
\newblock Foundations of machine learning, 2012.

\bibitem{marrero2020entanglement}
Carlos~Ortiz Marrero, M{\'a}ria Kieferov{\'a}, and Nathan Wiebe.
\newblock Entanglement induced barren plateaus.
\newblock {\em arXiv preprint arXiv:2010.15968}, 2020.

\bibitem{sweke2019stochastic}
Ryan Sweke, Frederik Wilde, Johannes Meyer, Maria Schuld, Paul~K F{\"a}hrmann,
  Barth{\'e}l{\'e}my Meynard-Piganeau, and Jens Eisert.
\newblock Stochastic gradient descent for hybrid quantum-classical
  optimization.
\newblock {\em Quantum}, 4:314, 2020.

\bibitem{blumer1987occam}
Anselm Blumer, Andrzej Ehrenfeucht, David Haussler, and Manfred~K Warmuth.
\newblock Occam's razor.
\newblock {\em Information processing letters}, 24(6):377--380, 1987.

\bibitem{larocca2021theory}
Martin Larocca, Nathan Ju, Diego García-Martín, Patrick~J. Coles, and
  M.~Cerezo.
\newblock Theory of overparametrization in quantum neural networks, 2021.

\bibitem{canatar2021spectral}
Abdulkadir Canatar, Blake Bordelon, and Cengiz Pehlevan.
\newblock Spectral bias and task-model alignment explain generalization in
  kernel regression and infinitely wide neural networks.
\newblock {\em Nature communications}, 12(1):1--12, 2021.

\bibitem{mcardle2020quantum}
Sam McArdle, Suguru Endo, Alan Aspuru-Guzik, Simon~C Benjamin, and Xiao Yuan.
\newblock Quantum computational chemistry.
\newblock {\em Reviews of Modern Physics}, 92(1):015003, 2020.

\bibitem{cao2019quantum}
Yudong Cao, Jonathan Romero, Jonathan~P Olson, Matthias Degroote, Peter~D
  Johnson, M{\'a}ria Kieferov{\'a}, Ian~D Kivlichan, Tim Menke, Borja
  Peropadre, Nicolas~PD Sawaya, et~al.
\newblock Quantum chemistry in the age of quantum computing.
\newblock {\em Chemical reviews}, 119(19):10856--10915, 2019.

\bibitem{Note2}
The separated expressivity of different ansatze is completed by controlling the
  involved number of quantum gates, supported by Theorem~\ref
  {thm:main-cov-num}. See Appendix G for construction details.

\bibitem{Barthel2018fundamental}
Thomas Barthel and Jianfeng Lu.
\newblock Fundamental limitations for measurements in quantum many-body
  systems.
\newblock {\em Phys. Rev. Lett.}, 121:080406, Aug 2018.

\bibitem{nielsen2010quantum}
Michael~A Nielsen and Isaac~L Chuang.
\newblock {\em Quantum computation and quantum information}.
\newblock Cambridge University Press, 2010.

\bibitem{Kakade2008OnTC}
Sham~M. Kakade, K.~Sridharan, and Ambuj Tewari.
\newblock On the complexity of linear prediction: Risk bounds, margin bounds,
  and regularization.
\newblock In {\em NIPS}, 2008.

\bibitem{dudley1967sizes}
Richard~M Dudley.
\newblock The sizes of compact subsets of hilbert space and continuity of
  gaussian processes.
\newblock {\em Journal of Functional Analysis}, 1(3):290--330, 1967.

\bibitem{romero2018strategies}
Jonathan Romero, Ryan Babbush, Jarrod~R McClean, Cornelius Hempel, Peter~J
  Love, and Al{\'a}n Aspuru-Guzik.
\newblock Strategies for quantum computing molecular energies using the unitary
  coupled cluster ansatz.
\newblock {\em Quantum Science and Technology}, 4(1):014008, 2018.

\bibitem{bravyi2002fermionic}
Sergey~B Bravyi and Alexei~Yu Kitaev.
\newblock Fermionic quantum computation.
\newblock {\em Annals of Physics}, 298(1):210--226, 2002.

\bibitem{mcclean2020openfermion}
Jarrod~R McClean, Nicholas~C Rubin, Kevin~J Sung, Ian~D Kivlichan, Xavier
  Bonet-Monroig, Yudong Cao, Chengyu Dai, E~Schuyler Fried, Craig Gidney,
  Brendan Gimby, et~al.
\newblock Openfermion: the electronic structure package for quantum computers.
\newblock {\em Quantum Science and Technology}, 5(3):034014, 2020.

\bibitem{Kingma2014Adam}
Diederik~P. Kingma and Jimmy Ba.
\newblock Adam: {A} method for stochastic optimization.
\newblock In {\em 3rd International Conference on Learning Representations,
  {ICLR} 2015, San Diego, CA, USA, May 7-9, 2015, Conference Track
  Proceedings}, 2015.

\bibitem{caro2021encoding}
Matthias~C Caro, Elies Gil-Fuster, Johannes~Jakob Meyer, Jens Eisert, and Ryan
  Sweke.
\newblock Encoding-dependent generalization bounds for parametrized quantum
  circuits.
\newblock {\em arXiv preprint arXiv:2106.03880}, 2021.

\bibitem{gyurik2021structural}
Casper Gyurik, Dyon van Vreumingen, and Vedran Dunjko.
\newblock Structural risk minimization for quantum linear classifiers.
\newblock {\em arXiv preprint arXiv:2105.05566}, 2021.

\bibitem{bilkis2021semi}
M~Bilkis, M~Cerezo, Guillaume Verdon, Patrick~J Coles, and Lukasz Cincio.
\newblock A semi-agnostic ansatz with variable structure for quantum machine
  learning.
\newblock {\em arXiv preprint arXiv:2103.06712}, 2021.

\bibitem{grimsley2019adaptive}
Harper~R Grimsley, Sophia~E Economou, Edwin Barnes, and Nicholas~J Mayhall.
\newblock An adaptive variational algorithm for exact molecular simulations on
  a quantum computer.
\newblock {\em Nature communications}, 10(1):1--9, 2019.

\bibitem{kuo2021quantum}
En-Jui Kuo, Yao-Lung~L Fang, and Samuel Yen-Chi Chen.
\newblock Quantum architecture search via deep reinforcement learning.
\newblock {\em arXiv preprint arXiv:2104.07715}, 2021.

\bibitem{ostaszewski2019quantum}
Mateusz Ostaszewski, Edward Grant, and Marcello Benedetti.
\newblock Structure optimization for parameterized quantum circuits.
\newblock {\em Quantum}, 5:391, 2021.

\bibitem{tang2021qubit}
Ho~Lun Tang, VO~Shkolnikov, George~S Barron, Harper~R Grimsley, Nicholas~J
  Mayhall, Edwin Barnes, and Sophia~E Economou.
\newblock qubit-adapt-vqe: An adaptive algorithm for constructing
  hardware-efficient ans{\"a}tze on a quantum processor.
\newblock {\em PRX Quantum}, 2(2):020310, 2021.

\bibitem{zhang2021neural}
Shi-Xin Zhang, Chang-Yu Hsieh, Shengyu Zhang, and Hong Yao.
\newblock Neural predictor based quantum architecture search.
\newblock {\em arXiv preprint arXiv:2103.06524}, 2021.

\bibitem{pouyanfar2018survey}
Samira Pouyanfar, Saad Sadiq, Yilin Yan, Haiman Tian, Yudong Tao, Maria~Presa
  Reyes, Mei-Ling Shyu, Shu-Ching Chen, and Sundaraja~S Iyengar.
\newblock A survey on deep learning: Algorithms, techniques, and applications.
\newblock {\em ACM Computing Surveys (CSUR)}, 51(5):1--36, 2018.

\bibitem{sun2019optimization}
Ruoyu Sun.
\newblock Optimization for deep learning: theory and algorithms.
\newblock {\em arXiv preprint arXiv:1912.08957}, 2019.

\bibitem{bartlett2017spectrally}
Peter Bartlett, Dylan Foster, and Matus Telgarsky.
\newblock Spectrally-normalized margin bounds for neural networks.
\newblock {\em Advances in Neural Information Processing Systems},
  30:6241--6250, 2017.

\bibitem{nagelkerke1991note}
Nico~JD Nagelkerke et~al.
\newblock A note on a general definition of the coefficient of determination.
\newblock {\em Biometrika}, 78(3):691--692, 1991.

\end{thebibliography}
\end{document}